\DeclareSIUnit[number-unit-product = {}]\Q{~}
\DeclareSIPrefix\kilo{K}{3}
\DeclareSIPrefix\mega{M}{6}
\DeclareSIPrefix\giga{}{9}
\DeclareSIPrefix\terra{}{12}
\newcommand{\numQ}[1]{\SI[zero-decimal-to-integer,scientific-notation = engineering,round-precision=1,exponent-to-prefix = true]{#1}{\Q}\!\!\!}
\newcommand{\numQb}[1]{{\color{green!50!black}\SI[zero-decimal-to-integer,scientific-notation = engineering,round-precision=1,exponent-to-prefix = true]{#1}{\Q}\!\!\!}}
\newcommand{\numQQ}[1]{\SI[zero-decimal-to-integer,scientific-notation = engineering,round-precision=1,exponent-to-prefix = true]{#1}{\Q}\!\!\!}
\newcommand{\numQQb}[1]{{\color{green!50!black}\SI[zero-decimal-to-integer,scientific-notation = engineering,round-precision=1,exponent-to-prefix = true]{#1}{\Q}\!\!\!}}
\newcommand{\rr}{{\mathbb{R}}}
\renewcommand{\t}{\boldsymbol{\tau}}
\newcommand{\n}{\boldsymbol{n}}
\newcommand{\nG}{\boldsymbol{\mu}}
\newcommand{\tr}{\operatorname{tr}}
\newcommand{\sgn}{\operatorname{sgn}}
\newcommand{\eps}{\varepsilon}
\newcommand{\dx}{\mathop{\mathrm{d}x}}
\newcommand{\ds}{\mathop{\mathrm{d}s}}
\newcommand{\dshat}{\mathop{\mathrm{d}\hat s}}
\newcommand{\divergence}{\operatorname{div}}
\newcommand{\curl}{\operatorname{curl}}
\newcommand{\mesh}{\mathcal{T}_h}
\newcommand{\facets}{\mathcal{F}_h}
\begin{document}
\title{Divergence-free tangential finite element methods for incompressible flows on surfaces%\protect\thanks{Surface Stokes}
}

\author[1]{Philip L. Lederer}

\author[2]{Christoph Lehrenfeld}

\author[1]{Joachim Sch\"oberl}

%\authormark{AUTHOR ONE \textsc{et al}}

\address[1]{\orgdiv{Institute for Analysis and Scientific Computing}, \orgname{TU Wien}, \orgaddress{\country{Austria}},
  Email:  joachim.schoeberl@tuwien.ac.at (\url{https://orcid.org/0000-0002-1250-5087
}), 
  philip.lederer@tuwien.ac.at (\url{https://orcid.org/0000-0003-1875-7442})
}

\address[2]{\orgdiv{Institute for Numerical and Applied Mathematics}, \orgname{University of G\"ottingen}, \orgaddress{\country{Germany}}, 
	Email:
		lehrenfeld@math.uni-goettingen.de (\url{https://orcid.org/0000-0003-0170-8468})
}

\corres{
	Christoph Lehrenfeld, Institute for Numerical and Applied Mathematics, Georg-August-Universit{\"a}t G{\"o}ttingen,
	37083 G{\"o}ttingen, Germany.
        Email: lehrenfeld@math.uni-goettingen.de
	}
% \corres{*Christoph Lehrenfeld, This is sample corresponding address. \email{authorone@gmail.com}}

% \presentaddress{This is sample for present address text this is sample for present address text}

\abstract[Summary]{
  In this work we consider the numerical solution of incompressible flows on two-dimensional manifolds. Whereas the compatibility demands of the velocity and the pressure spaces are known from the flat case one further has to deal with the approximation of a velocity field that lies only in the tangential space of the given geometry.
  Abandoning $H^1$-conformity allows us to construct finite elements which are -- due to an application of the Piola transformation -- exactly tangential. To reintroduce continuity (in a weak sense) we make use of (hybrid) discontinuous Galerkin techniques. To further improve this approach, $H(\divergence_{\Gamma})$-conforming finite elements can be used to obtain exactly divergence-free velocity solutions. 
  %The main focus lies in the construction of numerical schemes that preserve the constraint that the solution is tangential to the manifold in an exact point wise manner. This is done by employing the well known Piola transformation originally known from discretization of the $H(\divergence)$-space. Whereas this leads to the construction of tangential preserving numerical schemes, $H^1$-conformity is imposed in a weak sense.
  We present several new finite element discretizations.
On a number of numerical examples we examine and compare their qualitative properties and accuracy.
% To investigate their qualitative properties and accuracy we present a vast number of numerical examples ranging from simple to complex geometries, from smooth to only piecewise smooth geometries and from Vector Laplacians to unsteady surface Navier--Stokes equations.
%
%
% We examine several properties known from the flat case such as pressure robustness and energy stability for the stationary Stokes and the instationary Navier-Stokes equations, respectively. 
%.
% We consider a vector valued Laplace problem and compare the results also on manifolds which are only piecewise smooth. Then we discuss the Stokes equations on smooth manifolds and emphasize on the benefits of exactly divergence-free discrete velocities. Finally we draw our attention to the discretization of the Navier-Stokes equations.
% We study several benchmark problems known from the flat case as the flow around a cylinder and the Kelvin-Helmholtz instability on different geometries. 
  %The latter one is known to be highly sensible for discretization errors where 
  
%  In this work we consider vector-valued second order partial differential equations on two-dimensional manifolds. Beside vector valued Laplace like problems, discussed in a first step, we then draw our attention to the discretization of the incompressible stationary Stokes and instationary Navier-Stokes equations. We present a vast number of numerical examples indicating the accuracy of the new methods and compare benchmark problems known form the flat case on mapped surfaces.
%
%  high order, hdiv...
}

\keywords{Surface PDEs, tangential vector field, incompressible Navier--Stokes equations, Piola transformation, divergence-conforming finite elements}

\maketitle
%\begin{abstract}
%In this work we present a super awesome discretization for Stokes problems on two dimensional manifolds...
%\end{abstract}
%\keywords{
%  incompressible Navier--Stokes equations, mixed finite elements, pressure robustness,...}
%\maketitle

\section{Introduction}
Partial Differential Equations (PDEs) that are posed on curved surfaces play an important role in several applications in engineering, physics and mathematics. Surface PDEs describing flows on surfaces appear for instance in the modeling of emulsions, foams and biological membranes, cf. Slattery et al.\cite{slattery2007interfacial} or Brenner\cite{brenner2013interfacial}, or liquid crystals, cf. de Gennes and Prost\cite{de1993physics} or Napoli and Vergori\cite{napoli2016hydrodynamic}. The numerical treatment of these PDEs gained an increasing amount of attention in the field of numerical simulations and numerical analysis in the last two decades.
In this work we consider vector valued PDEs for viscous incompressible flows on surfaces that are immersed in the three dimensional space. 

A main source of difficulty for vector valued PDEs on surfaces is the fact that the unknown vector field is typically \emph{tangential}, i.e. for a two dimensional manifold $\Gamma$ embedded in $\mathbb{R}^3$ the unknown field is only two-dimensional.
In recent works almost exclusively $[H^1(\Gamma)]^3$-conforming finite elements have been used to approximate the
unknown tangential vector field. Tangential vector fields are only weakly imposed through the variational formulation. In this work we follow a different approach: we abandon continuity of the finite elements. This loss of conformity however allows us to construct exactly tangential vector fields.
This is achieved by mapping finite element functions from the two-dimensional reference element by a straight-forward generalization of the well-known \emph{Piola transformation}. This guarantees that the resulting (possibly higher order) vectorial basis functions are \emph{exactly tangential} to the surface. This specifically means that no additional enforcement of the tangentiality condition is needed to be enforced through the variational formulation. One could say that we trade one structure property (continuity) for the other (tangential vector fields). To deal with the missing continuity we apply well established techniques from the flat case: discontinuous Galerkin (DG) methods and variants such as the hybrid DG (HDG) methods.

It turns out that we do not have to abandon continuity completely, but can preserve continuity at least for the co-normal component, i.e. the in-plane normal component across element interfaces, resulting in $H(\divergence_{\Gamma})$-conforming finite elements. These finite elements in conjuction with suitable (hybrid) DG techniques have been proven to be excellent discretizations for incompressible fluid flows in the flat case due to benefitial properties such as exactly divergence-free solutions, pressure robustness and energy stability. These properties can easily be transfered to the case of surface Stokes and surface Navier--Stokes equations as we will explain in the sequel of this paper. 

\subsection{State of the art}
Let us briefly give an overview on the state of the art in the literature. 
Initially surface finite element methods (SFEM) for \emph{scalar} PDEs have been introduced in the seminal work by Dziuk\cite{dziuk1988finite}, we also refer to the survey papers by Dziuk and Elliott\cite{dziuk_elliott_2013} and Bonito et al. \cite{2019arXiv190602786B}. The use and analysis of $H^1(\Gamma)$-conforming surface finite elements has been extended to higher order discretizations and adaptive schemes by Demlow et al. \cite{demlow2007adaptive,demlow2009higher}, including the analysis of geometry errors\cite{holst2012geometric}. In the last decade the extension to non-($H^1(\Gamma)$-)conforming surface finite elements has been done in several works, cf. for instance Refs. \cite{dedner2013analysis,antonietti2014high} and eventually also HDG formulations have been considered by Cockburn and Demlow\cite{cockburn2016hybridizable}.
In all these works an explicit surface mesh is used. A different approach is based on a mesh of the surrounding 3D space and a level set description of the surface. In the ``TraceFEM'' the trace on the level set surface of finite element functions of this background mesh are used for the approximation of the solution. The method was introduced by Olshanskii et al. \cite{olshanskii2009finite}. Sometimes the method is also known as ``CutFEM'', cf. e.g. Ref.\cite{burman2015stabilized}. We also refer to the overview paper on TraceFEM by Olshanskii and Reusken\cite{OlRe_2017}.

Mixed formulations of the surface Poisson problem can be seen as an intermediate step towards vector valued PDEs as they involve the vector valued surface flux that is approximated seperately from the primal unknown, resulting in a system of first order surface PDEs. These mixed formulations have been considered for instance by Rognes et al.\cite{rognes2013automating} and -- as part of their mixed formulations of DG and HDG methods -- by Antonietti et al. \cite{antonietti2014high} and Cockburn and Demlow\cite{cockburn2016hybridizable}. Here, tangential surface finite elements are constructed for the flux. Primal DG formulations in the context of TraceFEM/CutFEM have also been considered in \cite{burman2017cut}.
In the works by Rognes et al. \cite{rognes2013automating} and Cockburn et al. \cite{cockburn2016hybridizable} the construction of the spaces is based on the Piola transformation, resulting in (broken) surface Raviart-Thomas spaces as considered already by Nedelec \cite{nedelec1978computation} and Bendali \cite{bendali1984numerical}.
Let us note that the analysis of mixed Poisson formulations including variational crimes in a general framework has been considered in Holst and Stern\cite{holst2012geometric}. In all these works, where the primal unknown is scalar, an isoparametric geometry approximation, i.e. using order $k$ for approximating the scalar unknown and order $k$ for the geometry approximation is sufficient to obtain order $k+1$ error estimates in the $L^2$-norm. Only for the superconvergence property in the HDG method by Cockburn and Demlow\cite{cockburn2016hybridizable} an increased geometry order is necessary.

While scalar problems on surfaces and their numerical treatment seem to be well understood vector valued problems have drawn an increased interest recently.
Relevant models of viscous fluidic surfaces based on 3D Cartesian differential operators are described in Refs. \cite{MR3875687,koba2017energetic}. % and their discretization has been discussed only very recently.
In the context of finite element methods these PDEs require vector valued finite element spaces on surfaces that are tangential.
Starting with the Vector Laplacian surface finite elements that are  $[H^1(\Gamma)]^3$-conforming, i.e. three dimensional, and impose the tangential condition through the variational formulation have been proposed by Hansbo et al.\cite{hansbo2016analysis}. In that paper a penalty formulation and a Lagrange multiplier based formulation are considered to drive the normal component of the discrete approximation to zero. Further, it is already observed that -- in contrast to the scalar problem -- an isoparametric discretization is not sufficient to preserve optimal order $L^2$-errors. In Refs. \cite{MR3840893,jankuhn2019trace} similar approaches are considered and analysed in the context of TraceFEM discretizations. 
Hansbo  et al. \cite{hansbo2017stabilized} extended their approach to Darcy problems on surfaces using $[H^1(\Gamma)]^3$-conforming (low order) surface FEM. %(geom. dependent order drop observable on unstructured grids)
The surface Stokes problem based on a velocity-pressure formulation has been discretized using TraceFEM based on a stabilized P1-P1 discretization in Ref. \cite{Olshanskii2018AFE}.
In the very recent paper by Olshanskii et al.\cite{2019arXiv190902990O} the TraceFEM with a P2-P1 discretization has been considered. Also only very recently Bonito et al.\cite{bonito2019} presented a low order $H(\divergence_{\Gamma})$-conforming discretization for the surface Stokes problem on $C^4$ smooth closed surfaces with a focus on the numerical treatment of \emph{killing fields}. Their approach to discretize the surface Stokes problem is similar to the discretizations that we treat for the surface Navier--Stokes equations in this paper.
Similar methods to Hansbo et al. \cite{hansbo2016analysis} have been proposed in Nestler et al.\cite{NNPV18,NNV19}, where vector- and tensor-valued surface PDE models are considered.

A vorticity formulation has been considered for the surface Stokes problem in 
Refs. \cite{RV15,NRV17,reusken2018stream} where the explicit construction of tangential vector fields is circumvented.
Vorticity formulations have also been used to solve the surface Navier--Stokes equations in Refs. \cite{Azencot2014,nitschke2012finite}. 
Velocity-pressure formulations for the surface Navier--Stokes problem have been recently considered using higher order $[H^1(\Gamma)]^3$-conforming surface FEM by Reuther and Voigt \cite{voigtreuther18} (low order) and by Fries\cite{MR3846120} (higher order) and based on a low order TraceFEM discretization with penalty by Olshanskii and Yushutin\cite{Olshanskii2018APF}.
We also mention the numerical approaches based on discrete exterior calculus in \cite{NRV17} and spectral methods in \cite{GA18}.

In all the previous works either closed smooth surfaces or at least smooth surfaces with boundaries, e.g. in Fries\cite{MR3846120}, have been considered. The case of only piecewise smooth geometries has not been addressed in the finite element literature so far to the best of our knowledge.

As we will base our discretization for the surface Navier--Stokes equations on $H(\divergence_{\Gamma})$-conforming elements, let us also briefly mention previous works on $H(\divergence_{\Gamma})$-conforming methods in the plane. In the context of DG discretizations Cockburn et al.\cite{cockburn2007note} were the first to realize that energy stability and local mass conservation for DG methods is only achieved for $H(\divergence_{\Gamma})$-conforming finite elements which result in pointwise divergence-free solutions. We extended this idea to HDG methods and considered several extensions and improvements and evaluated the computational efficiency of the resulting methods in  Refs.\cite{lehrenfeld2010hybrid,LS_CMAME_2016,LedererSchoeberl2017,LLS_SIAM_2017,LLS_ESAIM_2019,SLLL_SEMA_2018,SJLLLS_CAMWA_2019}, cf. also the discussion in Section~\ref{sec::HdivBenefits} below.

\subsection{Content and structure}
In this paper we introduce non-($H^1(\Gamma)$-)conforming finite elements for incompressible surface flows, starting from the Vector Laplacian to the unsteady surface Navier--Stokes equations. We present different DG and HDG discretizations -- most of which are new -- and compare them to $[H^1(\Gamma)]^3$-conforming methods. The use of $H(\divergence_{\Gamma})$-conforming finite element methods results in pointwise divergence-free and exactly tangential solutions. Our methods are high order accurate but allow for surfaces which are only piecewise smooth. Moreover, arbitrary surface meshes can be dealed with, i.e. an exact geometry description does not need to be known.

Model problems are presented in Section \ref{sec::modelprob} before several numerical schemes with tangential finite elements are introduced in Section \ref{sec::FEMconstruction}. Several numerical examples for the different problems and discretizations are presented and discussed in Section \ref{sec::numex} before we conclude the manuscript.

%\paragraph{temporary notes:}
%\clnote{Collection of minor todos:
  
  %$k$ vs. $k_u$ 
  
  %``surface somethings'' vs. ``Surface something'' vs. ``Surface Something''

 % $H(\divergence,\Gamma)$ vs. $H(\divergence_\Gamma,\Gamma)$ vs. $H(\divergence_\Gamma)$
  
%}

\section{Model problems on the surface} \label{sec::modelprob}
\subsection{Notation and surface differential operators}
Let $\Gamma$ be a sufficiently smooth connected two-dimensional stationary and oriented surface embedded in $\mathbb{R}^3$. At every point $x \in \Gamma$ we denote by $\n(x)$ a uniquely oriented unit normal vector, and by $P(x) = I - \n(x) \n(x)^T$, with the identity matrix $I$, the corresponding orthogonal projection onto the tangential plane of $\Gamma$ at $x$. In this work we assume that there exists a sufficiently smooth extension into the neighborhood $\mathcal{O}(\Gamma)$ of $\Gamma$ which induces a projection $p: \mathcal{O}(\Gamma) \rightarrow \Gamma$.
%Note, that $p$ will in general not be the closest point projection as we allow for piecewise smooth manifolds.
Assume a given scalar-valued, sufficiently smooth function $\phi:\Gamma \rightarrow \rr$, and let $\phi^p = \phi \circ p: \mathcal{O}(\Gamma) \rightarrow \rr$ denote its extensions to the neighborhood $\mathcal{O}(\Gamma)$ of $\Gamma$. Then, we define the \textit{scalar surface gradient} through the gradient of $\phi^p$ in the embedding space and the projection onto the tangential plane. Hence, for any $x \in \Gamma$ we have
\begin{align*}
  \nabla_\Gamma \phi(x) :=  P(x) \nabla \phi^p(x),
\end{align*}
where $\nabla \phi^p$ is the column vector consisting of all partial derivatives. As a direct consequence we realize that the scalar surface gradient lies in the tangential plane of $\Gamma$. Further, let us note that the surface gradient is independent of the concrete choice of the projection $p$. In the same manner let $u=(u_1,u_2,u_3)^T: \Gamma \rightarrow \rr^3$ be a given vector-valued and sufficiently smooth function and denote by $u^p: \mathcal{O}(\Gamma) \rightarrow \rr^3$ its extension to the neighborhood. According to the above definition we can define the component-wise surface gradient through $\nabla u^p P$, where $\nabla u^p = (\nabla u^p_1, \nabla u^p_2, \nabla u^p_3)^T$ is the standard Jacobian matrix of $u^p$. Hence, $\nabla u^p P$ is the matrix where each row gives the scalar surface gradient of the components of $u^p$. We define another operator called the \textit{tangential surface gradient} by applying an additional projection from the left
%Below, we require a different operator, the \textit{tangential surface gradient} that is defined , i.e. for a.e. $x\in \Gamma$ we have
\begin{align*}
  \nabla_\Gamma u(x) := P(x) \nabla u^p(x) P(x) \quad \forall x \in \Gamma.
\end{align*}
Note, that in the literature this operator is usually known as the covariant derivative on $\Gamma$.
% To improve the readability we omit the super script $e$ for the extension into the neighbor hood from now on for the rest of this work.
We are now able to state the symmetric surface strain tensor which is -- following Gurtin and Murdoch\cite{Gurtin1975} -- defined as
\begin{equation}\label{def:surfacestrain}
  \varepsilon_\Gamma(u) := \frac{1}{2} (\nabla_\Gamma u + \nabla_\Gamma u ^T), 
\end{equation}
and the surface divergence operator
\begin{align*}
    \divergence_\Gamma u &:= \tr(\nabla_\Gamma u). %, \\
%  \divergence_\Gamma \sigma &:= (\divergence_\Gamma( e_1^T \sigma),   \divergence_\Gamma( e_2^T \sigma),\divergence_\Gamma( e_3^T \sigma) )^T,
\end{align*}
As usual the divergence operator $\divergence_\Gamma$ applied on a matrix valued function $\sigma$ reads as the row-wise divergence.
%where $\sigma: \Gamma \rightarrow \rr^{3 \times 3}$ and $e_i$ is the $i$-th basis vector for $\rr^3$ so that $e_i^T \sigma$ is the $i$-th row of $\sigma$.

So far we assumed that $\phi$ and $u$ are sufficiently smooth so that the above differential operators exist in a point-wise sense on $\Gamma$. We can generalize to the notion of weak derivatives in the usual sense. For instance we define the weak gradient $g^\phi \in [L^2(\Gamma)]^3$ (if it exists) of a given function $\phi \in L^2(\Gamma)$,  as the function that fulfills
%$g^\phi \cdot \n = 0$ (in the $L^2$-sense) and 
$
\int_\Gamma g^\phi \cdot v = - \int_\Gamma \phi \divergence_\Gamma v
$
for all $v \in [C^{\infty}_0(\Gamma)]^3$.
%Note that with the decomposition $v = v_{\n} + v_{\t}$ into tangential and normal parts one easily sees that $g^\phi$ has $g^\phi \cdot \n = 0$ in the $L^2$ sense.
% \plnote{Definition of weak gradient? Gradients defined above are only defined almost everywhere}

In the next three subsections we introduce the surface PDE problems considered in this work. To this end let $f \in [L^2(\Gamma)]^3$ such that $f \cdot \n = 0$ on $\Gamma$ be a given force vector.

\subsection{Vector-valued elliptic problem on $\Gamma$:} 
We seek for a solution $u: \Gamma \rightarrow \rr^3$ with $u \cdot \n = 0$ on $\Gamma$ of the second order partial differential equation given by
\begin{subequations} \label{prob:veclap}
\begin{align} \label{prob:veclap-a}
  % \begin{split}
  - P \divergence_\Gamma( \eps_\Gamma(u)) + u &= f \quad \textrm{in } \Gamma, \\
\label{prob:veclap-b}  
  u &= 0 \quad \textrm{on } \partial \Gamma.
    % \end{split}
\end{align}
\end{subequations}
For the ease of presentation we only consider homogeneous Dirichlet boundary conditions in this part but the introduced methods can all be extended to more general boundary conditions, as demonstrated in the numerical examples. Further note that in the case of a closed surface ($\partial \Gamma = \emptyset$) no boundary conditions are prescribed. Let $V:=\{ u \in [H^1(\Gamma)]^3: u = 0 \textrm{ on } \partial \Gamma \}$ with the norm $\| u \|_1^2:= \| \nabla_\Gamma u \|^2_{L^2(\Gamma)} + \| u \|^2_{L^2(\Gamma)}$, and let
\begin{align*}
  V_{\t}(\Gamma):= \{ u \in V: u \cdot \n = 0 \textrm{ on } \Gamma \}.
\end{align*}
Following Ref.~\cite{MR3840893}, the variational formulation of \eqref{prob:veclap} is given by: Find $u \in V_{\t}$ such that
\begin{align} \label{eq:weak:veclap}
a(u,v) + m(u,v) = f(v)   \quad \forall v \in V_{\t},
\end{align}
where
\begin{align*}
a(u,v) := \int_\Gamma  \varepsilon_\Gamma(u): \varepsilon_\Gamma(v) \dx, \quad m(u,v) := \int_\Gamma u \cdot v \dx, \quad f(v) :=\int_\Gamma f \cdot v \dx.
\end{align*}
While finite element discretization of such variational problems are well understood on a flat surface, the tangential constraint $u \cdot \n =0$ on $\Gamma$ makes the construction of an appropriate numerical scheme on surfaces much more difficult. In Section~\ref{sec::FEMconstruction} we discuss a natural approach how to deal with this challenge.

\begin{remark}
  Note that the above variational formulation can also be defined on piecewise smooth connected surfaces. Then the strong form of the partial differential equation is given by equation \eqref{prob:veclap} defined on each (smooth) sub domain and continuity conditions of the trace and the normal fluxes at the common interfaces. In Section \ref{sec::houseofcards} we demonstrate that our methods are applicable for such problems.
\end{remark}
 \subsection{Stationary Stokes equations on $\Gamma$:} We consider a Newtonian fluid on $\Gamma$, see Refs. \cite{MR3875687,MR3614501}, and assume for the ease of representation that $\partial \Gamma \neq \emptyset$, see Remark~\ref{stokesbnd}. Adding the incompressibility constraint and the pressure as Lagrange multiplier we now seek for a solution $u: \Gamma \rightarrow \rr^3$ with $u \cdot \n = 0$ on $\Gamma$ and $p: \Gamma \rightarrow \rr$ such that
 \begin{subequations}\label{prob:stokes}
\begin{align} \label{prob:stokes:momentum}
  % \begin{split}
  - 2 \nu P \divergence_\Gamma( \eps_\Gamma(u)) + \nabla_\Gamma p &= f\hphantom{0} \quad \text{in } \Gamma, \\
  \operatorname{div}_\Gamma (u) & = 0\hphantom{f}  \quad \text{in } \Gamma, \\
  u &= 0\hphantom{f} \quad \text{on } \partial \Gamma.
  % \end{split}
\end{align}
\end{subequations}
Here, \eqref{prob:stokes:momentum} can also be read as $- P \divergence_\Gamma \sigma(u) = f$ with the stress tensor
$\sigma(u,p) = -p P + 2 \nu \eps_\Gamma(u)$ where $\nu$ is the kinematic viscosity.
Following the derivation in Ref.~\cite{MR3846120} the weak formulation is given by
\begin{alignat*}{2}
2 \nu a(u,v) + b(v,p) &= f(v)   \quad &\forall v \in V_{\t}, \\
b(u,q) &=0    \quad &\forall q \in Q,
\end{alignat*}
with
%\begin{align*}
$  b(u,q) = \int_\Gamma \divergence_\Gamma(u) q \dx$,
%\end{align*}
and the pressure space $ Q := \{ q \in L^2(\Gamma): \int_\Gamma q = 0 \}$. Beside the crucial constraint that $u$ is in the tangential space of $\Gamma$ we now further have to deal with the divergence constraint. In particular this plays a key role in the discretization as one has to find compatible function spaces for the discrete velocity and the pressure spaces.

\begin{remark}\label{stokesbnd}
  As the variational formulation of the standard stationary (flat) Stokes equations is defined without a mass bilinear form $m(u,v)$ it demands further constraints to filter out the possibly non-trivial kernel of the composite bilinear form involving $a(\cdot,\cdot)$ and $b(\cdot,\cdot)$. On surfaces with a boundary sufficient constraints can be obtained from suitable boundary conditions. However, for the surface Stokes equations on closed surfaces the non-trivial kernel, the so-called \emph{killing fields}, need to be taken care of to guarantee uniqueness. For a more detailed discussion we refer to Refs.\cite{bonito2019,voigtreuther18}.
  %One could for example choose the velocity space as
  %\begin{align*}
  %  V := \{ u \in H^1(\Gamma, \rr^3): u \cdot \n = 0 \textrm{ on } \Gamma, \int_\Gamma u \dx = 0 \}.
  %\end{align*}
\end{remark}

\subsection{Unsteady Navier--Stokes equations on $\Gamma$:} We extend the Stokes model to the fully unsteady Navier--Stokes case, i.e. we seek for a solution $u: \Gamma \times (0,T] \rightarrow \rr^3$ with $u \cdot \n = 0$ on $\Gamma$ and $p: \Gamma \times (0,T]  \rightarrow \rr$ such that
\begin{subequations} \label{prob:navstokes}
\begin{align} 
  % \begin{split}
\partial_t u - 2 \nu P \operatorname{div}_\Gamma (\varepsilon_\Gamma(u)) + (u \cdot \nabla_\Gamma)u + \nabla_\Gamma p & = f \hphantom{0 u_0 g}\text{ on } \hphantom{\partial}\Gamma, ~ t \in (0,T], \\
  \divergence_\Gamma(u) & = 0 \hphantom{f u_0 g}\text{ on } \hphantom{\partial}\Gamma, ~ t \in (0,T], \\
  \mathcal{B}(u) &= g \hphantom{0 f u_0}\text{ on } \partial \Gamma, ~ t \in (0,T], \\
  u &= u_0 \hphantom{0 f g}\text{ on } \hphantom{\partial}\Gamma \times \{ 0 \},
  % \end{split}
\end{align}
\end{subequations}
with $T>0$, a given boundary differential operator $\mathcal{B}$, boundary values $g$ and initial values $u_0$. Beside the difficulties already discussed in the Stokes setting, the challenging part now is to deal with the nonlinear convection and the time derivative.

\section{Construction of tangential finite element methods} \label{sec::FEMconstruction}
% \plnote{Structure of this section, reference to benefits...}
In this section we focus on the derivation of new tangential DG schemes for the problems introduced in Section~\ref{sec::modelprob}. After introducing some basic notation in Section~\ref{sec::basicingandmot}, we concentrate on rather standard DG versions first in Sections~\ref{sec::discveclap}--\ref{sec::discnavstokes}. Benefits of choosing a $H(\divergence_{\Gamma})$-conforming formulations are discussed in Section~\ref{sec::HdivBenefits}. In Section~\ref{sec::DGvsHDG} we introduce similar HDG formulations and explain their computational advantage over the standard DG formulations.
\subsection{Basic ingredients and motivation}\label{sec::basicingandmot}
In this section we discuss a natural approach for the construction of tangential vector fields. To this end we summarize the findings of the works\cite{rognes2013automating,CASTRO2016241}. Let $\mathcal{T}_h$ be a consistent triangulation of the smooth manifold $\Gamma$ into curved triangles such that for every element $T \in \mesh$ there exists a not degenerated polynomial mapping $\Phi_T$ of order $k_g$ from the reference element
\begin{align*}
  \hat T := \{(x,y) \in \rr^2: 0 \le x+y \le 1\}.
\end{align*}
to the physical element $T$, i.e. $  \Phi_T: \hat T \rightarrow T $. With respect to this triangulation we write $\Gamma_h := \cup_{T \in \mesh} T$ for the corresponding locally smooth discrete manifold. In the following we will use the notation $\eps_\Gamma$, $P$ and $\nabla_\Gamma$ also for operations w.r.t. to $\Gamma_h$ (instead of $\Gamma$).
Further, we define the set $\mathcal{F}_h$ as the union of all element interfaces. We assume that $\mesh$ is shape regular and quasi uniform, thus there exists a mesh size $h$ with $h \simeq \operatorname{diam}(T)$ for all $T \in \mesh$. We denote by $F_T = \Phi_T' \in \rr^{3 \times 2}$ the Jacobian of the finite element mapping and remind the reader that the columns of $F_T$ span the tangential space for each point in $T$. In the following, we will drop the subscript $T$ in the transformation $\Phi_T$ and the derived quantities such as the Jacobian $F_T$ unless the association to the element $T$ shall be highlighted. Next, we write
\begin{align*}
  F^{-1} = (F^T F)^{-1} F^T
\end{align*}
for the Moore-Penrose pseudo inverse of the Jacobian $F$, and set $J := \sqrt{\det(F^T F)}$ as the functional determinant. Now let $\hat \phi$ be a differentiable function defined on the reference element $\hat T$ and let $ x = \Phi(\hat x) \in T$ for all points $\hat x \in \hat T$. Using the classical pull back we define a function $\phi$ on $T$ by
\begin{align*}
  \phi(x) = \hat \phi(\hat x).% \quad \textrm{where} \quad x = \Phi(\hat x) \in T, \forall \hat x \in \hat T. 
  \end{align*}
  Following the ideas of Rognes et al.\cite{rognes2013automating}, this pullback allows us to calculate the surface gradient of the function $\phi$ by
  \begin{align*}
    \nabla_\Gamma \phi(x) = F^{-T} \nabla \hat\phi(\hat x).
  \end{align*}
  As $F^{-T} = F (F^T F)^{-T}$, the above equation shows that the gradient $\nabla \phi$ is a linear combination of the two tangent vectors $\t_1 = F e_1$, $\t_2 = F e_2$ and hence lies in the tangent space as expected from differential geometry, see for example \cite{dziuk_elliott_2013}. The above construction allows us to define an $H^1$-conforming finite element space of order $k$ on the surface $\Gamma_h$ by
  \begin{align*}
  S_h^k :=  \{ w \in C^0(\Gamma_h): \forall T \in \mathcal{T}_h ~ \exists \hat w \in \mathbb{P}^k(\hat T) \textrm{ s.t. } w|_T \circ \Phi_T = \hat w \}.
  \end{align*}
where $\mathbb{P}^k(\hat T)$ is the space of polynomials up to degree $k$ on $\hat T$.
Whereas the above technique allows an easy construction of a scalar approximation space, the problems \eqref{prob:veclap}, \eqref{prob:stokes} and \eqref{prob:navstokes} demand for vector valued approximation spaces. In particular we aim for a space that can handle the constraint $u \cdot \n = 0$ in a proper way. Obviously, the simple product space $S_h^k \times S_h^k \times S_h^k$ is not convenient and we need a different construction. The solution to this is given by using the Piola transformation instead of the classical pull back. Originally, thus on flat surfaces, this mapping is used for the construction of $H(\divergence)$-conforming finite element spaces as it preserves the normal components on element interfaces. To this end let $\hat u$ be a vector valued function on $\hat T$, then we define on $T$ the function
  \begin{align} \label{piola}
    u(x) = \mathcal{P}_T(\hat u)(x) := \frac{1}{J} F \hat u(\hat x).
  \end{align}
  Due to the multiplication with $F$, we again see that the constructed vector field $u$ lies in the tangential plane of $T$. This finding is the key argument and motivation for the construction of the numerical schemes in this work.
  As we will explain below, cf. Lemma \ref{lemma:exdivfree}, the factor $1/J$ is important for the construction of $H(\divergence_{\Gamma})$-conforming finite element spaces, i.e. finite element spaces with continuous in-plane normal components. We want to mention that in Ref. \cite{rognes2013automating} and Ref. \cite{CASTRO2016241} the authors already realized that the Piola mapping can be used on surfaces to construct  appropriate  finite element spaces for the approximation of the spaces $H(\divergence_{\Gamma})$ and $H(\curl_\Gamma)$ on $\Gamma_h$. There, the according finite elements on the surface triangulation $\mesh$ are based on the Raviart-Thomas and Brezzi-Douglas-Marini finite element families on the reference element $\hat T$, cf. Refs. \cite{girault2012finite, brezzi1985two,brezzi2012mixed}.
  %In contrast to the construction of standard $H^1$-conforming finite element spaces, the normal continuity demanded by the conformity with respect to $H(\divergence)$ is achieved by using the Piola mapping instead of the classical pull back.   Where the Piola mapping is originally only used to preserve normal continuity, on manifolds it automatically assures that the discrete solution will lie in the tangent plane of the discrete manifold $\Gamma_h$. 

  \subsection{A discontinuous Galerkin discretization for vector valued elliptic problems} \label{sec::discveclap}
Based on the findings from the last section we now motivate a new discontinuous Galerkin (DG) method for the approximation of problem \eqref{prob:veclap}. However, we want to mention that the discretization can also be adjusted to other elliptic problems like the Vector Laplacian without a mass term. To this end we need several operators motivated by their definitions on a flat surface, see Arnold et al.\cite{arnold2002unified}. Let $T_1,T_2 \in \mesh$ be two arbitrary elements with common edge $E = \overline{T_1} \cap \overline{T_2}$. Let $\n_h$ be the oriented unit normal vector on $\Gamma_h$, and let $\t$ be a uniquely oriented tangential vector on $E$ such that $T_1,T_2$ are located on the left and on the right side respectively with respect to the direction of $\t$, see Figure \ref{fig:sketchofedgepatch}. Using these vectors we define the \emph{in-plane}  unit (outer) normal vectors
\begin{align*}
  \nG_1 := \n_h|_{T_1} \times \t \quad \textrm{and} \quad  \nG_2 := - \n_h|_{T_2} \times \t.
\end{align*}
Let us stress here that $\n_h$ will in general be discontinuous across element interfaces so that $\nG_1$ and $\nG_2$ will not be parallel.
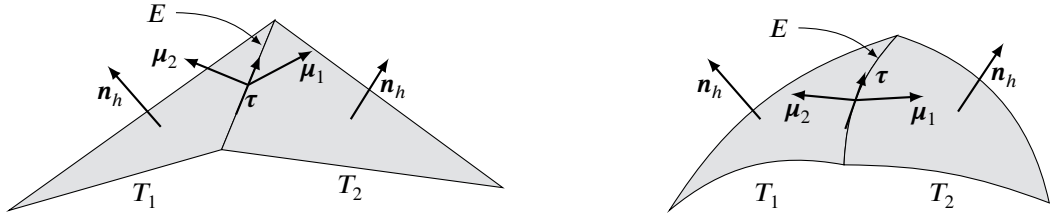
\begin{figure}%[h]
  % \missingfigure{Sketch of edge, $\t$, $T_1$,, $T_2$,..}
  \begin{center}
        \begin{tikzpicture}
  \draw[fill=lightgray!60] (0.5,0) -- (3.3,0.8) -- (4,2.5) -- cycle;
  \draw[fill=lightgray!60] (3.3,0.8) -- (7,0.3) -- (4,2.5) -- cycle;  
  \draw[->, shorten >=0.5cm,shorten <=0.5cm, thick,>=latex] (3.3,0.8) --  (4,2.5) node[below, midway] {~$\t$};
  \draw[->, thick,>=latex] (5, 1.2) -- (5.5,2) node[right, midway] {$\n_{h}$};
  \draw[->, thick,>=latex] (2.5, 1.1) -- (1.8,1.9) node[left, midway] {$\n_{h}$};
  \draw[->, thick,>=latex] (3.65, 1.65) -- (2.8,2);%;
  \draw[->, thick,>=latex] (3.65, 1.65) -- (4.5,2.1); % node[below] {$\nG_{1}$};
  \node at (2.55,2) {$\nG_{2}$};
  \node at (4.5,1.8) {$\nG_{1}$};
  \node at (2.3,0.2) {$T_1$};
  \node at (5,0.3) {$T_2$};
  \draw[->, >=latex]  (2.7,2.6) node[left]{$E$} to [bend left = 20] (3.83,2.17);
\end{tikzpicture}
\hspace{2cm}
    \begin{tikzpicture}
        \draw[fill=lightgray!60] (1,0.2) to [bend left = 25] (3.3,0.8) to [bend left = 20] (4,2.5) to [bend right = 20] cycle;
  \draw[fill=lightgray!60] (3.3,0.8) to [bend left = 10] (6,0.3) to [bend right = 30] (4,2.5) to [bend right = 20] cycle;

  \draw[->, shorten >=0.5cm,shorten <=0.5cm, thick,>=latex] (3.15,0.8) to (3.75,2.5); % node[below, midway] {~$\t$};
  
  %\draw[->, shorten >=0.5cm,shorten <=0.5cm, thick,>=latex] (3.3,0.8) to [bend left = 10]  (4,2.5) node[below, midway] {~$\t$};
  \draw[->, thick,>=latex] (4.8, 1.5) -- (5.4,2.4) node[right, midway] {$\n_{h}$};
  \draw[->, thick,>=latex] (2.2, 1.4) -- (1.5,2.2) node[left, midway] {$\n_{h}$};
  \draw[->, thick,>=latex] (3.45, 1.65) -- (2.6,1.73) node[below] {$~~\nG_{2}$};
  \draw[->, thick,>=latex] (3.45, 1.65) -- (4.35,1.7) node[below] {$\nG_{1}$};

   \node at (2.3,0.4) {$T_1$};
  \node at (4.6,0.4) {$T_2$};
  \node at (3.8,1.95) {$\t$};
  \draw[->, >=latex]  (2.7,2.6) node[left]{$E$} to [bend left = 20] (3.75,2.2);
\end{tikzpicture}
  \end{center}
  \caption{Two elements $T_1,T_2 \in \mesh$ with a common edge $E = \overline{T_1} \cap \overline{T_2}$ and the corresponding in-plane normal vectors $\nG_1,\nG_2$, the common tangential vector $\t$ and the oriented normal vector $\n_h$. On the left with a linear geometry approximation, on the right side with a high order approximation.}
  \label{fig:sketchofedgepatch} 
\end{figure}

Let $u$ and $\sigma$ be vector and matrix valued functions respectively. We define the vector valued average and jump on $E$ by
\begin{align*}
  \{\!\!\{ \sigma \nG \}\!\!\} := \frac{\sigma|_{T_1} \nG_1 - \sigma|_{T_2} \nG_2}{2} \quad \textrm{and} \quad [\![ u ]\!] = [\![ u ]\!]_{\t} \t + [\![ u ]\!]_{\nG} \overline{\nG}
\end{align*}
with the scalar valued \emph{normal} and \emph{tangential jump} and the averaged in-plane normal
\begin{align} %\label{normaljump}
  [\![ u ]\!]_{\nG} := u|_{T_1} \cdot \nG_1  + u|_{T_2} \cdot \nG_2, \qquad
  [\![ u ]\!]_{\t} := (u|_{T_1} - u|_{T_2}) \cdot \t, \qquad
  \overline{\nG} = \tfrac12 \nG_1 + \tfrac12 \nG_2.
\end{align}
In the case of $E \subset \partial \Gamma \neq \emptyset$
we set $[\![ u ]\!]_{\nG} = u \cdot \nG$ and the operators $  \{\!\!\{ \cdot \}\!\!\}$ and $[\![ \cdot ]\!] $ are replaced  by the identity. Based on the Piola mapping, see equation \eqref{piola}, we define a finite element space of order $k$ by
\begin{align*}
  W^k_h := \{ v \in [L^2(\Gamma_h)]^3 : \forall T \in \mesh ~ \exists \hat u \in [\mathbb{P}^k(\hat T)]^2 \text{ s.t. } v|_T = \mathcal{P}_T(\hat u)  \}.
\end{align*}
Here, $\mathbb{P}^k(\hat T)$ is the scalar-valued polynomial space of order $k$ on $\hat T$. By construction, we have that
\begin{align} \label{eq::disctangentplane}
  u_h(x) \cdot \n_h(x) =0 \quad \forall x \in \Gamma_h, \quad \forall u_h \in  W^k_h,
\end{align}
thus discrete functions in $W^k_h$ are exactly in the tangent plane of $\Gamma_h$. Assuming that $f$, originally defined on $\Gamma$, has a smooth extension on $\Gamma_h$, we follow Arnold et al.\cite{arnold2002unified} to define the DG method (based on an symmetric interior penalty formulation): Find $u_h \in W_h^k$ such that
\begin{subequations} \label{eq:DGmethod}
\begin{align} 
a_h(u_h,v_h) + m_h(u_h,v_h) = \int_{\Gamma_h} f \cdot v_h \ds  =: f_h(v_h) \quad \forall v_h \in W^k_h,
\end{align}
where
\begin{equation} \label{eq:DGblf}
  a_h(u_h,v_h)\!:=\!\sum\limits_{T \in \mesh} \int_T \eps_\Gamma(u_h)\!\colon\!\eps_\Gamma(v_h) \dx + \sum\limits_{E \in \mathcal{F}_h} \int_E \{\!\!\{ -\eps_\Gamma(u_h) \}\!\!\} \!\colon\! [\![ v_h ]\!] \ds 
                  + \int_E \{\!\!\{ -\eps_\Gamma(v_h) \}\!\!\} \!\colon\! [\![ u_h ]\!] \ds +  \frac{\alpha k^2}{h} \int_E [\![ u_h ]\!] \!\colon\! [\![ v_h ]\!] \ds\!.\!\!
\end{equation}
\end{subequations}
with the constant $\alpha$ chosen big enough, see Ref.\cite{arnold2002unified}, and $m_h(u_h,v_h) = \int_{\Gamma_h} P u_h v_h \ds$.
Note that the projection $P$ in the definition of the bilinear form $m_h(\cdot,\cdot)$ is redundant for functions in $W_h^k$. However, we will use $m_h(\cdot,\cdot)$ for non-tangential functions as well later on. 
%
% For the ease of presentation we only consider of a closed surface $\Gamma_h$ and refer to remark \ref{rem::opensurface} below for the non closed case.

\begin{remark} \label{rem::opensurface}
  In the case of nonhomogeneous Dirichlet or other types of boundary conditions one uses the standard techniques employed for DG methods on flat surfaces, see Ref. \cite{arnold2002unified}.
\end{remark}

\begin{remark} \label{rem::isometries}
  The application of the Piola mapping in the definition of functions in $W_h^k$ results in exactly tangential fields. Furthermore, it results in a discretization that is -- as the continuous problem -- invariant under isometries, i.e. if $u_h$ solves \eqref{eq:DGblf} on $\Gamma_h$, $\mathcal{P}(u_h)$ solves the corresponding problem on $\Gamma_h'=\Phi(\Gamma_h)$ if $\Phi$ is an isometry and $\mathcal{P}$ the corresponding Piola mapping. We discuss this in more detail below in the numerical examples, cf.~Sections~\ref{sec::houseofcards}and~\ref{sec:schaeferturek:isometric} below.
\end{remark}

\subsection{Discretization of the surface Stokes equations} \label{sec::discstokes}

In this section we focus on the construction of a numerical scheme to solve the Stokes problem on $\Gamma$, see equation \eqref{prob:stokes}. Note that we assumed $\partial \Gamma \neq \emptyset$ and thus also $\partial \Gamma_h \neq \emptyset$. As known from the literature, see for example Refs.\cite{brezzi2012mixed, braess}, discrete stability demands to find a compatible pair of the discrete velocity and pressure space. We aim to construct a method based on the works \cite{cockburn2005locally,cockburn2007note,LS_CMAME_2016} for the flat case. In the latter works, the discrete velocity space is chosen to be the $H(\divergence)$-conforming BDM space of order $k_u$, and the pressure is approximated by discontinuous polynomials of order $k_u-1$. These two spaces fulfill the property that the divergence of the velocity space is a subspace of the discrete pressure space. This ensures stability in the sense of Babu\v{s}ka-Brezzi, see for example Ref.\cite{LedererSchoeberl2017}, and leads to major benefits such as exactly divergence-free discrete velocities, see also Section \ref{sec::HdivBenefits}. Following the ideas of Rognes et al.\cite{rognes2013automating} and Castro et al.\cite{CASTRO2016241}, we now map this method to the surface using the previously introduced Piola mapping, see equation \eqref{piola}, to define an $H(\divergence_{\Gamma})$-conforming velocity space on $\Gamma_h$. 
Then for an arbitrary order $k$ we set
\begin{align} \label{eq:Hdivspace}
V_h^{k} := \{u_h \in W_h^{k}: [\![ u_h ]\!]_{\nG} = 0 ~ \forall E \in \facets  \} = W_h^k \cap H_0(\divergence_\Gamma),
\end{align}
where $H_0(\divergence_\Gamma)$ is the subspace of $H(\divergence_\Gamma)$ with zero normal trace at the boundary $\partial\Gamma$. Regarding the implementation of $V_h^{k}$ note, that the Piola mapping not only helps to incorporate property \eqref{eq::disctangentplane}, but can also be used to incorporate the zero normal jump. This is done by choosing the BDM basis on the reference element $\hat T$ and map it with the Piola transformation to the physical elements $T \in \mesh$. As the BDM basis functions are associated to the scalar normal moments on the edges of the reference element, the mapped functions are then associated to the according (in-plane) normal moments on the edges of the mapped element $T$. This automatically results in a normal continuous function, i.e. $  [\![ u ]\!]_{\nG} = 0$  on all edges. A detailed discussion on this topic is given in \cite{CASTRO2016241}. Next we define the discrete pressure space as
\begin{align*}
  Q_h^k := \{ q \in L^2(\Gamma_h): \forall T\in\mesh ~ \exists \hat q \in \mathbb{P}^k(\hat T) \text{ s.t. } q|_T = \hat q \circ \Phi_T^{-1} \} \cap L^2_0(\Gamma_h).
\end{align*}
For a fixed velocity approximation order $k_u$ the $H(\divergence_{\Gamma})$-conforming DG method then read as: Find $(u_h,p_h) \in V_h^{k_u} \times Q_h^{{k_u}-1}$ such that
\begin{align} \label{discstokesproblem}
  \begin{array}{rll}
  a_h(u_h, v_h) + b_h(v_h, p_h) &= f_h(v_h) &\quad \forall v_h \in V_h^{k_u}, \\
  b_h(u_h, q_h) &= 0 &\quad \forall q_h \in Q_h^{{k_u}-1}, 
  \end{array}
\end{align}
with
\begin{align*}
  b_h(u_h, q_h) = -\int_{\Gamma_h} \divergence_\Gamma(u_h) q_h \dx.
\end{align*}
Note that the appearing jumps in $a_h(u_h, v_h)$ now only read as the tangential jump since functions in $V_h^{k_u}$ are normal continuous. As a consequence we have the following lemma.
\begin{lemma} \label{lemma:exdivfree}
Let $u_h \in V_h^{k_u}$ such that $b_h(u_h, q_h) = 0 ~ \forall q_h \in Q_h^{{k_u}-1}$. Then $u_h$ is exactly divergence-free, i.e. $\divergence_\Gamma(u_h) = 0$ on $\Gamma_h$.
\end{lemma}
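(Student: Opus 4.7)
The plan is standard in outline: exhibit a test function $q_h \in Q_h^{{k_u}-1}$ whose use in $b_h(u_h,q_h)=0$ forces $\|\divergence_\Gamma u_h\|^2=0$. The twist is that the geometry mapping $\Phi_T$ generically has a non-constant Jacobian $J$, so that $\divergence_\Gamma u_h$ is \emph{not} piecewise polynomial in physical coordinates and therefore is not itself a member of $Q_h^{{k_u}-1}$. I would therefore work on the reference element and rely on the fundamental transformation identity
\[
(\divergence_\Gamma u_h)\circ \Phi_T \;=\; \frac{1}{J}\,\widehat{\divergence}\,\hat u \qquad\text{on }\hat T,\qquad u_h|_T=\mathcal{P}_T(\hat u),
\]
which is precisely the reason for the factor $1/J$ in the Piola map \eqref{piola} and which follows from a short computation combining $F^{-T}\nabla\hat q = (\nabla_\Gamma q)\circ\Phi_T$ with integration by parts.

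The candidate test function is then natural: on each $T\in\mesh$ set $\hat q_T := \widehat{\divergence}\,\hat u \in \mathbb{P}^{{k_u}-1}(\hat T)$ and define $q_h|_T := \hat q_T \circ \Phi_T^{-1}$. This is piecewise polynomial of the correct degree, so the only thing to verify before declaring $q_h \in Q_h^{{k_u}-1}$ is the mean-zero constraint $\int_{\Gamma_h}q_h\dx=0$. A change of variables element-wise gives $\int_T q_h\dx = \int_{\hat T}\widehat{\divergence}\,\hat u \cdot J\,\mathrm{d}\hat x = \int_T \divergence_\Gamma u_h\dx$; summing over $T$ produces $\int_{\Gamma_h}\divergence_\Gamma u_h\dx$, which vanishes because $u_h \in V_h^{k_u}=W_h^{k_u}\cap H_0(\divergence_\Gamma)$ has continuous in-plane normal trace across interior facets and zero in-plane normal trace on $\partial \Gamma_h$, so element-wise Gauss' theorem leads to total cancellation.

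Inserting $q_h$ into $b_h(u_h,q_h)=0$ and again passing to the reference element, the $1/J$ from the divergence identity cancels against the $J$ from the volume element, leaving
\[
0 \;=\; -b_h(u_h,q_h) \;=\; \sum_{T\in\mesh}\int_{\hat T} \bigl(\widehat{\divergence}\,\hat u\bigr)^2 \mathrm{d}\hat x \;=\; \sum_{T\in\mesh}\|\widehat{\divergence}\,\hat u\|_{L^2(\hat T)}^2.
\]
Thus $\widehat{\divergence}\,\hat u\equiv 0$ on every $\hat T$, and the Piola identity above yields $\divergence_\Gamma u_h = 0$ on all of $\Gamma_h$. The only genuine obstacle is constructing an admissible test function despite the $J$-mismatch between the image of $\divergence_\Gamma$ on $V_h^{k_u}$ and the plain-pullback pressure space $Q_h^{{k_u}-1}$; the choice $\hat q_T=\widehat{\divergence}\,\hat u$ resolves it because the $1/J$ and the $J$ cancel exactly in the duality pairing, and the mean-zero check reduces to the normal-continuity and homogeneous boundary-trace properties baked into the definition of $V_h^{k_u}$.
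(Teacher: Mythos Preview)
Your overall strategy mirrors the paper's proof: pull back to the reference element via the Piola identity $(\divergence_\Gamma u_h)\circ\Phi_T = J^{-1}\widehat{\divergence}\,\hat u$, choose $\hat q_T = \widehat{\divergence}\,\hat u$, and exploit the cancellation of $J^{-1}$ against the volume Jacobian in the duality pairing. Your final display is correct and is exactly what the paper obtains.

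The gap is in your mean-zero verification. You claim
\[
\int_T q_h\dx \;=\; \int_{\hat T}\widehat{\divergence}\,\hat u \cdot J\,\mathrm{d}\hat x \;=\; \int_T \divergence_\Gamma u_h\dx,
\]
but the second equality is false for non-constant $J$: by the Piola identity and change of variables,
\[
\int_T \divergence_\Gamma u_h\dx \;=\; \int_{\hat T}\frac{1}{J}\,\widehat{\divergence}\,\hat u \cdot J\,\mathrm{d}\hat x \;=\; \int_{\hat T}\widehat{\divergence}\,\hat u\,\mathrm{d}\hat x,
\]
so the extra factor of $J$ on the left does not belong. Consequently your $q_h$ need not lie in $L^2_0(\Gamma_h)$, and you have not actually produced a test function in $Q_h^{k_u-1}$. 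The paper sidesteps this by writing $q_h := q_h^\ast + c$ with $q_h^\ast|_T := \hat q_T\circ\Phi_T^{-1}$ and a global constant $c$ chosen so that $\int_{\Gamma_h}q_h\dx=0$; it then checks separately that $b_h(u_h,c) = -c\int_{\Gamma_h}\divergence_\Gamma u_h\dx = -c\int_{\partial\Gamma_h}u_h\cdot\nG\ds = 0$, using precisely the normal-continuity and zero-boundary-trace properties of $V_h^{k_u}$ that you already invoked. With that small repair your argument goes through and coincides with the paper's.
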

\begin{proof}
For each $T \in \mesh$ let $\hat u_{h,T} \in \mathbb{P}^{k_u}(\hat T, \rr^2)$ be such that $u_h = \mathcal{P}_T(\hat u_{h,T})$. Applying the chain rule shows
\begin{align} \label{divpiola}
  \divergence_\Gamma( u_h) = \frac{1}{J} \divergence( \hat u_{h,T})
\end{align}
where $\divergence( \hat u_{h,T})$ reads as the divergence on the flat reference element $\hat T$. As $\hat u_{h,T} \in \mathbb{P}^{k_u}(\hat T, \rr^2)$ we have $\divergence( \hat u_{h,T}) \in \mathbb{P}^{k_u-1}(\hat T)$. Choosing $\hat q_{h,T} =\sgn(J) \divergence( \hat u_{h,T})$ and $q_{h,T} = \hat q_{h,T} \circ \Phi^{-1}$, we define the global function $q_h$ as $q_h = q_h^\ast + c$ with $q_h^\ast|_T = q_{h,T}$ and the global constant $c \in \rr$ such that $ \int_{\Gamma_h} q_h = 0$, thus $q_h \in Q_h^{{k_u}-1}$. Then we have with
$
b_h(u_h, c) = -c \int_{\Gamma_h}  \divergence_\Gamma(u_h) \dx = -c \int_{\partial \Gamma_h} u_h \cdot \nG \ds = 0
$
that there holds
\begin{align*}
 0 = -b_h(u_h, q_h) = \int_{\Gamma_h} \divergence_\Gamma(u_h) q_h^\ast \dx = \sum\limits_{T \in \mesh} \int_{T} \divergence_\Gamma(u_h) q_{h,T} \dx = \sum\limits_{T \in \mesh} \int_{\hat T} \frac{1}{J} \sgn(J)\divergence( \hat u_{h,T})^2 |J| \dx = \sum\limits_{T \in \mesh} \int_{\hat T}\divergence( \hat u_{h,T})^2\dx 
  %= \sum\limits_{T \in \mesh} \int_{\hat T}
\end{align*}
which shows that $\divergence( \hat u_{h,T}) = 0$, thus we conclude by equation \ref{divpiola}.
\end{proof}
From Lemma \ref{lemma:exdivfree} we conclude that the solution $u_h$ of \eqref{discstokesproblem} is exactly divergence-free. Note that this statement is completely independent of the geometry approximation. This further leads to pressure robustness (see Sections \ref{sec::HdivBenefits} and \ref{sec:stokesnumex}) and shows that the function spaces $V_h^{k_u}$ and $Q_h^{{k_u}-1}$ are compatible, thus the stability proof of \eqref{discstokesproblem} follows the same steps as in the flat case, see Refs.\cite{lehrenfeld2010hybrid,LS_CMAME_2016}.
%\plnote{remark pressure robustness, benefits of div-free, hodiv-free}

\subsection{Discretization of the surface Navier--Stokes equations} \label{sec::discnavstokes}
The discretization of the Navier--Stokes problem \eqref{prob:navstokes} follows the ideas of Refs.\cite{LS_CMAME_2016, lehrenfeld2010hybrid} which we briefly summarize in the following. We aim to use a semi-discrete method to decouple the discretization in space and time. This is then further combined with an (high order) operator splitting method to efficiently deal with the nonlinear convection term. For the latter one we use a standard upwinding scheme in space which guarantees energy stability. To this end let $w_h, u_h, v_h \in V_h^{{k_u}}$, then we define the form
\begin{align} \label{convectionblf}
  c_h(w_h)(u_h,v_h) = \sum\limits_{T \in \mesh} -\int_T u_h \otimes w_h : \nabla v_h \dx + \int_{\partial T} w_h \cdot \nG ~u_h^{\operatorname{up}} \cdot v_h \ds,
\end{align}
where the upwinding value is chosen in upstream direction with respect to the convection velocity $w_h$: Let $x \in \partial T$ and $T'$ be the neighboring element, s.t. $x \in T, T'$, then we define
\begin{equation*}
  u_h^{\operatorname{up}} := (u_h \cdot \nG_T) \cdot \nG_T +
  (u_h^{\operatorname{up},\t} \cdot \t) \cdot \t, \quad \text{ where } u_h^{\operatorname{up},\t}(x) = u_h^{\operatorname{up},\t}|_T(x) \text{ if } u_h \cdot \nG_T \geq 0 \text{ and }
u_h^{\operatorname{up},\t}(x) = u_h^{\operatorname{up},\t}|_{T'}(x) \text{ otherwise.}
  % \left\{
  %   \begin{array}{cc}
  %   \end{array}
  % \right.
%  \lim\limits_{\eps \rightarrow 0} u_h(x - \eps w_h(x))$ thus $u_h^{\operatorname{up}}
\end{equation*}
We notice that the upwinding only affects the discontinuous tangential component. 
This type of formulation is typically derived by applying partial integration in an element-by-element fashion and introducing \emph{numerical fluxes}, cf. e.g. Ref.\cite{hesthaven2007nodal}. Let us further note that the integration by parts formula for vector fields that are not exactly tangential involves an additional term including the mean curvature, cf. Ref.\cite{MR3846120} Eq. (3). 

In combination with the bilinear forms defined in the previous section and again assuming that the initial velocity $u_0$ has a smooth extension onto $\Gamma_h$, we derive the following spatially discrete DAE problem: find $u_h(t) \in V_h^{{k_u}}, p_h(t) \in Q_h^{{k_u}-1}$ such that
\begin{subequations}
\begin{align}
  % \begin{array}{rll}
    m_h(\frac{\partial}{\partial t} u_h , v_h) + 2 \nu a_h (u_h,v_h) + b_h(v_h,p_h) + c_h(u_h)(u_h,v_h) &= f_h(v_h) && \forall v_h \in V_h^{{k_u}}, t \in (0,T], \\
    b_h(u_h,q_h) &= 0 && \forall q_h \in Q_h^{{k_u}-1}, t \in (0,T], \\
    m_h(u_h,v_h) &= m_h(u_0,v_h) && \forall v_h \in V_h^{{k_u}}, t =0, \label{eq:nse:initial}
      % \end{array}
\end{align}
\end{subequations}
where \eqref{eq:nse:initial} resembles the $L^2$-projection of the initial velocity onto the discrete velocity space $V_h^{{k_u}}$. Now let $\underline{u}_h(t)$ and $\underline{p}_h(t)$ be the finite element coefficient vectors of the functions $u_h,p_h$ respectively, and let $t_i$ with $i = i,\ldots,N$ where $t_0 = 0, t_N = T$ be an equidistant mesh for $[0,T]$ with time step size $\Delta t$. To get the solution at time $t_{i+1}$ we solve a step of the fully discrete lowest order implicit explicit (IMEX) splitting scheme
\begin{align} \label{loIMEX}
  \begin{array}{rl}
    M \frac{\underline{u}_h(t_{i+1}) -\underline{u}_h(t_{i})  }{\Delta t} (\frac{\partial}{\partial t} \underline{u}_h) + 2 \nu  A \underline{u}_h(t_{i+1})  + B^T\underline{p}_h(t_{i+1}) &= f_h(v_h) - C( \underline{u}_h(t_{i}))  \underline{u}_h(t_{i}),\\
    B  \underline{u}_h(t_{i+1}) &= 0.
      \end{array}
\end{align}
Here the matrices $M,A,B$ are the corresponding matrices of the bilinear forms $m_h,a_h$ and $b_h$ respectively, and $C( \underline{u}_h(t_{i}))  \underline{u}_h(t_{i})$ reads as the evaluation of the convection trilinear form $c_h$ with the wind $w_h = u_h(t_i)$. Above system shows that we treat the stiffness $A$ and divergence constraint $B$ implicitly which guarantees exactly divergence-free velocity solutions at each point in time. The convection $C$ however is treated explicitly. For more details regarding the efficiency and accuracy of above splitting methods for the Navier--Stokes equations we refer to Refs.\cite{LS_CMAME_2016, lehrenfeld2010hybrid} for the flat case. The simplest variant to generalize \eqref{loIMEX} to higher order order in time are IMEX schemes, cf. the works by Ascher et al. \cite{ascher1995implicit,ascher1997implicit}. Below in the numerical examples we use a second order IMEX based on two compatible explicit and implicit Runge-Kutta schemes of second order.

%\subsection{On the benefits and efficiency aspects of $H(\divergence)$-conformity}

\subsection{On the benefits of $H(\divergence_{\Gamma})$-conformity} \label{sec::HdivBenefits}

Additionally to the fact that the $H(\divergence_{\Gamma})$-conforming space $V_h^{k_u}$ is tangential there are several advantages over other Stokes discretizations (see also the HDG Stokes discretization introduced  in Section \ref{sec:stokesnumex}). %over the DG formulation with $W_h^k$.
Below, we focus on two important ones. For other aspects -- which transfer directly from the flat case -- such as the ability to reduce the set of basis functions, convection stability (beyond energy stability) and good dissipation properties we refer to the literature, e.g. Refs.\cite{lehrenfeld2010hybrid,LS_CMAME_2016,fehn2019}.
\paragraph{Pressure robustness}
As discussed in Section \ref{sec::discstokes}, Lemma \ref{lemma:exdivfree} shows that solutions $u_h$ of \eqref{discstokesproblem} are exactly divergence-free independently of the geometry approximation. Further, the combination of the velocity space $V_h^{{k_u}}$ and the pressure space $Q_h^{{k_u}-1}$ allows to test equation \eqref{discstokesproblem} with exactly divergence-free velocity test functions. This is a crucial advantage as it enables us to derive velocity error estimates that are independent of the pressure approximation (and the viscosity $\nu$). This property is known in the literature as \emph{pressure robustness} and was first introduced by Linke\cite{linke2014role}. In the following we briefly sketch the main idea (hence the occurring problem): let the r.h.s. $f$ of the Stokes problem be decomposed as
%$\mathbb{P}_H$ be the continuous surface Hemholtz projector defined as the rotational part of a Hemholtz decomposition (see \cite{girault2012finite}) of a given load $f$
\begin{align*}
f = \nabla_\Gamma \theta + \xi, %=: \nabla_\Gamma \theta + \mathbb{P}_H(f) + c,
\end{align*}
% \clnote{(?) consider Section 4 for in Ref.\cite{reusken2018stream}}
with $\theta \in H^1(\Gamma) / \rr$ and $\xi \in [L^2(\Gamma)]^3$ such that $\xi \cdot \n = 0$. Testing the continuous problem \eqref{prob:stokes} with a test function $v \in V_0:=\{ v \in H(\divergence_{\Gamma}, \Gamma): \divergence_\Gamma(v) = 0 \}$ we see that $2 \nu a(u,v) = \int_\Gamma \xi \cdot v \dx$, hence the velocity is not steered by the gradient $\nabla_\Gamma \theta$. If the same observation can be made in the discrete setting the method is called pressure robust and one can derive velocity error estimates that are independent of the pressure approximation (see Section \ref{sec:stokesnumex} for a numerical observation of this phenomenon). A rigorous analysis of such (Helmholtz) decompositions on smooth manifolds is given in Ref.\cite{reusken2018stream}. For the above findings in the continuous setting it was crucial that one tests with exactly divergence-free test functions $v \in V_0$. Lemma \ref{lemma:exdivfree} shows that the same can be done in the discrete setting which leads to pressure robustness of the discretization given by equation \eqref{discstokesproblem}. Note, that also standard methods can be made pressure robust, see Ref. \cite{blms:2015, lmt:2016, 2016arXiv160903701L, MR3683678}. Further, it also plays an important role in the Navier--Stokes case \cite{MR3564690, MR3824769, MR3875918, gauger2019}.

\paragraph{Energy stability}
The standard DG upwind formulation of the convection bilinear form \eqref{convectionblf} allows to show the following stability result (up to geometry errors). Let $\partial \Gamma_{in}:=\{ x \in \partial \Gamma: w_h \cdot \nG <0\}$ then there holds 
\begin{align*}
  \frac{1}{2} \int _{\partial \Gamma_{in}} | w_h \cdot \nG | u_h^2 \ds + c_h(w_h)(u_h,u_h) \ge 0 \quad \forall u_h,w_h \in V_h^{{k_u}} \textrm{ with } \divergence_\Gamma(w_h)=0.
\end{align*}
Note that the crucial assumption here is that the transport velocity $w_h$ is exactly (surface) divergence-free. Considering the time discretization \eqref{loIMEX} of the Navier--Stokes equations, the transport velocity was chosen as the discrete velocity of the old time step, i.e. $w_h = u_h(t_i)$. Thus, as $\divergence_\Gamma(u_h(t_i)) = 0$ (exactly) we can apply above stability estimate showing energy stability of the Navier--Stokes discretization.
% showing energy stability of our Navier Stokes discretization.
% \clnote{reduced basis?}

% \clnote{This only works for exact geometry handling (otherwise we introduce an integration error).\\
%   Philip, you'll gonna love this suggestion (:-P)! \\
%   However, we can introduce the following modification:
%   $$
%   c_h'(w_h)(u_h,v_h) := c_h(w_h)(\Pi u_h, \Pi v_h), \quad \text{ with } \Pi u_h \in X_h^k, \text{ s.t. } \int_{\Gamma_h} \Pi u_h v_h \dx = \int_{\Gamma_h} u_h v_h \dx \quad \forall~ v_h \in X_h^k
%   $$
%   with {\color{red} $X_h^k = [S_h^{k,\text{disc.}}]^3$} where $x_h(x) = \hat{x}_h(\hat x)$ (this is a {\color{red} directly mapped FE space}; note that our ``standard'' L2 surface space $W_h^k$ is already Piola mapped! Further note: evaluating $u_h^X := \Pi u_h$ is cheap (local $L^2$ projection)). This leads to an energy stable formulation also in the case of inexact geometry handling as e.g. (with $\Phi: \hat{T} \to T$ a mapping from a straight to the curved triangle and $F = \nabla \Phi,~J=\det(F)$ and $\hat u_h^{X},~\hat w_h,~\hat v_h^{X}$ the FE functions on the straight element)
%   $$
%   \int_{T} (\Pi u_h) \otimes w_h \colon \nabla (\Pi v_h) \dx = 
%   \int_{\hat{T}} J \hat u_h^X \otimes (\frac{1}{J} F \hat w_h) \colon \nabla \hat v_h^X \cdot F^{-1} \dx = 
%   \int_{\hat{T}} \hat u_h^X \otimes \hat w_h \colon \nabla \hat v_h^X \dx
%   $$
%   allows for exact numerical integration (choosing integration order $3k-1$) and hence $c_h'(w_h)(x_h,x_h) \geq 0 ~~\forall~ x_h \in X_h$... (would need more details of course)...
% }

\subsection{Hybrid discontinuous Galerkin formulations} \label{sec::DGvsHDG}
Although the formulation of Section \ref{sec::discveclap} fulfills property \eqref{eq::disctangentplane}, the computational overhead introduced by the DG formulation is a major drawback. A well known technique to overcome this circumstance is to use hybrid discontinuous Galerkin (HDG) formulations and a static condensation strategy. The idea is to introduce additional unknowns on the skeleton $\mathcal{F}_h$ which circumvents the direct coupling of element (interior) unknowns with neighboring elements. We give two examples how this can be achieved and comment on further efficiency improvements. Below, we will use HDG discretizations for the numerical examples.
\subsubsection{An HDG method for the Vector Laplacian} \label{sec::HDG}
Let $\Psi_E$ be the unique polynomial map from the reference edge $\hat E = [0,1]$ to an edge $E \in \facets$, with $E = \overline{T_1} \cap \overline{T_2},~T_1,T_2 \in \mesh$ as before. We define the space of (discontinuous) piecewise polynomials on the skeleton as
  \begin{align*}
    \Lambda_h^k := \{ v \in  L^2(\facets): \forall E \in \facets ~\exists  \hat v \in \mathbb{P}^k(\hat E) \text{ s.t. } v|_E = \hat v \circ \psi_E^{-1} \}.
  \end{align*}
Here, $\mathbb{P}^k(\hat E)$ is the scalar-valued polynomial space of order $k$ on $\hat E$.  Then we define the following two vector valued functions
\begin{align*}
  \lambda_{T_1} := \lambda_a \t + \lambda_b \nG_1 \quad \textrm{and} \quad   \lambda_{T_2} := \lambda_a \t + \lambda_b \nG_2  \quad \textrm{for } \lambda_a,\lambda_b \in     \Lambda_h^k, % \times     \Lambda_h^k,
\end{align*}
where $\t$ and $\nG_1, \nG_2$ are defined as above. Here $\lambda_a$ is introduced to decouple the weak (in the DG sense) tangential continuity whereas $\lambda_b$ is introduced for the weak normal continuity. 
Note, that $\lambda_{T_1}$ might not be equal to $\lambda_{T_2}$ as the in-plane normal vector may be different. The HDG method then reads as: Find $u_h,\lambda_T \in W_h^k \times (\Lambda_h^k \times \Lambda_h^k)$ such that
\begin{align*}
  a_h^{\operatorname{HDG}}((u_h,\lambda_T) , (v_h,\theta_T)) + m_h(u_h,v_h) =  f_h(v_h) \quad \forall (v_h,\theta_T) \in W^k_h \times (\Lambda_h^k \times \Lambda_h^k) ,
\end{align*}
where
\begin{align*}
  a_h^{\operatorname{HDG}}((u_h,\lambda_T) , (v_h,\theta_T)) :=
  \sum\limits_{T \in \mesh} &\int_T \eps_\Gamma(u_h):\eps_\Gamma(v_h) \dx + \int_{\partial T} (\eps_\Gamma(u_h)~ \nG ) \cdot (\theta_T - v_h) \ds \\
    &+ \int_{\partial T} (\eps_\Gamma(v_h)~ \nG ) \cdot (\lambda_T - u_h) \ds +  \frac{\alpha k^2}{h} \int_{\partial T} (\lambda_T - u_h) \cdot (\theta_T - v_h)  \ds.
\end{align*}
The HDG method has more unknowns compared to the DG version, but the element unknowns $u_h \in W_h^k$ depend only on the facet function $\lambda_T$. Hence, in a linear system, we can form the Schur complement with respect to those unknowns which is known as \emph{static condensation}. This can be done in an element-by-element fashion already during the setup of the linear systems. The Schur complement system is typically much smaller, especially for higher order elements (note that the facet unknowns scale with $k$ whereas the element unknowns scale with $k^2$).
An example for the impact of static condensation with HDG methods in comparison to DG methods will be given in Section~\ref{sec::sphere}.
%given aA detailed insight on this topic is given in section \ref{sec::numex}.
%Although we favour the HDG version and use it for our numerical examples, we will stick to the DG version for readability for the introduction of the methods discussed in section \ref{sec::discstokes} and \ref{sec::discnavstokes}.

% \subsection{Hybrid DG formulations on the surface} \label{sec::efficiency}
\subsubsection{An $H(\divergence_{\Gamma})$-conforming HDG method for the Stokes and the Navier--Stokes equations} \label{sec::HdivDG}

%\begin{remark} \label{hdivhdgremark}
Similarly to the standard DG method of Section \ref{sec::discveclap}, we can also use a hybridization technique for the $H(\divergence_{\Gamma})$-conforming DG methods of Sections \ref{sec::discstokes} and \ref{sec::discnavstokes}. As normal continuity is already incorporated in the space $V_h^{{k_u}}$, we do not need the facet variable $\lambda_b$. Hence the facet variable on the skeleton $\facets$ reduces to the single valued quantity
\begin{align*}
  \lambda_{T_1} = \lambda_{T_2} = \lambda_a \t \quad \textrm{with} \quad \lambda_a\in \Lambda_h^{k_u},
\end{align*}
as the tangential vector $\t$ is the same on both sides. 
With the definitions $\lambda_T(\lambda_a,u_h) := \lambda_a \t  + (u_h \cdot \nG) \nG$ and
$\theta_T(\theta_a,u_h) := \theta_a \t  + (u_h \cdot \nG) \nG$, we can use the same HDG bilinear form $a_h^{\operatorname{HDG}}$ for the treatment of viscosity.
For the convection operator we have to replace the upwind flux $u_h^{\text{up}}$, cf. \eqref{convectionblf}, to avoid communication between interior element unknowns in $V_h^{{k_u}}$. Hence, we make the following choice: 
\begin{align*}
  u_h^{\operatorname{HDG},\operatorname{up}} := (u_h \cdot \nG) \nG + 
  \begin{cases}
    (u_h \cdot \t) \t & \text{if } w_h \cdot \nG > 0,\\
    \lambda_a \t & \textrm{else}.
  \end{cases}
\end{align*}
Further, to make sure that $\lambda_a$ takes the value of the upwind neighbor also in the convection dominated regime we add a consistent stabilization (the ``downwind glue'') as introduced in Ref. \cite{egger2009hybrid} and define
\begin{align*}
c^{\operatorname{HDG}}_h(w_h)((u_h,\lambda_T), (v_h,\theta_T)) := \sum\limits_{T \in \mesh} -\int_T u_h \otimes w_h : \nabla v_h \dx + \int_{\partial T} w_h \cdot \nG ~u_h^{\operatorname{HDGup}} \cdot v_h \ds + \int_{\partial T_{\text{out}}}  w_h \cdot \nG ~(\lambda_T -(u_h \cdot \t) \t) \cdot  \theta_T \ds,
\end{align*}
where $\partial T_{\text{out}}$ denotes the outflow boundary of an element, i.e. $\partial T_{\text{out}} := \partial T \cap \{ w_h \cdot \nG > 0 \}$. Let us stress that in the convective limit $\nu \to 0$, the HDG solution converges to the DG solution discussed above. 

The semi-discretization of the new hybrid $H(\divergence_{\Gamma})$-conforming DG method for the Navier--Stokes equations then reads: Find $(u_h,\lambda_a, p_h) \in V_h^{{k_u}} \times \Lambda_h^{k_u} \times Q_h^{{k_u}-1}$ such that
\begin{subequations}
\begin{align}
  m_h( \frac{\partial}{\partial t} u_h,v_h) +  2 \nu a_h^{\operatorname{HDG}}((u_h,\lambda_T) , (v_h,\theta_T)) \qquad\qquad& \nonumber \\
  + c_h(u_h)((u_h,\lambda_T), (v_h,\theta_T)) + b_h(v_h, p_h) &= f_h(v_h) &&~ \forall v_h, \theta_a \in V_h^{{k_u}} \times \Lambda_h^{k_u}, t \in (0,T],  \\
    b_h(u_h,q_h) &= 0 &&~ \forall q_h \in Q_h^{{k_u}-1}, t \in (0,T],  \\ %\label{hybriddiscnavstokesproblem} \\
    m_h(u_h,v_h) - m_h(u_0,v_h)&= 0 &&~ \forall v_h \in V_h^{{k_u}}, t =0, \label{eq:nse:initial2}
\end{align}
\end{subequations}
where we implicitly made use of $\lambda_T=\lambda_T(\lambda_a,u_h)$ and $\theta_T=\theta_T(\theta_a,v_h)$.
Using this spatial discretization we can then again use the implicit-explicit splitting scheme for the discretization of the Navier--Stokes equations on $\Gamma_h$, see equation \eqref{loIMEX}. 

\paragraph{Efficiency aspects and superconvergence}
One advantage of HDG methods that we did not address so far is the ability to achieve superconvergence in diffusion dominated problems. If the (trace) unknowns on the skeleton are approximated with polynomials of order $k$, one can (for example) apply a local postprocessing strategy (see Ref. \cite{cockburn2009unified}) to define a local element-wise approximation of order $k+1$ which has order $k+2$ accuracy in the $L^2$ norm. Alternatively, one can consider the previous HDG formulation and reduce the facet degree by one order while preserving the order of accuracy by introducing a slight modification in the formulation, cf. Ref. \cite[Section 2.2]{LS_CMAME_2016}. In the context of $H(\operatorname{div})$-conforming methods, the reduction of the polynomial degree for the normal component requires a bit more care in order to preserve the benefitial properties of $H(\operatorname{div})$-conforming methods, cf. Ref.\cite{LLS_SIAM_2017,LLS_ESAIM_2019}.

\section{Numerical examples} \label{sec::numex}

In this section we consider several numerical examples. We start with a comparison of $[H^1(\Gamma)]^3$-conforming and the previously introduced non-conforming finite element methods for the Vector Laplacian for smooth and piecewise smooth manifolds in Subsection~\ref{sec::numex::veclap}. In the subsequent Subsection~\ref{sec:stokesnumex} we consider and compare the non-conforming methods for a surface Stokes problem. In the remainder we fix the method to the $H(\divergence_{\Gamma})$-conforming HDG method and consider several surface versions of a well-known benchmark problem in 2D in Subsection~\ref{sec:schaeferturek} and the Kelvin-Helmholtz instability problem on different rotationally symmetric surfaces in Subsection~\ref{sec::numex::KH}. Finally, in Subsection~\ref{sec::bunny} we consider a self-organisation process on the Stanford bunny geometry.
All numerical examples were implemented within the finite element library Netgen/NGSolve, see Refs. \cite{netgen,schoeberl2014cpp11} and \url{www.ngsolve.org}.
The data, i.e. time series, mesh refinement series, etc. as well as scripts for reproduction for all numerical examples can be found at \href{https://doi.org/10.5281/zenodo.3406173}{DOI: \texttt{10.5281/zenodo.3406173}}.

\subsection{Vector Laplacian} \label{sec::numex::veclap}
First, we consider two Vector Laplace problems with different discretizations, compare and discuss them. Additionally to those discretizations introduced before in Section \ref{sec::discveclap}, we consider two $[H^1(\Gamma_h)]^3$-conforming discretizations from the literature that we briefly summarize in Section \ref{sec::h1conf} below and which we denote as \texttt{H1-L} (weak enforcement of tangential condition through Lagrangian multipliers) and \texttt{H1-P} (weak enforcement of tangential condition through penalties).
We introduce labels for the aforementioned methods. The method in \eqref{eq:DGmethod} will be denoted as \texttt{DG} whereas the hybrid version of it is denoted as \texttt{HDG}. For the $H(\operatorname{div},\Gamma_h)$-conforming discretization, i.e. using \eqref{eq:DGmethod} with $W_h^k$ replaced by $V_h^k$, cf. \eqref{eq:Hdivspace}, we use the label \texttt{Hdiv-DG} and denote the HDG version as \texttt{Hdiv-HDG}.
For the numerical computations we only consider the HDG versions. Note however, that the differences between DG and HDG are primarily in a computational aspect, see also the explanations in Section \ref{sec::DGvsHDG}. Hence, the \texttt{DG} versions are only considered for a comparison of this observation.
%the discussion of computational aspects.

\subsubsection{$H^1$-conforming discretizations for the Vector Laplacian}\label{sec::h1conf}
% \clnote{$\rho \sim h^{-2}$ is not sufficient due to inconsistent treatment of the surface straing (no Weingarten..).
% Hence, we take $\rho \sim h^{-(k-1)}$. For consistent treatments it has been shown that $\rho \sim h^{-2}$ suffices \cite{hansbo2016analysis,jankuhn2019trace}. We live with the ill-conditioning...}
% \clnote{For the Lag. multiplier version we take $k_l = k_u$ (as in \cite{hansbo2016analysis,jankuhn2019trace}). This is different from [Fries] where a really bad conditioning was observed. We live with the ill-conditioning... }
% \clnote{Vector Laplacian with (tangential!) mass term (need to introduce?)}
In the literature mostly $[H^1(\Gamma_h)]^3$-conforming surface FEM discretizations for vector-valued problems are considered by either applying Lagrangian multipliers for the tangential constraint or using a penalty formulation, see e.g. Refs.\cite{MR3846120,MR3840893,jankuhn2019trace,hansbo2017stabilized,hansbo2016analysis,voigtreuther18}.
As we will use two instances of these methods for comparison to the methods discussed in Section \ref{sec::FEMconstruction}, we briefly state these two formulations and discuss the choices involved. The methods are defined as follows: \\[1ex]
\begin{subequations}
\begin{minipage}{0.535\textwidth}
  Find $(u_h,\lambda_h) \in [S_h^k]^3\times S_h^{k_l}$ s.t. $\forall (v_h,\mu_h) \in [S_h^k]^3 \times S_h^{k_l}$: 
  \begin{align} \label{eq:H1-L}
    \!\!\!\!\!(a_h\!+\!m_h)(u_h,v_h) +\!\! \int_{\Gamma_h}\! (v_h \!\cdot\! \n_h) \lambda_h \!\ds + \!\!\int_{\Gamma_h}\! (u_h \!\cdot\! \n_h) \mu_h \!\ds &= f_h(v_h).\!
  \end{align}
\end{minipage}
\begin{minipage}{0.465\textwidth}
  Find $u_h \in [S_h^k]^3$ s.t. $\forall v_h \in [S_h^k]^3$:
  \begin{align} \label{eq:H1-P}
    \!\!\!(a_h\!+\!m_h)(u_h,v_h) + \!\int_{\Gamma_h}\!\! \rho (u_h \!\cdot\! \n_h) (v_h \!\cdot\! \n_h) \!\ds = f_h(v_h).\!
  \end{align}
\end{minipage}
\end{subequations}
\\
% \clnote{The bilinear form is the same now}
% Here, we used the bilinear form
% \begin{equation*}
%   a_h'(u_h,v_h) := \int_{\Gamma_h} \varepsilon_\Gamma'(u_h) \colon \varepsilon_\Gamma'(v_h)~ds
%   \quad \text{ with } \quad \varepsilon_\Gamma'(w) = \frac{1}{2} P(\nabla_\Gamma w + \nabla_\Gamma w^T)P, \quad u_h,v_h,w \in [S_h^k]^3.
% \end{equation*}
% We stress that $\eps_\Gamma'(u) = \eps_\Gamma(u)$ only holds for exactly tangential vector fields $u$, see \eqref{def:surfacestrain}.
The $[H^1(\Gamma_h)]^3$-conforming method \texttt{H1-L} in \eqref{eq:H1-L} uses Langrange multipliers to enforce the tangential constraint whereas \eqref{eq:H1-P} enforces the tangential constraint through penalization with a penalty parameter $\rho$ (\texttt{H1-P}).
For \texttt{H1-L} the choice of the Lagrange multiplier space is $k_l=k$ as in Ref.\cite{hansbo2016analysis} or Ref.\cite{jankuhn2019trace} (in a TraceFEM context). We note that in Ref.\cite{MR3846120} an $h^{-6}$ scaling of the condition number for Navier--Stokes problems has been observed for $k_l=k$ and $k_l=k-1$ has been used instead.
To drive the normal component sufficiently small in the \texttt{H1-P} method, we make the simple choice $\rho = 10 \cdot h^{-(k+1)}$ and accept the severe ill-conditioning of arising linear systems here. In Refs.\cite{hansbo2016analysis,jankuhn2019trace} choosing such a large penalty is circumvented
by replacing $\varepsilon_\Gamma(u_h)$ (and $\varepsilon_\Gamma(v_h)$) by $\varepsilon_\Gamma(P u_h)$ (and $\varepsilon_\Gamma(P v_h)$) which is a consistent manipulation. Thereby the normal and tangential parts of the discrete solution decouple and a \emph{soft penalty} with a penalty parameter $\rho \sim h^{-2}$ suffices. 

Both choices taken here, $k_l=k$ for \texttt{H1-L} and $\rho = 10 \cdot h^{-(k+1)}$ for \texttt{H1-P} will eventually result in severe ill-conditioning of arising linear systems. However, for the test cases under consideration with \texttt{H1-L} and \texttt{H1-P} we used sparse direct solvers and obtained results of sufficient meaningfulness to serve as candidates for comparison.

\begin{figure}%[t!]
% \vspace*{-0.3cm}
  \begin{center}
    \includegraphics[width=.84\textwidth]{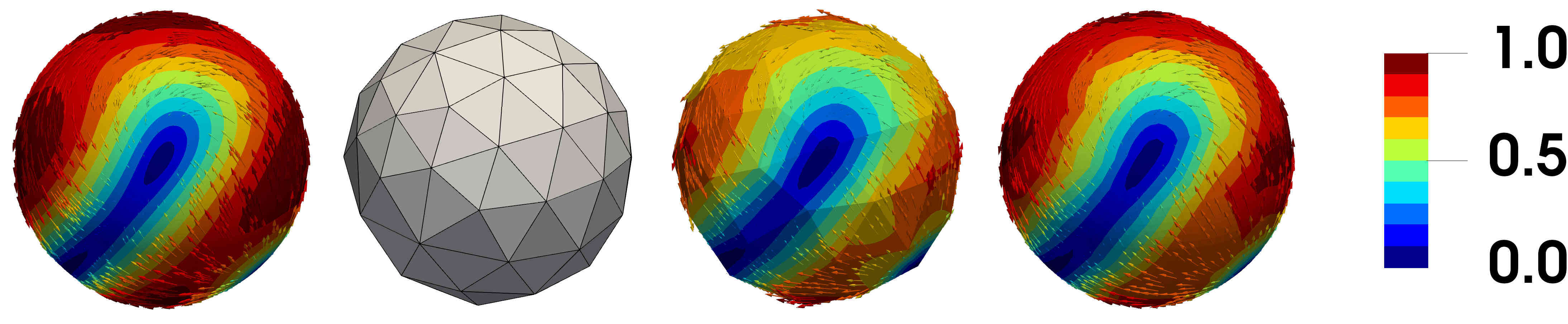}
  \end{center}
  \vspace*{-0.5cm}
  \hspace*{0.145\textwidth}
  $ \displaystyle \Vert u \Vert $
  \hspace*{0.285\textwidth}
  $ \Vert u_{h,(1,1)}^{\texttt{Hdiv-HDG}} \Vert $
  \hspace*{0.075\textwidth}
  $ \Vert u_{h,(1,2)}^{\texttt{Hdiv-HDG}} \Vert $
  \vspace*{-0.25cm}
  \caption{Exact and discrete geometry and solution for the Vector Laplacian in Section \ref{sec::sphere}. Coloring corresponds to velocity magnitude. From left to right: exact geometry and solution, (uncurved) computational mesh on coarsest level $L=0$, mesh and discrete solution for \texttt{Hdiv-HDG} with $k=1$, $k_g=1$, and discrete solution for \texttt{Hdiv-HDG} with $k=1$, $k_g=2$.}
  \label{fig:sphere}
\vspace*{-0.3cm}
\end{figure}

\subsubsection{Vector Laplacian on the sphere}\label{sec::sphere}
We consider the Vector Laplace problem with the (extended) exact solution $u^e = (-z^2, y, x)^T$ on the unit sphere $\Gamma = S_1(0)$, i.e. we choose the r.h.s. $f$ so that $u=u^e|_\Gamma$ solves \eqref{prob:veclap}.
For the geometry approximation we considered two choices, $k_g=k$ and $k_g=k+1$ and for $\alpha$ in the SIP formulations we take $\alpha = 10$. On five successively and uniformly refined meshes we evaluate $L^2$- and $H^1$-errors. The initial mesh (mesh level $L=0$) consists of 124 triangles, see Fig. \ref{fig:sphere}.
%
%\newpage
%
\paragraph{Comparison of methods}%
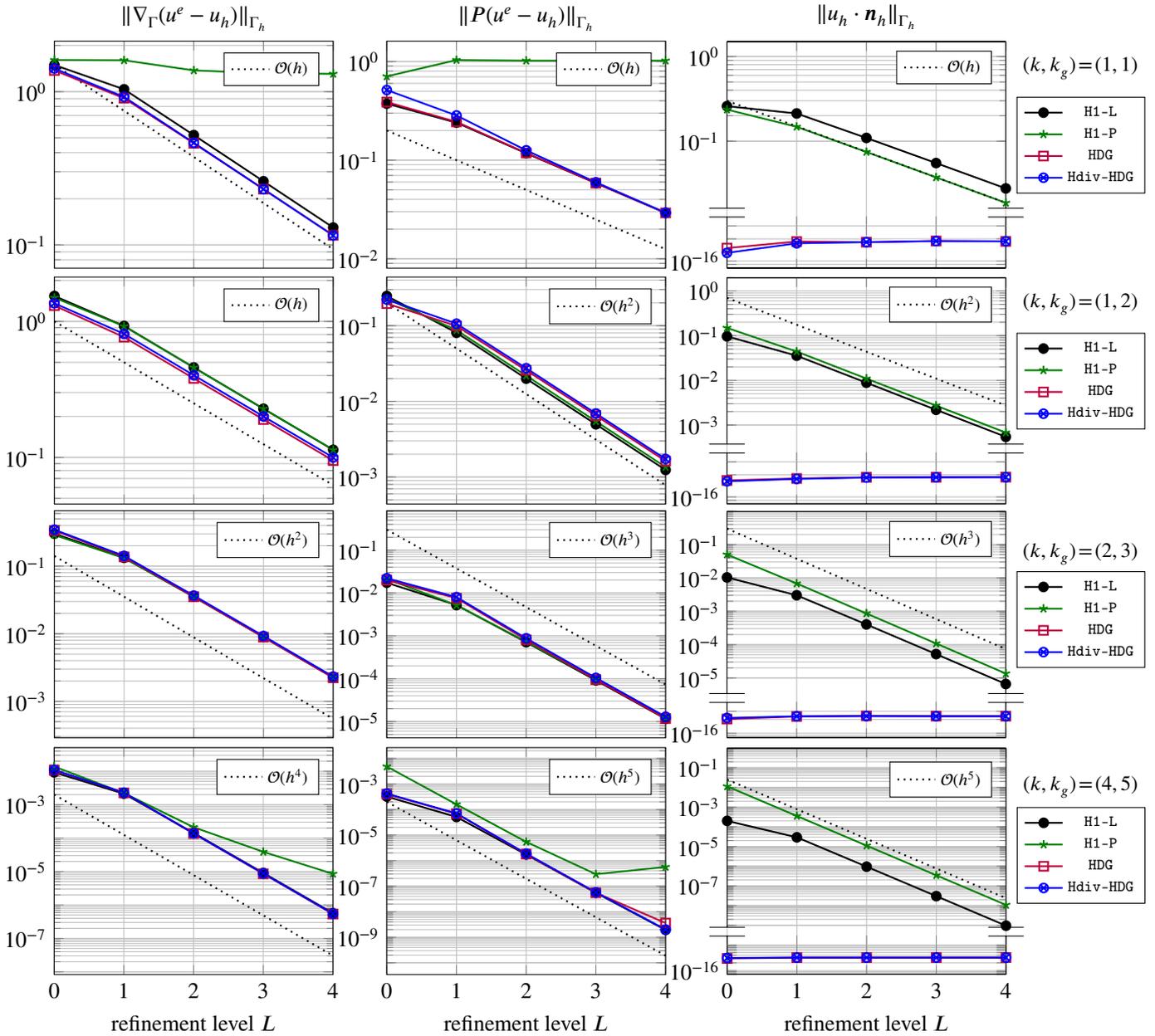
\begin{figure}[b!]
%%%%%%%%%%%%%%%%%%%%%%%%%%%%%%%%%%%%%%%%%%%%%%%%%%%%%%%%%%%%%%%%%%%%%%%%  
%%%%%%%%%%%%%%%%%%%%%%%%%%%%%%%%%% P1P1 %%%%%%%%%%%%%%%%%%%%%%%%%%%%%%%%  
%%%%%%%%%%%%%%%%%%%%%%%%%%%%%%%%%%%%%%%%%%%%%%%%%%%%%%%%%%%%%%%%%%%%%%%%  
\vspace*{-0.6cm}
\begin{center}
\hspace*{-0.8cm}
\begin{tikzpicture}
  \begin{semilogyaxis}[
    grid=both,
    axis x line=box,
    axis line style={-},
    width=6cm,
    xmin=0, xmax=4,
    xticklabels={},
    % xlabel=$L$,
    title={$\Vert \nabla_\Gamma (u^e - u_h) \Vert_{\Gamma_h}$},
    title style={at={(axis description cs:0.5,0.95)},anchor=south},
    legend style={at={(0.95,0.95)}, fill=white, draw=black},
    ]

    \addplot[dotted,thick] table[x index =0,
    y expr={1.5*2^(-1*\thisrowno{0})}]{data/4_1_2_VecLap_sphere/comparison_H1Error_Pk1_Pg1.dat};
    \addplot[draw=black, thick, mark=*] table[x index=0, y index=5]
    {data/4_1_2_VecLap_sphere/comparison_H1Error_Pk1_Pg1.dat}; 
    \addplot[draw=green!50!black, thick, mark=star] table[x index=0, y index=6]
    {data/4_1_2_VecLap_sphere/comparison_H1Error_Pk1_Pg1.dat}; 
    \addplot[draw=purple,thick, mark=square] table[x index=0, y index=3]
    {data/4_1_2_VecLap_sphere/comparison_H1Error_Pk1_Pg1.dat}; 
    \addplot[draw=blue, thick, mark=otimes] table[x index=0, y index=4]
    {data/4_1_2_VecLap_sphere/comparison_H1Error_Pk1_Pg1.dat};
    \legend{
      \footnotesize $\mathcal{O}(h)$ \\
    }
\end{semilogyaxis}
\end{tikzpicture}
\hspace*{-0.35cm}
\begin{tikzpicture}
  \begin{semilogyaxis}[
    grid=both,
    axis x line=box,
    axis line style={-},
    width=6cm,
    xmin=0, xmax=4,
    % xlabel=$L$,
    xticklabels={},
    title={$\Vert P (u^e - u_h) \Vert_{\Gamma_h}$},
    title style={at={(axis description cs:0.5,0.95)},anchor=south},
    legend style={at={(0.95,0.95)}, fill=white, draw=black},
    ]
    \addplot[dotted,thick] table[x index =0,
    y expr={0.2*2^(-1*\thisrowno{0})}]{data/4_1_2_VecLap_sphere/comparison_L2Error_tang_Pk1_Pg1.dat};
    \addplot[draw=black, thick, mark=*] table[x index=0, y index=5]
    {data/4_1_2_VecLap_sphere/comparison_L2Error_tang_Pk1_Pg1.dat}; 
    \addplot[draw=green!50!black, thick, mark=star] table[x index=0, y index=6]
    {data/4_1_2_VecLap_sphere/comparison_L2Error_tang_Pk1_Pg1.dat}; 
    \addplot[draw=purple,thick, mark=square] table[x index=0, y index=3]
    {data/4_1_2_VecLap_sphere/comparison_L2Error_tang_Pk1_Pg1.dat}; 
    \addplot[draw=blue, thick, mark=otimes] table[x index=0, y index=4]
    {data/4_1_2_VecLap_sphere/comparison_L2Error_tang_Pk1_Pg1.dat};
    \legend{
      \footnotesize $\mathcal{O}(h)$ \\
    }
\end{semilogyaxis}
\end{tikzpicture}
\hspace*{-0.35cm}
\begin{tikzpicture}
\begin{groupplot}[
    group style={
        group name=my fancy plots,
        group size=1 by 2,
        xticklabels at=edge bottom,
        vertical sep=0pt
    },
    width=6cm,
    xmin=0, xmax=4,
]
\nextgroupplot[ymode=log, ymin=1e-2,ymax=1.5e0,
      grid=both,
      axis x line=box,
      separate axis lines,
      every outer x axis line/.append style={-,draw=none},
      every outer y axis line/.append style={-},
      ytick={1e-1,1e0},
      xtick={},
      axis y discontinuity=parallel, %crunch, %parallel, crunch,
      height=4.5cm,
      legend style={at={(1.5,0.7)}},
      title={$\Vert u_h \cdot \n_h \Vert_{\Gamma_h}$},
      title style={at={(axis description cs:0.5,0.95)},anchor=south},
      ]
      \addplot[draw=black, thick, mark=*] table[x index=0, y index=5]
      {data/4_1_2_VecLap_sphere/comparison_L2Error_normal_Pk1_Pg1.dat}; 
      \addplot[draw=green!50!black, thick, mark=star] table[x index=0, y index=6]
      {data/4_1_2_VecLap_sphere/comparison_L2Error_normal_Pk1_Pg1.dat}; 
      \addplot[draw=purple,thick, mark=square] table[x index=0, y index=3]
      {data/4_1_2_VecLap_sphere/comparison_L2Error_normal_Pk1_Pg1.dat}; 
      \addplot[draw=blue, thick, mark=otimes] table[x index=0, y index=4]
      {data/4_1_2_VecLap_sphere/comparison_L2Error_normal_Pk1_Pg1.dat};
      \addplot[dotted,thick] table[x index =0,
      y expr={0.3*2^(-1*\thisrowno{0})}]{data/4_1_2_VecLap_sphere/comparison_L2Error_normal_Pk1_Pg1.dat};
      \legend{
        \footnotesize \texttt{H1-L} \\
        \footnotesize \texttt{H1-P} \\
        \footnotesize \texttt{HDG}  \\
        \footnotesize \texttt{Hdiv-HDG} \\
      }
      \draw[thick] (axis description cs:0,1) -- (axis description cs:1,1); 

\nextgroupplot[ymode=log, ymin=8e-17,ymax=3e-16, %ymin=0,ymax=5,
      grid=both,
      axis x line=box,
      separate axis lines,
      every outer x axis line/.append style={-,draw=none},
      every outer y axis line/.append style={-},
      ytick={9e-17,1e-16,2e-16,3e-16},
      yticklabels={,$10^{-16}$,,},
      xticklabels={},
      height=2.25cm,
      legend style={at={(0.95,5.15)}, fill=white, draw=black},
      ]
      \addplot[dotted,thick] table[x index =0,
      y expr={0.3*2^(-1*\thisrowno{0})}]{data/4_1_2_VecLap_sphere/comparison_L2Error_normal_Pk1_Pg1.dat};
      \addplot[draw=purple,thick, mark=square] table[x index=0, y index=3]
      {data/4_1_2_VecLap_sphere/comparison_L2Error_normal_Pk1_Pg1.dat}; 
      \addplot[draw=blue, thick, mark=otimes] table[x index=0, y index=4]
      {data/4_1_2_VecLap_sphere/comparison_L2Error_normal_Pk1_Pg1.dat};
      \legend{
        \footnotesize $\mathcal{O}(h)$ \\
      }
      \draw[thick] (axis description cs:0,0) -- (axis description cs:1,0); 

\end{groupplot}
\node[scale=0.925] at (5.6,2.5) { $(k,k_g)\!=\!(1,1)$};
\end{tikzpicture}\hspace*{-0.5cm}
\end{center}
\vspace*{-0.85cm}
%%%%%%%%%%%%%%%%%%%%%%%%%%%%%%%%%%%%%%%%%%%%%%%%%%%%%%%%%%%%%%%%%%%%%%%%  
%%%%%%%%%%%%%%%%%%%%%%%%%%%%%%%%%% P1P2*%%%%%%%%%%%%%%%%%%%%%%%%%%%%%%%%  
%%%%%%%%%%%%%%%%%%%%%%%%%%%%%%%%%%%%%%%%%%%%%%%%%%%%%%%%%%%%%%%%%%%%%%%%  
\begin{center}
\hspace*{-0.8cm}
\begin{tikzpicture}
  \begin{semilogyaxis}[
    grid=both,
    axis x line=box,
    axis line style={-},
    width=6cm,
    xmin=0, xmax=4,
    xticklabels={},
    % xlabel=$L$,
    % title={$\Vert \nabla_\Gamma (u^e - u_h) \Vert_{\Gamma_h}$},
    legend style={at={(0.95,0.95)}, fill=white, draw=black},
    ]

    \addplot[dotted,thick] table[x index =0,
    y expr={1*2^(-1*\thisrowno{0})}]{data/4_1_2_VecLap_sphere/comparison_H1Error_Pk1_Pg2.dat};
    \addplot[draw=black, thick, mark=*] table[x index=0, y index=5]
    {data/4_1_2_VecLap_sphere/comparison_H1Error_Pk1_Pg2.dat}; 
    \addplot[draw=green!50!black, thick, mark=star] table[x index=0, y index=6]
    {data/4_1_2_VecLap_sphere/comparison_H1Error_Pk1_Pg2.dat}; 
    \addplot[draw=purple,thick, mark=square] table[x index=0, y index=3]
    {data/4_1_2_VecLap_sphere/comparison_H1Error_Pk1_Pg2.dat}; 
    \addplot[draw=blue, thick, mark=otimes] table[x index=0, y index=4]
    {data/4_1_2_VecLap_sphere/comparison_H1Error_Pk1_Pg2.dat};
    \legend{
      \footnotesize $\mathcal{O}(h)$ \\
    }
\end{semilogyaxis}
\end{tikzpicture}
\hspace*{-0.35cm}
\begin{tikzpicture}
  \begin{semilogyaxis}[
    grid=both,
    axis x line=box,
    axis line style={-},
    width=6cm,
    xmin=0, xmax=4,
    % xlabel=$L$,
    xticklabels={},
    legend style={at={(0.95,0.95)}, fill=white, draw=black},
    ]
    \addplot[dotted,thick] table[x index =0,
    y expr={0.2*2^(-2*\thisrowno{0})}]{data/4_1_2_VecLap_sphere/comparison_L2Error_tang_Pk1_Pg2.dat};
    \addplot[draw=black, thick, mark=*] table[x index=0, y index=5]
    {data/4_1_2_VecLap_sphere/comparison_L2Error_tang_Pk1_Pg2.dat}; 
    \addplot[draw=green!50!black, thick, mark=star] table[x index=0, y index=6]
    {data/4_1_2_VecLap_sphere/comparison_L2Error_tang_Pk1_Pg2.dat}; 
    \addplot[draw=purple,thick, mark=square] table[x index=0, y index=3]
    {data/4_1_2_VecLap_sphere/comparison_L2Error_tang_Pk1_Pg2.dat}; 
    \addplot[draw=blue, thick, mark=otimes] table[x index=0, y index=4]
    {data/4_1_2_VecLap_sphere/comparison_L2Error_tang_Pk1_Pg2.dat};
    \legend{
      \footnotesize $\mathcal{O}(h^2)$ \\
    }
\end{semilogyaxis}
\end{tikzpicture}
\hspace*{-0.35cm}
\begin{tikzpicture}
\begin{groupplot}[
    group style={
        group name=my fancy plots,
        group size=1 by 2,
        xticklabels at=edge bottom,
        vertical sep=0pt
    },
    width=6cm,
    xmin=0, xmax=4,
]
\nextgroupplot[ymode=log, ymin=1.5e-4,ymax=2e0,
      grid=both,
      axis x line=box,
      separate axis lines,
      every outer x axis line/.append style={-,draw=none},
      every outer y axis line/.append style={-},
      ytick={1e-3,1e-2,1e-1,1e0},
      xtick={},
      axis y discontinuity=parallel, %crunch, %parallel, crunch,
      height=4.5cm,
      legend style={at={(1.5,0.7)}},
      ]
      \addplot[draw=black, thick, mark=*] table[x index=0, y index=5]
      {data/4_1_2_VecLap_sphere/comparison_L2Error_normal_Pk1_Pg2.dat}; 
      \addplot[draw=green!50!black, thick, mark=star] table[x index=0, y index=6]
      {data/4_1_2_VecLap_sphere/comparison_L2Error_normal_Pk1_Pg2.dat}; 
      \addplot[draw=purple,thick, mark=square] table[x index=0, y index=3]
      {data/4_1_2_VecLap_sphere/comparison_L2Error_normal_Pk1_Pg2.dat}; 
      \addplot[draw=blue, thick, mark=otimes] table[x index=0, y index=4]
      {data/4_1_2_VecLap_sphere/comparison_L2Error_normal_Pk1_Pg2.dat};
      \addplot[dotted,thick] table[x index =0,
      y expr={0.7*2^(-2*\thisrowno{0})}]{data/4_1_2_VecLap_sphere/comparison_L2Error_normal_Pk1_Pg2.dat};
      \legend{
        \footnotesize \texttt{H1-L} \\
        \footnotesize \texttt{H1-P} \\
        \footnotesize \texttt{HDG}  \\
        \footnotesize \texttt{Hdiv-HDG} \\
      }
      \draw[thick] (axis description cs:0,1) -- (axis description cs:1,1); 

\nextgroupplot[ymode=log, ymin=8e-17,ymax=3e-16, %ymin=0,ymax=5,
      grid=both,
      axis x line=box,
      separate axis lines,
      every outer x axis line/.append style={-,draw=none},
      every outer y axis line/.append style={-},
      ytick={9e-17,1e-16,2e-16,3e-16},
      yticklabels={,$10^{-16}$,,},
      xticklabels={},
      height=2.25cm,
      legend style={at={(0.95,5.15)}, fill=white, draw=black},
      ]
      \addplot[dotted,thick] table[x index =0,
      y expr={0.3*2^(-2*\thisrowno{0})}]{data/4_1_2_VecLap_sphere/comparison_L2Error_normal_Pk1_Pg2.dat};
      \addplot[draw=purple,thick, mark=square] table[x index=0, y index=3]
      {data/4_1_2_VecLap_sphere/comparison_L2Error_normal_Pk1_Pg2.dat}; 
      \addplot[draw=blue, thick, mark=otimes] table[x index=0, y index=4]
      {data/4_1_2_VecLap_sphere/comparison_L2Error_normal_Pk1_Pg2.dat};
      \legend{
        \footnotesize $\mathcal{O}(h^2)$ \\
      }
      \draw[thick] (axis description cs:0,0) -- (axis description cs:1,0); 

\end{groupplot}
\node[scale=0.925] at (5.6,2.5) { $(k,k_g)\!=\!(1,2)$};
\end{tikzpicture}\hspace*{-0.5cm}
\end{center}
\vspace*{-0.95cm}
%%%%%%%%%%%%%%%%%%%%%%%%%%%%%%%%%%%%%%%%%%%%%%%%%%%%%%%%%%%%%%%%%%%%%%%%  
%%%%%%%%%%%%%%%%%%%%%%%%%%%%%%%%%% P2P3*%%%%%%%%%%%%%%%%%%%%%%%%%%%%%%%%  
%%%%%%%%%%%%%%%%%%%%%%%%%%%%%%%%%%%%%%%%%%%%%%%%%%%%%%%%%%%%%%%%%%%%%%%%  
\begin{center}
\hspace*{-0.8cm}
\begin{tikzpicture}
  \begin{semilogyaxis}[
    grid=both,
    axis x line=box,
    axis line style={-},
    width=6cm,
    xmin=0, xmax=4,
    xticklabels={},
    % xlabel=$L$,
    % title={$\Vert \nabla_\Gamma (u^e - u_h) \Vert_{\Gamma_h}$},
    legend style={at={(0.95,0.95)}, fill=white, draw=black},
    ]

    \addplot[dotted,thick] table[x index =0,
    y expr={0.141*2^(-2*\thisrowno{0})}]{data/4_1_2_VecLap_sphere/comparison_H1Error_Pk2_Pg3.dat};
    \addplot[draw=black, thick, mark=*] table[x index=0, y index=5]
    {data/4_1_2_VecLap_sphere/comparison_H1Error_Pk2_Pg3.dat}; 
    \addplot[draw=green!50!black, thick, mark=star] table[x index=0, y index=6]
    {data/4_1_2_VecLap_sphere/comparison_H1Error_Pk2_Pg3.dat}; 
    \addplot[draw=purple,thick, mark=square] table[x index=0, y index=3]
    {data/4_1_2_VecLap_sphere/comparison_H1Error_Pk2_Pg3.dat}; 
    \addplot[draw=blue, thick, mark=otimes] table[x index=0, y index=4]
    {data/4_1_2_VecLap_sphere/comparison_H1Error_Pk2_Pg3.dat};
    \legend{
      \footnotesize $\mathcal{O}(h^2)$ \\
    }
\end{semilogyaxis}
\end{tikzpicture}
\hspace*{-0.35cm}
\begin{tikzpicture}
  \begin{semilogyaxis}[
    grid=both,
    axis x line=box,
    axis line style={-},
    width=6cm,
    xmin=0, xmax=4,
    % xlabel=$L$,
    ytick={1e-6,1e-5,1e-4,1e-3,1e-2,1e-1,1e0},
    xticklabels={},
    legend style={at={(0.95,0.95)}, fill=white, draw=black},
    ]
    \addplot[dotted,thick] table[x index =0,
    y expr={0.3*2^(-3*\thisrowno{0})}]{data/4_1_2_VecLap_sphere/comparison_L2Error_tang_Pk2_Pg3.dat};
    \addplot[draw=black, thick, mark=*] table[x index=0, y index=5]
    {data/4_1_2_VecLap_sphere/comparison_L2Error_tang_Pk2_Pg3.dat}; 
    \addplot[draw=green!50!black, thick, mark=star] table[x index=0, y index=6]
    {data/4_1_2_VecLap_sphere/comparison_L2Error_tang_Pk2_Pg3.dat}; 
    \addplot[draw=purple,thick, mark=square] table[x index=0, y index=3]
    {data/4_1_2_VecLap_sphere/comparison_L2Error_tang_Pk2_Pg3.dat}; 
    \addplot[draw=blue, thick, mark=otimes] table[x index=0, y index=4]
    {data/4_1_2_VecLap_sphere/comparison_L2Error_tang_Pk2_Pg3.dat};
    \legend{
      \footnotesize $\mathcal{O}(h^3)$ \\
    }
\end{semilogyaxis}
\end{tikzpicture}
\hspace*{-0.35cm}
\begin{tikzpicture}
\begin{groupplot}[
    group style={
        group name=my fancy plots,
        group size=1 by 2,
        xticklabels at=edge bottom,
        vertical sep=0pt
    },
    width=6cm,
    xmin=0, xmax=4,
]
\nextgroupplot[ymode=log, ymin=1e-6,ymax=1e0,
      grid=both,
      axis x line=box,
      separate axis lines,
      every outer x axis line/.append style={-,draw=none},
      every outer y axis line/.append style={-},
      ytick={1e-5,1e-4,1e-3,1e-2,1e-1,1e0},
      yticklabels={$10^{-5}$,$10^{-4}$,$10^{-3}$,$10^{-2}$,$10^{-1}$},
      xtick={},
      axis y discontinuity=parallel, %crunch, %parallel, crunch,
      height=4.75cm,
      legend style={at={(1.5,0.7)}},
      ]
      \addplot[draw=black, thick, mark=*] table[x index=0, y index=5]
      {data/4_1_2_VecLap_sphere/comparison_L2Error_normal_Pk2_Pg3.dat}; 
      \addplot[draw=green!50!black, thick, mark=star] table[x index=0, y index=6]
      {data/4_1_2_VecLap_sphere/comparison_L2Error_normal_Pk2_Pg3.dat}; 
      \addplot[draw=purple,thick, mark=square] table[x index=0, y index=3]
      {data/4_1_2_VecLap_sphere/comparison_L2Error_normal_Pk2_Pg3.dat}; 
      \addplot[draw=blue, thick, mark=otimes] table[x index=0, y index=4]
      {data/4_1_2_VecLap_sphere/comparison_L2Error_normal_Pk2_Pg3.dat};
      \addplot[dotted,thick] table[x index =0,
      y expr={0.3*2^(-3*\thisrowno{0})}]{data/4_1_2_VecLap_sphere/comparison_L2Error_normal_Pk2_Pg3.dat};
      \legend{
        \footnotesize \texttt{H1-L} \\
        \footnotesize \texttt{H1-P} \\
        \footnotesize \texttt{HDG}  \\
        \footnotesize \texttt{Hdiv-HDG} \\
      }
      \draw[thick] (axis description cs:0,1) -- (axis description cs:1,1); 

\nextgroupplot[ymode=log, ymin=8e-17,ymax=3e-16, %ymin=0,ymax=5,
      grid=both,
      axis x line=box,
      separate axis lines,
      every outer x axis line/.append style={-,draw=none},
      every outer y axis line/.append style={-},
      ytick={9e-17,1e-16,2e-16,3e-16},
      yticklabels={,$10^{-16}$,,},
      xticklabels={},
      height=2.cm,
      legend style={at={(0.95,8.2)}, fill=white, draw=black},
      ]
      \addplot[dotted,thick] table[x index =0,
      y expr={0.3*2^(-3*\thisrowno{0})}]{data/4_1_2_VecLap_sphere/comparison_L2Error_normal_Pk2_Pg3.dat};
      \addplot[draw=purple,thick, mark=square] table[x index=0, y index=3]
      {data/4_1_2_VecLap_sphere/comparison_L2Error_normal_Pk2_Pg3.dat}; 
      \addplot[draw=blue, thick, mark=otimes] table[x index=0, y index=4]
      {data/4_1_2_VecLap_sphere/comparison_L2Error_normal_Pk2_Pg3.dat};
      \legend{
        \footnotesize $\mathcal{O}(h^3)$ \\
      }
      \draw[thick] (axis description cs:0,0) -- (axis description cs:1,0); 

\end{groupplot}
\node[scale=0.925] at (5.6,2.5) { $(k,k_g)\!=\!(2,3)$};
\end{tikzpicture}\hspace*{-0.5cm}
\end{center}
\vspace*{-0.9cm}
%%%%%%%%%%%%%%%%%%%%%%%%%%%%%%%%%%%%%%%%%%%%%%%%%%%%%%%%%%%%%%%%%%%%%%%%  
%%%%%%%%%%%%%%%%%%%%%%%%%%%%%%%%%% P4P5*%%%%%%%%%%%%%%%%%%%%%%%%%%%%%%%%  
%%%%%%%%%%%%%%%%%%%%%%%%%%%%%%%%%%%%%%%%%%%%%%%%%%%%%%%%%%%%%%%%%%%%%%%%  
\begin{center}
\hspace*{-0.8cm}
\begin{tikzpicture}
  \begin{semilogyaxis}[
    grid=both,
    axis x line=box,
    axis line style={-},
    width=6cm,
    xmin=0, xmax=4,
    ytick={1e-8,1e-7,1e-6,1e-5,1e-4,1e-3,1e-2,1e-1,1e0},
    yticklabels={,$10^{-7}$,,$10^{-5}$,,$10^{-3}$,,$10^{-1}$,},
    xlabel=refinement level $L$,
    xtick={0,1,2,3,4},
    legend style={at={(0.95,0.95)}, fill=white, draw=black},
    ]

    \addplot[dotted,thick] table[x index =0,
    y expr={2e-3*2^(-4*\thisrowno{0})}]{data/4_1_2_VecLap_sphere/comparison_H1Error_Pk4_Pg5.dat};
    \addplot[draw=black, thick, mark=*] table[x index=0, y index=5]
    {data/4_1_2_VecLap_sphere/comparison_H1Error_Pk4_Pg5.dat}; 
    \addplot[draw=green!50!black, thick, mark=star] table[x index=0, y index=6]
    {data/4_1_2_VecLap_sphere/comparison_H1Error_Pk4_Pg5.dat}; 
    \addplot[draw=purple,thick, mark=square] table[x index=0, y index=3]
    {data/4_1_2_VecLap_sphere/comparison_H1Error_Pk4_Pg5.dat}; 
    \addplot[draw=blue, thick, mark=otimes] table[x index=0, y index=4]
    {data/4_1_2_VecLap_sphere/comparison_H1Error_Pk4_Pg5.dat};
    \legend{
      \footnotesize $\mathcal{O}(h^4)$ \\
    }
\end{semilogyaxis}
\end{tikzpicture}
\hspace*{-0.435cm}
\begin{tikzpicture}
  \begin{semilogyaxis}[
    grid=both,
    axis x line=box,
    axis line style={-},
    width=6cm,
    xmin=0, xmax=4,
    xlabel=refinement level $L$,
    ytick={1e-10,1e-9,1e-8,1e-7,1e-6,1e-5,1e-4,1e-3,1e-2,1e-1},
    yticklabels={,$10^{-9}$,,$10^{-7}$,,$10^{-5}$,,$10^{-3}$,,$10^{-1}$},
    xtick={0,1,2,3,4},
    legend style={at={(0.95,0.95)}, fill=white, draw=black},
    ]
    \addplot[dotted,thick] table[x index =0,
    y expr={2e-4*2^(-5*\thisrowno{0})}]{data/4_1_2_VecLap_sphere/comparison_L2Error_tang_Pk4_Pg5.dat};
    \addplot[draw=black, thick, mark=*] table[x index=0, y index=5]
    {data/4_1_2_VecLap_sphere/comparison_L2Error_tang_Pk4_Pg5.dat}; 
    \addplot[draw=green!50!black, thick, mark=star] table[x index=0, y index=6]
    {data/4_1_2_VecLap_sphere/comparison_L2Error_tang_Pk4_Pg5.dat}; 
    \addplot[draw=purple,thick, mark=square] table[x index=0, y index=3]
    {data/4_1_2_VecLap_sphere/comparison_L2Error_tang_Pk4_Pg5.dat}; 
    \addplot[draw=blue, thick, mark=otimes] table[x index=0, y index=4]
    {data/4_1_2_VecLap_sphere/comparison_L2Error_tang_Pk4_Pg5.dat};
    \legend{
      \footnotesize $\mathcal{O}(h^5)$ \\
    }
\end{semilogyaxis}
\end{tikzpicture}
\hspace*{-0.435cm}
\begin{tikzpicture}
\begin{groupplot}[
    group style={
        group name=my fancy plots,
        group size=1 by 2,
        xticklabels at=edge bottom,
        vertical sep=0pt
    },
    width=6cm,
    xmin=0, xmax=4,
]
\nextgroupplot[ymode=log, ymin=1e-10,ymax=1e0,
      grid=both,
      axis x line=box,
      separate axis lines,
      every outer x axis line/.append style={-,draw=none},
      every outer y axis line/.append style={-},
      ytick={1e-9,1e-8,1e-7,1e-6,1e-5,1e-4,1e-3,1e-2,1e-1,1e0},
      yticklabels={,,$10^{-7}$,,$10^{-5}$,,$10^{-3}$,,$10^{-1}$,},
      xtick={},
      axis y discontinuity=parallel, %crunch, %parallel, crunch,
      height=4.7cm,
      legend style={at={(1.5,0.7)}},
      ]
      \addplot[draw=black, thick, mark=*] table[x index=0, y index=5]
      {data/4_1_2_VecLap_sphere/comparison_L2Error_normal_Pk4_Pg5.dat}; 
      \addplot[draw=green!50!black, thick, mark=star] table[x index=0, y index=6]
      {data/4_1_2_VecLap_sphere/comparison_L2Error_normal_Pk4_Pg5.dat}; 
      \addplot[draw=purple,thick, mark=square] table[x index=0, y index=3]
      {data/4_1_2_VecLap_sphere/comparison_L2Error_normal_Pk4_Pg5.dat}; 
      \addplot[draw=blue, thick, mark=otimes] table[x index=0, y index=4]
      {data/4_1_2_VecLap_sphere/comparison_L2Error_normal_Pk4_Pg5.dat};
      \addplot[dotted,thick] table[x index =0,
      y expr={0.025*2^(-5*\thisrowno{0})}]{data/4_1_2_VecLap_sphere/comparison_L2Error_normal_Pk4_Pg5.dat};
      \legend{
        \footnotesize \texttt{H1-L} \\
        \footnotesize \texttt{H1-P} \\
        \footnotesize \texttt{HDG}  \\
        \footnotesize \texttt{Hdiv-HDG} \\
      }
      \draw[thick] (axis description cs:0,1) -- (axis description cs:1,1); 

\nextgroupplot[ymode=log, ymin=7e-17,ymax=1e-15, %ymin=0,ymax=5,
      grid=both,
      axis x line=box,
      separate axis lines,
      every outer x axis line/.append style={-,draw=none},
      every outer y axis line/.append style={-},
      ytick={7e-17,8e-17,9e-17,1e-16,2e-16,3e-16,4e-16,5e-16,6e-16,7e-16,8e-16,9e-16,1e-15},
      yticklabels={,,,$10^{-16}$,,,,,,,,,},
      xlabel=refinement level $L$,
      xtick={0,1,2,3,4},
      height=2.05cm,
      legend style={at={(0.95,7.2)}, fill=white, draw=black},
      ]
      \addplot[dotted,thick] table[x index =0,
      y expr={0.3*2^(-5*\thisrowno{0})}]{data/4_1_2_VecLap_sphere/comparison_L2Error_normal_Pk4_Pg5.dat};
      \addplot[draw=purple,thick, mark=square] table[x index=0, y index=3]
      {data/4_1_2_VecLap_sphere/comparison_L2Error_normal_Pk4_Pg5.dat}; 
      \addplot[draw=blue, thick, mark=otimes] table[x index=0, y index=4]
      {data/4_1_2_VecLap_sphere/comparison_L2Error_normal_Pk4_Pg5.dat};
      \legend{
        \footnotesize $\mathcal{O}(h^5)$ \\
      }
      \draw[thick] (axis description cs:0,0) -- (axis description cs:1,0); 

\end{groupplot}
\node[scale=0.925] at (5.6,2.5) { $(k,k_g)\!=\!(4,5)$};
\end{tikzpicture}\hspace*{-0.5cm}
\end{center}
\vspace*{-0.5cm}
\caption{Errors for Vector Laplacian on the sphere for four different discretization methods on 5 successively refined meshes (uniform refinements) for different discretization and geometry orders.}% $(k, k_g) \in \{ (1,1), (1,2), (2,3), (4,5) \}$ where $k$ is the polynomial degree for the unknown and $k_g$ the geometry order of the parametric approximation of $\Gamma$.}
\vspace*{-0.3cm}
\label{fig:numex:VecLap1}
\end{figure}
In Fig. \ref{fig:numex:VecLap1} we display the $H^1$-semi-norm, the $L^2$-norm of the tangential and the $L^2$-norm of the normal component of the error $u^e - u_h$ on $\Gamma_h$. Different configurations are considered.

In the first row we set geometry and FE approximation order to one, $k_g=k=1$.
First of all, we observe that the normal component is exactly zero only for the (H)DG methods. This is in agreement with our expectations as the $H^1$-conforming methods only implement the tangential constraint weakly whereas the (H)DG methods fulfill the constraint by construction. 
Further, we notice that the convergence rates of all methods are sub-optimal in all norms except for the $H^1$-norm. For the $H^1$-norm all methods are converging optimally with order one except for \texttt{H1-P} which is not converging at all.
This can be explained by the fact that the geometry approximation is only of first order so that the approximation of the normal is only piecewise constant. Elements surrounding the same vertex will have different normals and the penalty diminishes all corresponding normal components at the same time which results in a severe case of locking.
In the Lagrange multiplier formulation \texttt{H1-L} the constraint involves an averaging over a vertex patch which circumvents this locking effect. However, for all methods we observe that the low order approximation of the geometry results in a loss of accuracy by one order in the $L^2$ norms. This need for a higher order accurate geometry approximation is not surprising and has already been discussed in e.g. Refs. \cite{hansbo2016analysis,jankuhn2019trace}. In Fig. \ref{fig:sphere} a comparison of the pictures on the right illustrates the difference between $k_g=1$ and $k_g=2$ for $k=1$. 
In the rows two to four of Fig. \ref{fig:numex:VecLap1} we fix $k_g=k+1$, i.e. a superparametric geometry approximation, and choose $k\in \{1,2,4\}$. For all methods we now observe optimal order of convergence, i.e. $\mathcal{O}(h^k)$ in the $H^1$-semi-norms and $\mathcal{O}(h^{k+1})$ in the $L^2$-norms. For $k=4$, the results of \texttt{H1-P} are not behaving optimally on the finest meshes. This is probably due to the ill-conditioning stemming from the penalty parameter of magnitude $10 h^{-5}$. Let us further mention that we tried (for all methods) $k_g=k$ and obtained essentially the same results, i.e. in this example it seems that the superparametric approximation is not necessary as long as $k > 1$. 

Overall, we can conclude that all methods perform similarly well for this example. The (H)DG methods are obviously -- by construction -- perfect in approximating the tangential constraint, but for the other error measures there is no  significant difference.

\paragraph{Computational aspects}
In this paragraph we want to discuss and compare computational aspects of the different discretizaton methods under consideration for the Vector Laplacian.
%The numbers discussed here are obtained from the example in the previous section on the finest mesh level ($L=4$).
Starting with the $H^1$-conforming methods, one immediately notices that the advantage of these methods is their simplicity. Lagrangian finite elements are available in most finite element packages and hence, a realization of these methods is comparably simple. Then again, the two-dimensional vector field is approximated with a three-dimensional vector field which can be seen as undesirable in terms of the computational overhead.
To overcome this issue we introduced tangential finite elements. These however came at the price of abandoning $H^1$-conformity which requires the implementation of weak continuity (at least in tangential direction) through the discrete variational formulation. This approach results in DG methods which come at the disadvantage of introducing more unknowns and more couplings. To alleviate these costs we also discussed the use of hybrid versions of DG methods.
\begin{table}[b]
  \vspace*{-0.5cm}
  \begin{center}
    \scalebox{0.85}{
    \begin{tabular}{r @{~~~~} r@{~~~~} r@{~~~}r@{~~~}r@{~~~}r@{~~~}r @{~~~~}r @{~~~~} r@{~~~}r@{~~~}r@{~~~}r@{~~~}r @{~~~~}r @{~~~~} r@{~~~}r@{~~~}r@{~~~}r@{~~~}r@{\hspace*{0.2cm}}}
      \toprule
      && \multicolumn{5}{c}{\texttt{dof}}
      && \multicolumn{5}{c}{\texttt{gdof}}
      && \multicolumn{5}{c}{\texttt{nze}} 
      \\
      $k$
      && 1 & 2 & 3 & 4 & 5
      && 1 & 2 & 3 & 4 & 5
      && 1 & 2 & 3 & 4 & 5
      \\
      \midrule 
      \texttt{DG}
      && \numQ{193536}   & \numQ{387072} & \numQ{645120} & \numQ{967680} & \numQ{1354752} 
      && \numQ{193536}   & \numQ{387072} & \numQ{645120} & \numQ{967680} & \numQ{1354752} 
      && \numQQ{4644864} & \numQQ{18579456}& \numQQ{51609600}& \numQQ{116121600}& \numQQ{227598336} 
      \\
      \rowcolor[gray]{0.9}
      \texttt{HDG}
      && \numQ{387072}   & \numQ{677376} & \numQ{1032192} & \numQ{1451520} & \numQ{1935360} 
      && \numQ{193536}   & \numQ{290304} & \numQ{387072} & \numQb{483840} & \numQb{580608} 
      && \numQQ{3870720} & \numQQ{8709120}& \numQQb{15482880}& \numQQb{24192000}& \numQQb{34836480}
      \\
      \texttt{Hdiv-DG}
      && \numQ{96768}   & \numQ{241920} & \numQ{451584} & \numQb{725760} & \numQb{1064448} 
      && \numQ{96768}   & \numQ{241920} & \numQ{451584} & \numQ{725760} & \numQ{1064448} 
      && \numQQ{2515968} & \numQQ{12047616}& \numQQ{36900864}& \numQQ{88300800}& \numQQ{180569088}
      \\
      \rowcolor[gray]{0.9}
      \hspace*{-0.2cm}\texttt{Hdiv-HDG}
      && \numQ{193536}   & \numQ{387072} & \numQ{645120} & \numQ{967680} & \numQ{1354752} 
      && \numQ{193536}   & \numQ{290304} & \numQ{387072} & \numQb{483840} & \numQb{580608} 
      && \numQQ{3870720} & \numQQ{8709120}& \numQQb{15482880}& \numQQb{24192000}& \numQQb{34836480}
      \\
      \texttt{H1-L}
      && \numQ{64520}    & \numQ{258056} & \numQ{580616} & \numQ{1032200} & \numQ{1612808} 
      && \numQ{64520}    & \numQ{258056} & \numQ{451592} & \numQ{645128} & \numQ{838664} 
      && \numQ{1693470}  & \numQQ{11128350}& \numQQ{29675552}& \numQQ{48109856}& \numQQ{88510496}
      \\
      \rowcolor[gray]{0.9}
      \texttt{H1-P} 
      && \numQb{48390}    & \numQb{193542} & \numQb{435462} & \numQ{774150} & \numQ{1209606} 
      && \numQb{48390}    & \numQb{193542} & \numQb{338694} & \numQb{483846} & \numQ{628998} 
      && \numQQb{1016082} & \numQQb{6677010}& \numQQ{16692498}& \numQQ{31062546}& \numQQ{49787154}
      \\%[-0.4ex]
      \bottomrule
    \end{tabular}
  }
\end{center}\vspace*{-0.5cm}
\caption{Comparison of different computational quantities (\texttt{dof}: degrees of freedom, \texttt{gdof}: global degrees of freedom that remain after static condensation (if applicable), \texttt{nze}: non-zero entries in system (Schur complement) matrix) for the different schemes on finest level. Numbers in green indicate that the corresponding method yields the best values in the current column.}
\label{tab:compasp}
\end{table}
In Table \ref{tab:compasp} we compare the six different methods introduced before for polynomial orders $1$ to $5$ on the finest mesh $L=4$ of the previous example (Vector Laplacian on the sphere). While the $H^1$-conforming and the HDG methods are exactly those investigated in the previous paragraph, the methods \texttt{DG} and \texttt{Hdiv-DG} are only considered with respect to their computational costs, here. For the HDG methods and the $H^1$-conforming methods we apply static condensation in an element-by-element fashion. 
As measures for the computational costs we take the number of degrees of freedom (\texttt{dof}), the number of global \texttt{dof} that remain in the Schur complement after static condensation (\texttt{gdof}) and the number of non-zero entries (\texttt{nze}) in the Schur complement. 

Let us first take a look at the \texttt{dof} measure. Here, the \texttt{H1-P} method has the smallest number of \texttt{dof} in the low order case $k \leq 3$. The additional \texttt{dof} for approximating a three-dimensional vector field are still less than those obtained from approximating a two-dimensional vector field with discontinuous piecewise polynomials. Only for $k \geq 4$ the \texttt{Hdiv-DG} method with normal-continuity results in less \texttt{dof}. When normal-continuity is also broken up, i.e. when going to \texttt{DG} it requires at least order $k \geq 6$ to beat \texttt{H1-P} in terms of unknowns. When further going to HDG methods there is obviously no advantage over DG methods in terms of \texttt{dof} as only additional unknowns are introduced. As we can apply static condensation with HDG methods it is worth taking a look at those \texttt{dof} that remain in the Schur complement.
%When coming from \texttt{Hdiv-DG} the moments for normal continuity are replaced with the same amount of Lagrangian multiplier for the weak normal continuity.
The number of \texttt{gdof} can be reduced for \texttt{DG} for all polynomial degrees whereas the step from \texttt{Hdiv-DG} to \texttt{Hdiv-HDG} pays off in terms of \texttt{gdof} for $k \geq 3$.
We observe that \texttt{gdof} is the same for \texttt{HDG} and \texttt{Hdiv-HDG}.
When considering \texttt{nze} the picture shifts even further towards \texttt{HDG} and \texttt{Hdiv-HDG}. For $k=2$ already the methods \texttt{HDG} and \texttt{Hdiv-HDG} outperform the other two DG methods and for $k \geq 3$ the method generates even less unknowns than the $H^1$-conforming methods. We conclude that the hybrid formulations with tangential fields are benefitial not only because of their additional structure properties, but also computationally advantageous when going for higher order discretizations.

\begin{remark}[HDG superconvergence]
  Let us note that we did not consider further tweaks of the HDG methods related to the aspect of superconvergence,
cf. the paragraph on efficiency aspects and superconvergence in Section~\ref{sec::DGvsHDG}. With a corresponding modification the polynomial degree on the facets can be reduced by one order while keeping the same order of accuracy (for diffusion dominated problems), i.e. the costs in terms of \texttt{gdof} and \texttt{nze} for a method with accuracy $k$ (in the $H^1$-norm) are the same of those HDG methods without this modification with order $k - 1$. In the lowest order case we obtatin \texttt{gdof}(\texttt{HDG})=\texttt{gdof}(\texttt{Hdiv-HDG})=\numQ{96768}~
and 
\texttt{nze}(\texttt{HDG})=\texttt{nze}(\texttt{Hdiv-HDG})=\numQQ{967680}~ and hence, the HDG schemes are level with \texttt{H1-P} in terms of \texttt{gdof}, but already more efficient in terms of \texttt{nze}.
\end{remark}

%% nv = 1/2 ne
%% nf = 3 nv
%% nf = 3/2 ne

\begin{figure}[h]
  \vspace*{-0.5cm}
  \begin{center}
    \includegraphics[width=.6\textwidth,trim= 2.75cm 8cm 3cm 3.25cm, clip=True]{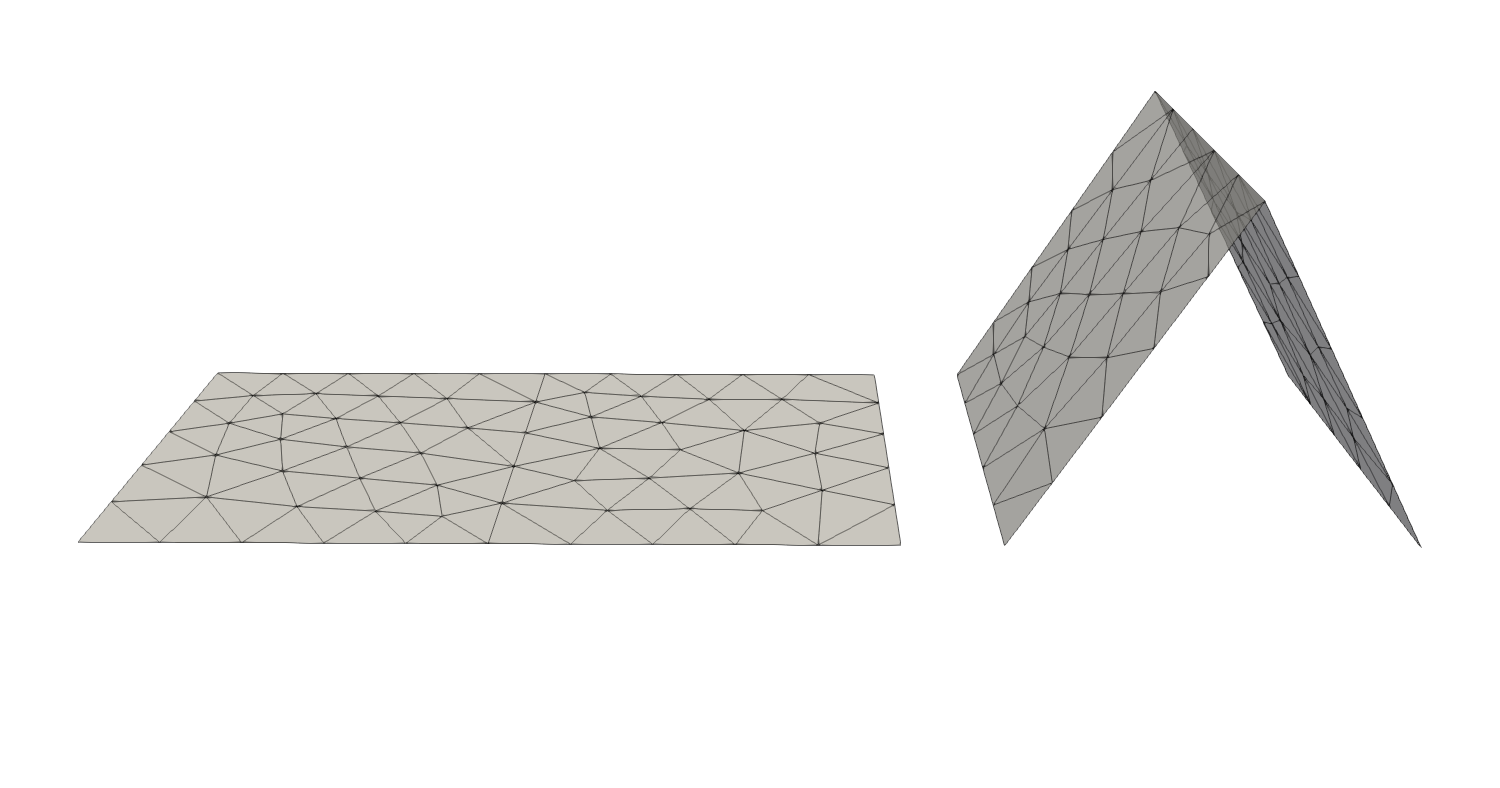}
  \end{center}
  \vspace*{-0.5cm}
  \caption{Geometries $\hat \Gamma$ (left) and $\Gamma$ (right) and coarsest mesh ($L=0$) for the examples in Section \ref{sec::houseofcards}.}
  \vspace*{-1cm}
  \label{fig:houseofcards}
\end{figure}

\subsubsection{Vector Laplacian in the plane and on a house of cards} \label{sec::houseofcards}
In this example we consider two configurations. First, we consider a flat domain case with
$\hat{\Gamma} := (0,2) \times (0,1)$  where we pose the Vector-Laplacian problem
\eqref{prob:veclap} and replace \eqref{prob:veclap-b} with inhomogeneous Dirichlet boundary conditions $u = g$ on $\partial \Gamma$.
As a second example we fold the geometry in the middle and lift it up to a house of two cards:
$$
\Gamma = \Phi_H(\hat \Gamma) \quad \text{with}\quad \Phi_H(\hat x,\hat y) = \left\{
  \begin{array}{ll}
    (\hat x \cdot W,\hat y, H\cdot \hat x) & \text{ if } \hat x \leq 1, \\
    (\hat x \cdot W,\hat y, H\cdot (2-\hat x)) & \text{ if } \hat x > 1,
  \end{array}
\right.
$$
where $H$ is the height and $W$ is the width of one card with $H^2 + W^2 = 1$ so that $F_H := \Phi_H' \in \mathbb{R}^{3\times 2}$ is a length-preserving map with $F_H^T F_H = I \in \rr^{2\times 2}$, cf. Figure \ref{fig:houseofcards} for a sketch. We define with $\hat \t_1 = e_1 = (1,0)$ and $\hat \t_2 = e_2 = (0,1)$ the tangent unit vectors of $\hat \Gamma$. Then, $\t_i = F_H ~ \hat \t_i,~i=1,2$ are the tangent unit vectors for $\Gamma$ a.e., i.e. we can interprete $F_H$ --- and due to $J=1$ also the Piola transformation --- as the operator that realizes a basis transformation from one tangent space to another. 

On $\hat \Gamma$ we pose the Vector Laplacian with r.h.s. $\hat f$ and boundary data $\hat g$ such that the exact solution to \eqref{prob:veclap} on $\hat \Gamma$ is
$$
\hat u(\hat x,\hat y) = (\sin(\pi \hat x) + \cos(\pi \hat y)) ~ \hat\t_1 + (\cos(\pi \hat x) + \sin(\pi  \hat y)) ~ \hat\t_2.
$$
For the same problem on $\Gamma$ (with properly transformed data $f = F_H ~ \hat f \circ \Phi_H^{-1}$, $g = F_H ~ \hat g \circ \Phi_H^{-1}$) the (extended) solution is simply
\begin{equation} \label{eq:veclap2:solrep}
  u(x,y,z) = F_H ~ \hat u \circ \Phi_H^{-1} = (\sin(\pi \hat x) + \cos(\pi \hat y)) ~ \t_1 + (\cos(\pi \hat x) + \sin(\pi  \hat y)) ~ \t_2 \quad \text{ with } \quad(\hat{x},\hat{y}) = \Phi_H^{-1}(x,y,z) = (x/W,y).
\end{equation}
Next, we discuss how this characterization of the solution translates to the discrete level. We note that $\Phi_H$ is piecewise affine, so that the Jacobian is piecewise constant. Let $\hat u_h$ be the discrete solution to the Vector Laplace problem on $\hat \Gamma$ and $u_h = \mathcal{P} \hat u_h$. Then, we can expand \\[-3.5ex]
$$
\hat u_h = \sum_{i=1}^2 \hat u_{h,i} \hat \t_i, \qquad 
u_h := \overbrace{J^{-1}}^{=1} F_H \hat u_h \circ \Phi_H^{-1} = \sum_i \overbrace{\hat u_{h,i} \circ \Phi_H^{-1}}^{=:u_{h,i}} \overbrace{ F_H \hat{\t}_i}^{=\t_i} = \sum_i u_{h,i} \t_i.
$$
and obtain the following relation for the surface gradients: \\[-3.5ex]
$$
\nabla_{\hat \Gamma} \hat u_h = \nabla \hat u_h = \sum_{i,j=1}^2 \frac{\partial \hat u_{h,i}}{\partial \hat \t_j} \hat \t_i \otimes \hat \t_j, \qquad
% u_h := \underbrace{\frac{1}{J}}_{=1} F_H \hat u_h \circ \Phi_H^{-1} = \sum_i \underbrace{\hat u_{h,i} \circ \Phi_H^{-1}}_{=:u_{h,i}} \underbrace{ F_H \hat{\t}_i}_{=\t_i} = \sum_i u_{h,i} \t_i,
% $$
% and one easily checks that there holds
% $$
\nabla_\Gamma u_h = \sum_{i,j} \frac{\partial u_{h,i}}{\partial \t_j} \t_i \otimes \t_j =  \sum_{i,j} \frac{\partial \hat u_{h,i}}{\partial \hat \t_j} \circ \Phi_H^{-1} ~ \t_i \otimes \t_j = F_H \overbrace{ \nabla \hat{u}_h}^{\in \mathbb{R}^{2 \times 2}} F_H^T \in \mathbb{R}^{3 \times 3}.
$$
One may ask, if after applying the Piola transformation also on the test function, $u_h$ solves the discrete Vector Laplace problem on $\Gamma$ for the DG methods. One can easily check that this is true for the DG methods discussed here as there holds for instance 
$$
\nabla_\Gamma u_h \colon \nabla_\Gamma v_h = (F_H \nabla \hat u_h  F_H^T) \colon (F_H  \nabla \hat v_h  F_H^T)
= (\underbrace{F_H^T  F_H}_{=I \in \mathbb{R}^2}  \nabla \hat u_h  \underbrace{F_H^T  F_H}_{=I \in \mathbb{R}^2}) \colon \nabla \hat v_h = \nabla \hat u_h \colon \nabla \hat v_h. %\vspace*{-0.25cm}
$$
Hence, $u_h$ solves the discrete Vector Laplace problem on $\Gamma$ for the DG discretizations if $\hat u_h$ solves the discrete problem on $\hat \Gamma$. 
For the $H^1$-conforming methods this does not apply as --- for $\hat w_h$ being the discrete $H^1$-conforming solution to the Vector Laplacian on $\hat \Gamma$ --- the mapped function $w_h = \mathcal{P} \hat w_h$ will not be in $[S_h^k]^3$ for $H>0$. By construction $\hat w_h$ is a continuous (2D) vector field s.t. after the transformation with $\Phi_H$, which has a discontinuous Jacobian, $w_h$ is discontinuous and hence not included in $[S_h^k]^3$.

To illustrate this effect we consider the aforementioned methods with $k=3$ and $H \in \{0, \sqrt{3/4} \}$.
%and the inconsistency in $\eps_\Gamma'$ vanishes here because there holds $H=0$ for the Weingarten map.
In Figure \ref{fig:numex:VecLap2} we display the error behavior for the two situations on 5 consecutive meshes, starting from the mesh displayed in Figure \ref{fig:houseofcards}.
Note that the geometry is piecewise planar, so that $\Gamma_h = \Gamma$.
For $\hat \Gamma$ we observe that all considered methods behave essentially the same. The deviation between the errors is only marginal. All methods yield an exactly tangential field $u_h \cdot \n_h = 0$ as the equations for the $z$-component of the $H^1$-conforming method completely seperates from those in $x$ and $y$-direction.
When going to $\Gamma$ we observe that the $H^1$-conforming methods fail to converge. This is easily explained by the fact that the solution is discontinuous when represented in the embedding space $\mathbb{R}^3$. Considering the representation of the solution \eqref{eq:veclap2:solrep}, we notice that
% exploiting the discontinuous tangential vectors $\t_1$, $\t_2$ in the representation of the solution, we can also obtain an approximation problem with smooth fields that need to be approximated.
the Piola mapping incorporates the tangential field automatically and we observe that the convergence behaviour of the DG methods stay the same when going from $\hat{\Gamma}$ to $\Gamma$. Actually, the numbers used in the plots are identical (up to round-off errors).

\begin{figure}
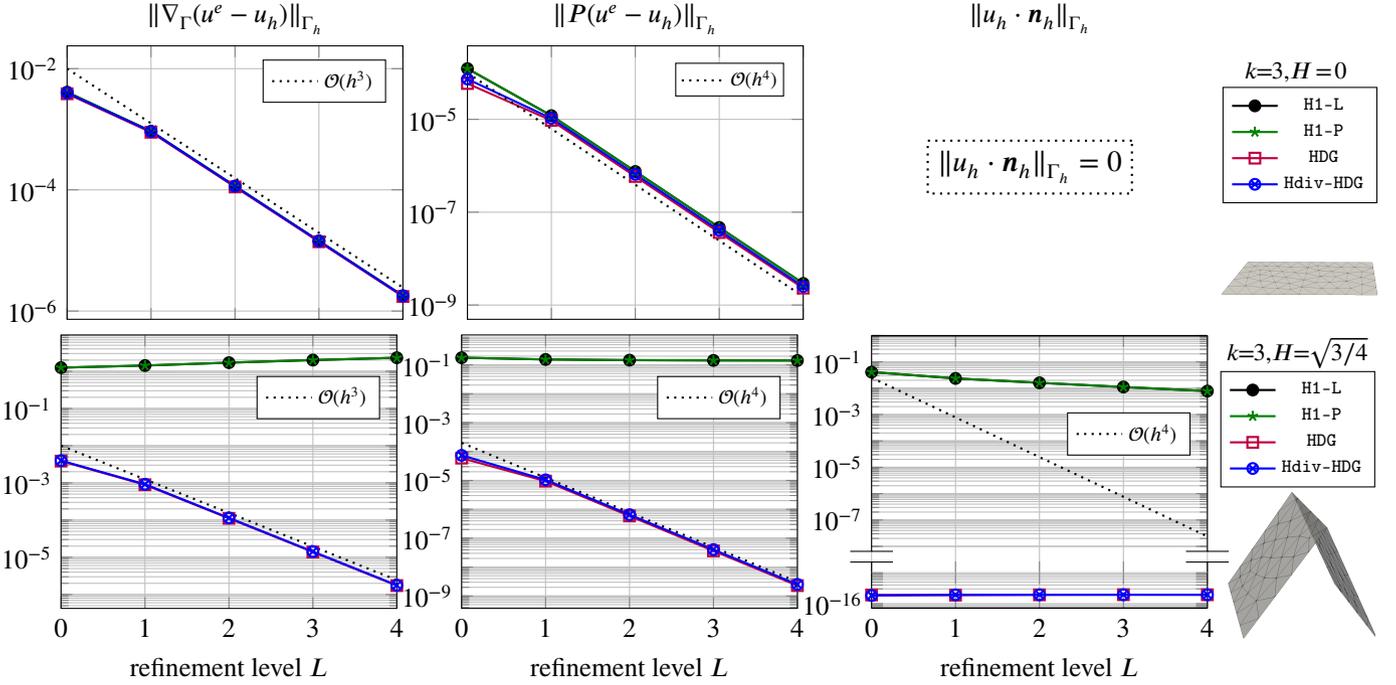

%%%%%%%%%%%%%%%%%%%%%%%%%%%%%%%%%%%%%%%%%%%%%%%%%%%%%%%%%%%%%%%%%%%%%%%%  
%%%%%%%%%%%%%%%%%%%%%%%%%%%%%%%%%% P1P1 %%%%%%%%%%%%%%%%%%%%%%%%%%%%%%%%  
%%%%%%%%%%%%%%%%%%%%%%%%%%%%%%%%%%%%%%%%%%%%%%%%%%%%%%%%%%%%%%%%%%%%%%%%  
\begin{center}
\hspace*{-0.8cm}
\begin{tikzpicture}
  \begin{semilogyaxis}[
    grid=both,
    axis x line=box,
    axis line style={-},
    width=6cm,
    xmin=0, xmax=4,
    xticklabels={},
    % xlabel=$L$,
    title={$\Vert \nabla_\Gamma (u^e - u_h) \Vert_{\Gamma_h}$},
    title style={at={(axis description cs:0.5,0.95)},anchor=south},
    legend style={at={(0.95,0.95)}, fill=white, draw=black},
    ]

    \addplot[dotted,thick] table[x index =0,
    y expr={0.01*2^(-3*\thisrowno{0})}]
    {data/4_1_3_VecLap_house_of_cards/comparison_H1Error_P3_H0.dat};
    \addplot[draw=black, thick, mark=*] table[x index=0, y index=3]
    {data/4_1_3_VecLap_house_of_cards/comparison_H1Error_P3_H0.dat}; 
    \addplot[draw=green!50!black, thick, mark=star] table[x index=0, y index=4]
    {data/4_1_3_VecLap_house_of_cards/comparison_H1Error_P3_H0.dat}; 
    \addplot[draw=purple,thick, mark=square] table[x index=0, y index=1]
    {data/4_1_3_VecLap_house_of_cards/comparison_H1Error_P3_H0.dat}; 
    \addplot[draw=blue, thick, mark=otimes] table[x index=0, y index=2]
    {data/4_1_3_VecLap_house_of_cards/comparison_H1Error_P3_H0.dat};
    \legend{
      \footnotesize $\mathcal{O}(h^3)$ \\
    }
\end{semilogyaxis}
\end{tikzpicture}
\hspace*{-0.35cm}
\begin{tikzpicture}
  \begin{semilogyaxis}[
    grid=both,
    axis x line=box,
    axis line style={-},
    width=6cm,
    xmin=0, xmax=4,
    % xlabel=$L$,
    xticklabels={},
    title={$\Vert P (u^e - u_h) \Vert_{\Gamma_h}$},
    title style={at={(axis description cs:0.5,0.95)},anchor=south},
    legend style={at={(2.71,0.85)}},
    ]
    \addplot[draw=black, thick, mark=*] table[x index=0, y index=3]
    {data/4_1_3_VecLap_house_of_cards/comparison_L2Error_tang_P3_H0.dat}; 
    \addplot[draw=green!50!black, thick, mark=star] table[x index=0, y index=4]
    {data/4_1_3_VecLap_house_of_cards/comparison_L2Error_tang_P3_H0.dat}; 
    \addplot[draw=purple,thick, mark=square] table[x index=0, y index=1]
    {data/4_1_3_VecLap_house_of_cards/comparison_L2Error_tang_P3_H0.dat}; 
    \addplot[draw=blue, thick, mark=otimes] table[x index=0, y index=2]
    {data/4_1_3_VecLap_house_of_cards/comparison_L2Error_tang_P3_H0.dat};
    \addplot[dotted,thick] table[x index =0,
    y expr={1e-4*2^(-4*\thisrowno{0})}]{data/4_1_3_VecLap_house_of_cards/comparison_L2Error_tang_P3_H0.dat};
      \legend{
        \footnotesize \texttt{H1-L} \\
        \footnotesize \texttt{H1-P} \\
        \footnotesize \texttt{HDG}  \\
        \footnotesize \texttt{Hdiv-HDG} \\
        % \footnotesize $\mathcal{O}(h^4)$ \\
      }
\end{semilogyaxis}
\node[scale=0.725,draw=black,solid,fill=white] at (3.45,3.15) { $\quad~~\quad\mathcal{O}(h^4)$};
\draw[scale=0.8,dotted,thick] (3.525,3.925) -- (4.225,3.925);

\node[scale=0.9] at (10.9,3.25) { $k\!\!=\!\!3$,$H\!=\!0$};
\node[scale=1.1,draw=black,dotted,thick] at (7.4,2) { $\Vert u_h \cdot \n_h \Vert_{\Gamma_h} = 0$};
\node[scale=1] at (7.4,3.95) { $\Vert u_h \cdot \n_h \Vert_{\Gamma_h}$};

\node[] at (11.0,0.7) {
  \includegraphics[width=.12\textwidth,trim= 2.75cm 5cm 19.8cm 5cm, clip=True]
  {graphics/house_of_cards.png}
};
\end{tikzpicture}
\hspace*{-0.8cm}
\end{center}
\vspace*{-0.85cm}
%%%%%%%%%%%%%%%%%%%%%%%%%%%%%%%%%%%%%%%%%%%%%%%%%%%%%%%%%%%%%%%%%%%%%%%%  
%%%%%%%%%%%%%%%%%%%%%%%%%%%%%%%%%% P3 HOC %%%%%%%%%%%%%%%%%%%%%%%%%%%%%%  
%%%%%%%%%%%%%%%%%%%%%%%%%%%%%%%%%%%%%%%%%%%%%%%%%%%%%%%%%%%%%%%%%%%%%%%%  
\begin{center}
\hspace*{-0.8cm}
\begin{tikzpicture}
  \begin{semilogyaxis}[
    grid=both,
    axis x line=box,
    axis line style={-},
    width=6cm,
    xmin=0, xmax=4,
    ytick={1e-8,1e-7,1e-6,1e-5,1e-4,1e-3,1e-2,1e-1,1e0,1e1},
    yticklabels={,$10^{-7}$,,$10^{-5}$,,$10^{-3}$,,$10^{-1}$,,},
    xlabel=refinement level $L$,
    xtick={0,1,2,3,4},
    legend style={at={(0.95,0.85)}, fill=white, draw=black},
    ]

    \addplot[dotted,thick] table[x index =0,
    y expr={1e-2*2^(-3*\thisrowno{0})}]
    {data/4_1_3_VecLap_house_of_cards/comparison_H1Error_P3_H0.8660254037844386.dat};
    \addplot[draw=black, thick, mark=*] table[x index=0, y index=3]
    {data/4_1_3_VecLap_house_of_cards/comparison_H1Error_P3_H0.8660254037844386.dat};
    \addplot[draw=green!50!black, thick, mark=star] table[x index=0, y index=4]
    {data/4_1_3_VecLap_house_of_cards/comparison_H1Error_P3_H0.8660254037844386.dat};
    \addplot[draw=purple,thick, mark=square] table[x index=0, y index=1]
    {data/4_1_3_VecLap_house_of_cards/comparison_H1Error_P3_H0.8660254037844386.dat};
    \addplot[draw=blue, thick, mark=otimes] table[x index=0, y index=2]
    {data/4_1_3_VecLap_house_of_cards/comparison_H1Error_P3_H0.8660254037844386.dat};
    \legend{
      \footnotesize $\mathcal{O}(h^3)$ \\
    }
\end{semilogyaxis}
\end{tikzpicture}
\hspace*{-0.435cm}
\begin{tikzpicture}
  \begin{semilogyaxis}[
    grid=both,
    axis x line=box,
    axis line style={-},
    width=6cm,
    xmin=0, xmax=4,
    xlabel=refinement level $L$,
    ytick={1e-10,1e-9,1e-8,1e-7,1e-6,1e-5,1e-4,1e-3,1e-2,1e-1},
    yticklabels={,$10^{-9}$,,$10^{-7}$,,$10^{-5}$,,$10^{-3}$,,$10^{-1}$},
    xtick={0,1,2,3,4},
    legend style={at={(0.95,0.85)}, fill=white, draw=black},
    ]
    \addplot[dotted,thick] table[x index =0, y expr={2e-4*2^(-4*\thisrowno{0})}]
    {data/4_1_3_VecLap_house_of_cards/comparison_L2Error_tang_P3_H0.8660254037844386.dat};
    \addplot[draw=black, thick, mark=*] table[x index=0, y index=3]
    {data/4_1_3_VecLap_house_of_cards/comparison_L2Error_tang_P3_H0.8660254037844386.dat};
    \addplot[draw=green!50!black, thick, mark=star] table[x index=0, y index=4]
    {data/4_1_3_VecLap_house_of_cards/comparison_L2Error_tang_P3_H0.8660254037844386.dat};
    \addplot[draw=purple,thick, mark=square] table[x index=0, y index=1]
    {data/4_1_3_VecLap_house_of_cards/comparison_L2Error_tang_P3_H0.8660254037844386.dat};
    \addplot[draw=blue, thick, mark=otimes] table[x index=0, y index=2]
    {data/4_1_3_VecLap_house_of_cards/comparison_L2Error_tang_P3_H0.8660254037844386.dat};
    \legend{
      \footnotesize $\mathcal{O}(h^4)$ \\
    }
\end{semilogyaxis}
\end{tikzpicture}
\hspace*{-0.435cm}
\begin{tikzpicture}
\begin{groupplot}[
    group style={
        group name=my fancy plots,
        group size=1 by 2,
        xticklabels at=edge bottom,
        vertical sep=0pt
    },
    width=6cm,
    xmin=0, xmax=4,
]
\nextgroupplot[ymode=log, ymin=1e-9,ymax=1e0,
      grid=both,
      axis x line=box,
      separate axis lines,
      every outer x axis line/.append style={-,draw=none},
      every outer y axis line/.append style={-},
      ytick={1e-8,1e-7,1e-6,1e-5,1e-4,1e-3,1e-2,1e-1,1e0},
      yticklabels={,$10^{-7}$,,$10^{-5}$,,$10^{-3}$,,$10^{-1}$,},
      xtick={},
      axis y discontinuity=parallel, %crunch, %parallel, crunch,
      height=4.7cm,
      legend style={at={(1.5,0.85)}},
      ]
      \addplot[draw=black, thick, mark=*] table[x index=0, y index=3]
      {data/4_1_3_VecLap_house_of_cards/comparison_L2Error_normal_P3_H0.8660254037844386.dat};
      \addplot[draw=green!50!black, thick, mark=star] table[x index=0, y index=4]
      {data/4_1_3_VecLap_house_of_cards/comparison_L2Error_normal_P3_H0.8660254037844386.dat};
      \addplot[draw=purple,thick, mark=square] table[x index=0, y index=1]
      {data/4_1_3_VecLap_house_of_cards/comparison_L2Error_normal_P3_H0.8660254037844386.dat};
      \addplot[draw=blue, thick, mark=otimes] table[x index=0, y index=2]
      {data/4_1_3_VecLap_house_of_cards/comparison_L2Error_normal_P3_H0.8660254037844386.dat};
      \addplot[dotted,thick] table[x index =0,
      y expr={0.025*2^(-5*\thisrowno{0})}]{data/4_1_2_VecLap_sphere/comparison_L2Error_normal_Pk4_Pg5.dat};
      \legend{
        \footnotesize \texttt{H1-L} \\
        \footnotesize \texttt{H1-P} \\
        \footnotesize \texttt{HDG}  \\
        \footnotesize \texttt{Hdiv-HDG} \\
      }
      \draw[thick] (axis description cs:0,1) -- (axis description cs:1,1); 

\nextgroupplot[ymode=log, ymin=7e-17,ymax=1e-15, %ymin=0,ymax=5,
      grid=both,
      axis x line=box,
      separate axis lines,
      every outer x axis line/.append style={-,draw=none},
      every outer y axis line/.append style={-},
      ytick={7e-17,8e-17,9e-17,1e-16,2e-16,3e-16,4e-16,5e-16,6e-16,7e-16,8e-16,9e-16,1e-15},
      yticklabels={,,,$10^{-16}$,,,,,,,,,},
      xlabel=refinement level $L$,
      xtick={0,1,2,3,4},
      height=2.05cm,
      legend style={at={(0.95,5.5)}, fill=white, draw=black},
      ]
      \addplot[dotted,thick] table[x index =0,
      y expr={0.3*2^(-4*\thisrowno{0})}]
      {data/4_1_3_VecLap_house_of_cards/comparison_L2Error_normal_P3_H0.8660254037844386.dat};
      \addplot[draw=purple,thick, mark=square] table[x index=0, y index=1]
      {data/4_1_3_VecLap_house_of_cards/comparison_L2Error_normal_P3_H0.8660254037844386.dat};
      \addplot[draw=blue, thick, mark=otimes] table[x index=0, y index=2]
      {data/4_1_3_VecLap_house_of_cards/comparison_L2Error_normal_P3_H0.8660254037844386.dat};
      \legend{
        \footnotesize $\mathcal{O}(h^4)$ \\
      }
      \draw[thick] (axis description cs:0,0) -- (axis description cs:1,0); 

\end{groupplot}
\node[scale=0.9] at (5.6,2.9) { $k\!\!=\!\!3$,$H\!\!=\!\!\sqrt{3/4}$};

\node[] at (5.55,-0.1) {
  \includegraphics[width=.12\textwidth,trim= 32cm 5cm 3cm 3cm, clip=True]
  {graphics/house_of_cards.png}
};

\end{tikzpicture}\hspace*{-0.5cm}
\end{center}
\vspace*{-0.5cm}
\caption{Error behavior for Vector Laplace problem on the flat (top row) and the house of cards geometry (bottom row) from Section \ref{sec::houseofcards} for four different discretization methods on 5 successively refined meshes (uniform refinements) for fixed discretization order $k=3$ and an exact geometry approximation.}
\label{fig:numex:VecLap2}
\vspace*{-0.25cm}
\end{figure}

\subsection{Stokes} \label{sec:stokesnumex}
Whereas the last section demonstrated the accuracy and advantages of the (H)DG methods introduced in this work for second order elliptic problems on surfaces, we now aim to solve the stationary incompressible Stokes equations, see problem \ref{prob:stokes}. Similarly as in the last section we define a Stokes problem on the flat reference domain $\hat \Gamma = (0,1) \times (0,1)$ and map it isometrically onto a smooth manifold given by half of the surface of an open  cylinder with radius $1/\pi$
\begin{align*}
  \Gamma = \Phi_{1/\pi}(\hat \Gamma) \quad \textrm{with} \quad \Phi_{1/\pi}(\hat x, \hat y) = (\hat x, \sin( (\hat y -1/2) \pi)/ \pi + 1/\pi, \cos( (\hat y -1/2) \pi)/\pi).
\end{align*}
Again, $\Phi_{1/\pi}$ preserves the length, i.e. $\sqrt{\det(F_{1/\pi}^T F_{1/\pi})} = 1$ with $F_{1/\pi} := \nabla\Phi_{1/\pi}$. For a fixed viscosity $\nu$, the reference solutions are given by
\begin{align*}
  \hat u(\hat x, \hat y) = -\frac{\partial \xi}{\partial \hat y} \cdot e_1 +  \frac{\partial \xi}{\partial \hat x} \cdot e_2 \quad \textrm{and} \quad \hat p (\hat x, \hat y) = \hat x^5 + \hat y^5 - 1/3
\end{align*}
with the scalar potential $\xi = \hat x^ 2 (1 - \hat x)^2\hat y^ 2 (1 - \hat y)^2$, and $e_1,e_2$ as before. We set the corresponding right hand side to $\hat f = -2\nu \eps( \hat u) + \nabla \hat p$. Defining the tangential vectors on $\Gamma$ by $\t_i = F_{1/\pi}  e_i, i=1,2$, the exact solutions are given by (see Figure~\ref{fig:stokesex})
\begin{align} \label{stokesexact}
u(x,y,z) = -\frac{\partial \xi}{\partial \hat y} \cdot \t_1 +  \frac{\partial \xi}{\partial \hat x} \cdot \t_2 \quad \textrm{and} \quad p(x,y,z) = \hat x^5 + \hat y^5 - 1/3,
\end{align}
with $(\hat x, \hat y) = \Phi_{1/\pi}^{-1}(x,y,z) = (x, \arcsin(\pi y-1)/\pi+1/2)$, and the right hand side
\begin{align} \label{stokerhssexact}
  f(x,y,z) = f(\hat x, \hat y)_1 \cdot \t_1 +f(\hat x, \hat y)_2 \cdot \t_2.
\end{align}%
\begin{figure}
  \centering
  %\missingfigure{file \texttt{vel\_pres\_2.png} is missing here}
   \includegraphics[width=0.6\textwidth]{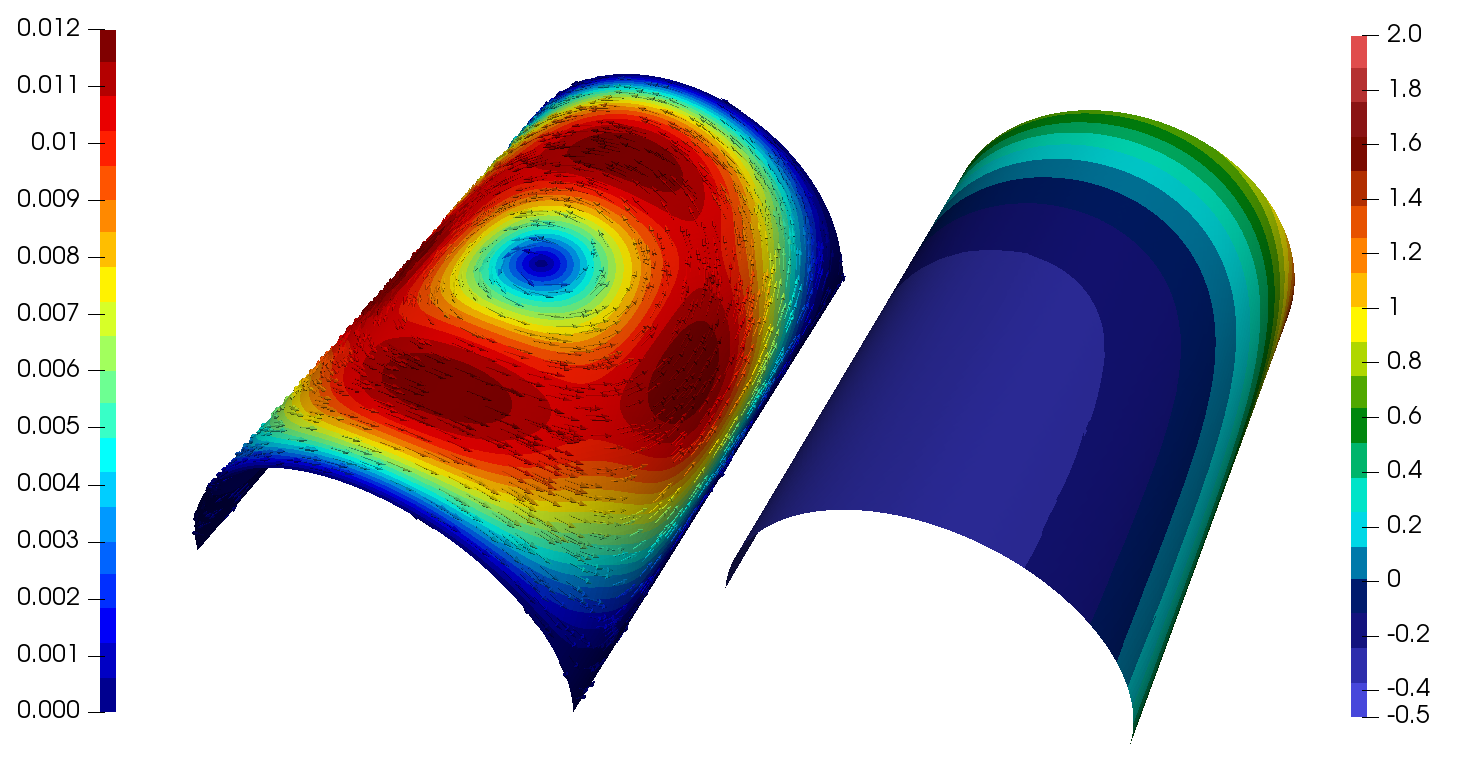}
%  \begin{tikzpicture}
%\begin{axis}[
%  hide axis,
%  scale only axis,
%  height=0pt,
%  width=0pt,
%  colormap/jet,
%  colorbar sampled,
%  %colorbar,
%  %colorbar horizontal,
%  point meta min=0,
%  point meta max=12e-3,      
%  colorbar style={ samples=9,
%    width=0.3cm, height = 4cm,
%    ytick={0,6e-3,12e-3},
%    %yticklabel style={style={font=\footnotesize},
%    %  /pgf/number format/.cd,
%      %sci,
%        %fixed,
%        %fixed zerofill,
%    %    precision=3,
%    %    /tikz/.cd
%    %    },
%      %text height=1em,
%    %    align=above,            
%        %    fixed,
%            %fixed zerofill
%            % }
%    yticklabel pos=left
%  }]
%  \addplot [draw=none] coordinates {(0,0)};
%\end{axis}
%    \node[] (numex1) at (6,-1.7)  {\includegraphics[width=0.5\textwidth]{./graphics/stokes/vel_pres.png}};
%  \end{tikzpicture}
%  \begin{tikzpicture}
%    \begin{axis}[
%      hide axis,
%      scale only axis,
%      height=0pt,
%      width=0pt,
%      colormap/jet,
%      colorbar sampled,
%      %colorbar,
%      %colorbar horizontal,
%      point meta min=-0.5,
%      point meta max=2,
%      colorbar style={ samples=9,
%        width=0.3cm, height = 4cm,
%        ytick={-0.5,0,0.5,1,1.5,2},
%        %at={(0,5)},anchor=south west ,
%        %yticklabel style={style={font=\footnotesize}},
%               yticklabel pos=right
%      }]
%      \addplot [draw=none] coordinates {(0,0)};
%    \end{axis}
%    \end{tikzpicture}
  \caption{Absolute value $|u|$ (left) of the exact velocity and the exact pressure $p$ (right) given by \eqref{stokesexact}.} \label{fig:stokesex}
\end{figure}%
In the following we compare two different discretizations. The first one is the $H(\divergence_{\Gamma})$-conforming HDG discretization, see Section \ref{sec::HDG} and Section \ref{sec::discstokes}. The second one is based on the HDG formulation for the Vector Laplacian of section \ref{sec::HdivDG}. For the divergence constraint we now further introduce the bilinear form
\begin{align*}
  b_h^{\operatorname{HDG}}((u_h, \lambda_T), q_h) := -\sum_T \int_T \divergence_\Gamma(u_h) q_h \dx + \int_{\partial T} (u - \lambda_T)\cdot \nG~ q_h \ds \quad \forall u_h,\lambda_T \in W_h^{k_u} \times (\Lambda_h^{k_u} \times \Lambda_h^{k_u}), q_h \in Q^{{k_u}-1}_h.
\end{align*}
For a fixed velocity approximation order $k_u$ the HDG Stokes discretization then read as:  Find $u_h,\lambda_T \in W_h^{{k_u}} \times (\Lambda_h^{k_u} \times \Lambda_h^{k_u}), p_h \in Q^{{k_u}-1}_h$ such that
\begin{align} \label{eq::HDGstokes}
  \begin{array}{rll}
  2 \nu a_h^{\operatorname{HDG}}((u_h,\lambda_T) , (v_h,\theta_T)) + b_h^{\operatorname{HDG}}((v_h, \theta_T), p_h)  &=  f_h(v_h) &\quad \forall (v_h,\theta_T) \in W^{{k_u}}_h \times (\Lambda_h^{k_u} \times \Lambda_h^{k_u})  \\
  b_h^{\operatorname{HDG}}((u_h, \lambda_T), q_h) &= 0 &\quad \forall q_h \in Q_h^{{k_u}-1}.
  \end{array}
\end{align}

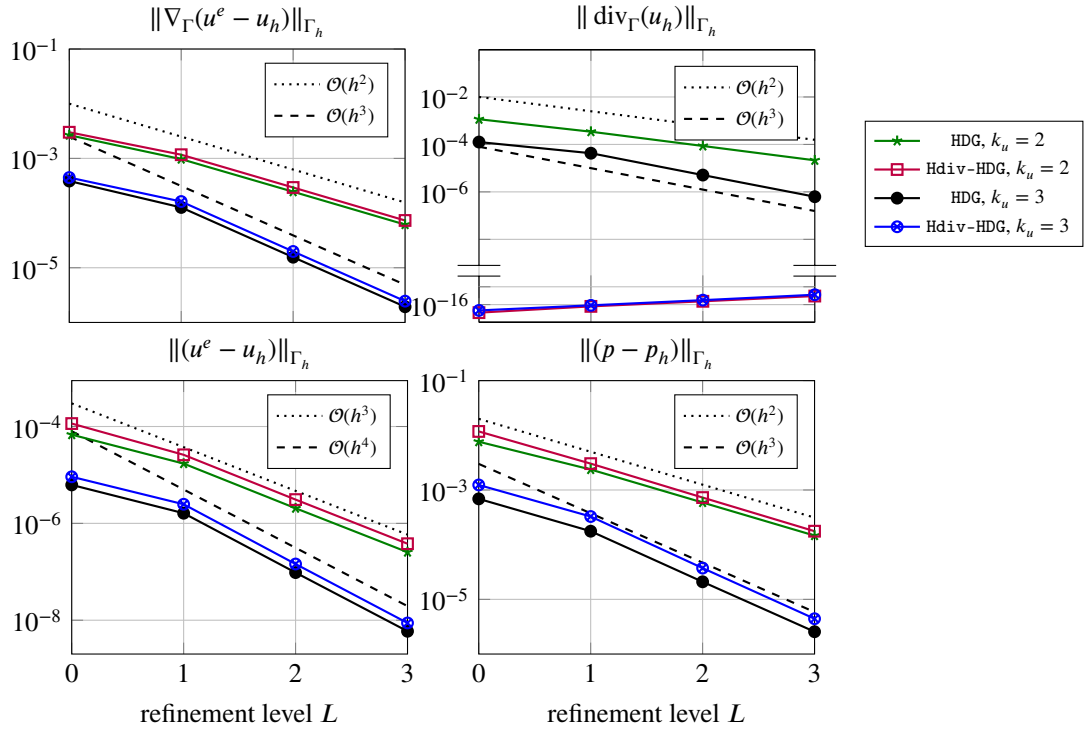
\begin{figure}
\begin{center}
\hspace*{-0.8cm}
\begin{tikzpicture}
  \begin{semilogyaxis}[
    grid=both,
    axis x line=box,
    axis line style={-},
    width=6cm,
    xmin=0, xmax=3,
    xticklabels={},
    ymin=1e-6,
    ymax=1e-1,
    % xlabel=$L$,
    title={$\Vert \nabla_\Gamma (u^e - u_h) \Vert_{\Gamma_h}$},
    title style={at={(axis description cs:0.5,0.95)},anchor=south},
    legend style={at={(0.95,0.95)}, fill=white, draw=black},
    ]

    \addplot[dotted,thick] table[x index =0, y expr={0.01*2^(-2*\thisrowno{0})}]{data/4_2_Stokes/comparison_H1err_u_P2_nu1.dat};
    \addplot[dashed,thick] table[x index =0, y expr={0.0025*2^(-3*\thisrowno{0})}]{data/4_2_Stokes/comparison_H1err_u_P2_nu1.dat}; 
    \addplot[draw=green!50!black, thick, mark=star] table[x index=0, y index=1]{data/4_2_Stokes/comparison_H1err_u_P2_nu1.dat}; 
    \addplot[draw=purple,thick, mark=square] table[x index=0, y index=2]{data/4_2_Stokes/comparison_H1err_u_P2_nu1.dat};

    \addplot[draw=black, thick, mark=*] table[x index=0, y index=1]{data/4_2_Stokes/comparison_H1err_u_P3_nu1.dat}; 
    \addplot[draw=blue,thick, mark=otimes] table[x index=0, y index=2]{data/4_2_Stokes/comparison_H1err_u_P3_nu1.dat};     
    \legend{
      \footnotesize $\mathcal{O}(h^2)$ \\
      \footnotesize $\mathcal{O}(h^3)$ \\
    }
\end{semilogyaxis}
\end{tikzpicture}
\hspace*{-0.35cm}
\begin{tikzpicture}    
\begin{groupplot}[
    group style={
        group name=my fancy plots,
        group size=1 by 2,
        xticklabels at=edge bottom,
        vertical sep=0pt
    },
    width=6cm,
    xmin=0, xmax=3,
]
\nextgroupplot[ymode=log,  ymin = 1e-10, ymax=1, %ymin=1e-2,ymax=1.5e0, %
      grid=both,
      axis x line=box,
      separate axis lines,
      every outer x axis line/.append style={draw=none},
      every outer y axis line/.append style={-},
      axis y discontinuity=parallel, %crunch, %parallel, crunch,
      ytick={1e-8,1e-6,1e-4,1e-2,1e0},
      yticklabels={,$10^{-6}$,$10^{-4}$,$10^{-2}$,},
      %ytickmin=1e-7,
      %ytickmax=1e-4,
      xtick={},      
      height=4.7cm,
      legend style={at={(1.8,0.7)}},
      title={$\Vert \divergence_{\Gamma}( u_h) \Vert_{\Gamma_h}$},
      title style={at={(axis description cs:0.5,0.95)},anchor=south},
      ]
     
    \addplot[draw=green!50!black, thick, mark=star] table[x index=0, y index=1]{data/4_2_Stokes/comparison_L2err_divu_P2_nu1.dat}; 
    \addplot[draw=purple,thick, mark=square] table[x index=0, y index=2]{data/4_2_Stokes/comparison_L2err_divu_P2_nu1.dat};   
    \addplot[draw=black, thick, mark=*] table[x index=0, y index=1]{data/4_2_Stokes/comparison_L2err_divu_P3_nu1.dat}; 
    \addplot[draw=blue,thick, mark=otimes] table[x index=0, y index=2]{data/4_2_Stokes/comparison_L2err_divu_P3_nu1.dat};

    \addplot[dotted,thick] table[x index =0, y expr={0.01*2^(-2*\thisrowno{0})}]{data/4_2_Stokes/comparison_H1err_u_P2_nu1.dat};
    \addplot[dashed,thick] table[x index =0, y expr={0.00008*2^(-3*\thisrowno{0})}]{data/4_2_Stokes/comparison_H1err_u_P2_nu1.dat}; 
    %\addplot[draw=green!50!black, thick, mark=star] table[x index=0, y index=6]
    %  {data/4_1_2_VecLap_sphere/comparison_L2Error_normal_Pk1_Pg1.dat}; 
    %  \addplot[draw=purple,thick, mark=square] table[x index=0, y index=3]
    %  {data/4_1_2_VecLap_sphere/comparison_L2Error_normal_Pk1_Pg1.dat}; 
    %  \addplot[draw=blue, thick, mark=otimes] table[x index=0, y index=4]
    %  {data/4_1_2_VecLap_sphere/comparison_L2Error_normal_Pk1_Pg1.dat};
    %  \addplot[dotted,thick] table[x index =0,
    %  y expr={0.3*2^(-1*\thisrowno{0})}]{data/4_1_2_VecLap_sphere/comparison_L2Error_normal_Pk1_Pg1.dat};
      \legend{
        \footnotesize \texttt{HDG}, $k_u=2$  \\
        \footnotesize \texttt{Hdiv-HDG}, $k_u=2$ \\
        \footnotesize \texttt{HDG}, $k_u=3$  \\
        \footnotesize \texttt{Hdiv-HDG}, $k_u=3$ \\
      }
      \draw[thick] (axis description cs:0,1) -- (axis description cs:1,1); 

\nextgroupplot[ymode=log, ymin=1e-17,ymax=1e-15, %ymin=0,ymax=5,
      grid=both,
      axis x line=box,
      separate axis lines,
      every outer x axis line/.append style={-,draw=none},
      every outer y axis line/.append style={-},
      ytick={1e-19,1e-17,1e-16,1e-15},
      yticklabels={,,$10^{-16}$,},
      xticklabels={},
      height=2.05cm,
      legend style={at={(0.95,7.2)}, fill=white, draw=black},
      ]
      \addplot[dotted,thick] table[x index =0, y expr={0.01*2^(-2*\thisrowno{0})}]{data/4_2_Stokes/comparison_L2err_divu_P2_nu1.dat};
    \addplot[dashed,thick] table[x index =0, y expr={0.0025*2^(-3*\thisrowno{0})}]{data/4_2_Stokes/comparison_L2err_divu_P2_nu1.dat}; 
    \addplot[draw=green!50!black, thick, mark=star] table[x index=0, y index=1]{data/4_2_Stokes/comparison_L2err_divu_P2_nu1.dat}; 
    \addplot[draw=purple,thick, mark=square] table[x index=0, y index=2]{data/4_2_Stokes/comparison_L2err_divu_P2_nu1.dat};

    \addplot[draw=black, thick, mark=*] table[x index=0, y index=1]{data/4_2_Stokes/comparison_L2err_divu_P3_nu1.dat}; 
    \addplot[draw=blue,thick, mark=otimes] table[x index=0, y index=2]{data/4_2_Stokes/comparison_L2err_divu_P3_nu1.dat};
    
    %\addplot[dotted,thick] table[x index =0,
    %  y expr={0.3*2^(-1*\thisrowno{0})}]{data/4_1_2_VecLap_sphere/comparison_L2Error_normal_Pk1_Pg1.dat};
    %  \addplot[draw=purple,thick, mark=square] table[x index=0, y index=3]
    %  {data/4_1_2_VecLap_sphere/comparison_L2Error_normal_Pk1_Pg1.dat}; 
    %  \addplot[draw=blue, thick, mark=otimes] table[x index=0, y index=4]
    %  {data/4_1_2_VecLap_sphere/comparison_L2Error_normal_Pk1_Pg1.dat};
      \legend{
        \footnotesize $\mathcal{O}(h^2)$ \\
         \footnotesize $\mathcal{O}(h^3)$ \\
      }
      \draw[thick] (axis description cs:0,0) -- (axis description cs:1,0); 

\end{groupplot}
\end{tikzpicture}
\end{center}
\vspace*{-0.85cm}
\begin{center}
\hspace*{-4.1cm}
\begin{tikzpicture}
  \begin{semilogyaxis}[
    grid=both,
    axis x line=box,
    axis line style={-},
    width=6cm,
    xmin=0, xmax=3,
    %xticklabels={},
    %ymin=1e-6,
    %ymax=1e-1,
    % xlabel=$L$,
    xlabel=refinement level $L$,
    xtick={0,1,2,3,4},
    title={$\Vert (u^e - u_h) \Vert_{\Gamma_h}$},
    title style={at={(axis description cs:0.5,0.95)},anchor=south},
    legend style={at={(0.95,0.95)}, fill=white, draw=black},
    ]

    \addplot[dotted,thick] table[x index =0, y expr={0.0003*2^(-3*\thisrowno{0})}]{data/4_2_Stokes/comparison_L2err_u_P2_nu1.dat};
    \addplot[dashed,thick] table[x index =0, y expr={0.00008*2^(-4*\thisrowno{0})}]{data/4_2_Stokes/comparison_L2err_u_P2_nu1.dat}; 
    \addplot[draw=green!50!black, thick, mark=star] table[x index=0, y index=1]{data/4_2_Stokes/comparison_L2err_u_P2_nu1.dat}; 
    \addplot[draw=purple,thick, mark=square] table[x index=0, y index=2]{data/4_2_Stokes/comparison_L2err_u_P2_nu1.dat};

    \addplot[draw=black, thick, mark=*] table[x index=0, y index=1]{data/4_2_Stokes/comparison_L2err_u_P3_nu1.dat}; 
    \addplot[draw=blue,thick, mark=otimes] table[x index=0, y index=2]{data/4_2_Stokes/comparison_L2err_u_P3_nu1.dat};     
    \legend{
      \footnotesize $\mathcal{O}(h^3)$ \\
      \footnotesize $\mathcal{O}(h^4)$ \\
    }
\end{semilogyaxis}
\end{tikzpicture}
\hspace*{-0.35cm}
\begin{tikzpicture}    
\begin{semilogyaxis}[
    grid=both,
    axis x line=box,
    axis line style={-},
    width=6cm,
    xmin=0, xmax=3,
    ymin=1e-6,
    ymax=1e-1,
    % xlabel=$L$,
    xlabel=refinement level $L$,
    xtick={0,1,2,3,4},   
    title={$\Vert (p - p_h) \Vert_{\Gamma_h}$},
    title style={at={(axis description cs:0.5,0.95)},anchor=south},
    legend style={at={(0.95,0.95)}, fill=white, draw=black},
    ]

    \addplot[dotted,thick] table[x index =0, y expr={0.02*2^(-2*\thisrowno{0})}]{data/4_2_Stokes/comparison_L2err_p_P2_nu1.dat};
    \addplot[dashed,thick] table[x index =0, y expr={0.003*2^(-3*\thisrowno{0})}]{data/4_2_Stokes/comparison_L2err_p_P2_nu1.dat}; 
    \addplot[draw=green!50!black, thick, mark=star] table[x index=0, y index=1]{data/4_2_Stokes/comparison_L2err_p_P2_nu1.dat}; 
    \addplot[draw=purple,thick, mark=square] table[x index=0, y index=2]{data/4_2_Stokes/comparison_L2err_p_P2_nu1.dat};

    \addplot[draw=black, thick, mark=*] table[x index=0, y index=1]{data/4_2_Stokes/comparison_L2err_p_P3_nu1.dat}; 
    \addplot[draw=blue,thick, mark=otimes] table[x index=0, y index=2]{data/4_2_Stokes/comparison_L2err_p_P3_nu1.dat};     
    \legend{
      \footnotesize $\mathcal{O}(h^2)$ \\
      \footnotesize $\mathcal{O}(h^3)$ \\
    }
\end{semilogyaxis}
\end{tikzpicture}
\end{center}
\vspace*{-0.25cm}
\caption{Error behavior for the Stokes problem of section \ref{sec:stokesnumex} for the $H(\divergence_{\Gamma})$-conforming HDG method and the HDG method given by \eqref{eq::HDGstokes} on 4 successively refined meshes (uniform refinements) for a fixed viscosity $\nu=1$, two polynomial orders $k_u = 2,3$ and an geometry approximation order $k_g = k_u + 1$.} \label{fig::errorstokes}
\vspace*{-0.25cm}
\end{figure}
In Figure \ref{fig::errorstokes} the error behavior for the above Stokes problem (with right-hand side \eqref{stokerhssexact}) is given for both discussed discretizations with a fixed viscosity $\nu = 1$ and different polynomial orders $k_u = 2,3$. As before we use the labels \texttt{HDG} and \texttt{Hdiv-HDG}. For $\alpha$ in the SIP stabilization we take $10$. Note, that in contrast to the previous section we used the geometry approximation order $k_g = k_u + 2$. A numerical investigation showed that we have to apply this enhanced geometry approximation (only needed for this particular example) due to the ill-conditioned inverse of the sine function at $1$ and $-1$ (needed for the calculation of the right-hand side). The first, third and fourth plot show the $H^1$-semi-norms error and the $L^2$-norm error of the velocity and the $L^2$-norm error of the pressure for both methods where we used an initial triangulation with 100 elements and 3 refinement levels. As we can see, all errors convergence with optimal order and the accuracy of both methods is approximately the same.
The second plot shows the $L^2$-norm of the surface divergence. As proven by Lemma~\ref{lemma:exdivfree} the solution of the $H(\divergence_{\Gamma})$-conforming method is exactly divergence-free, whereas the standard HDG method is only weakly divergence-free. However the divergence error of the HDG method stills shows the optimal convergence order.

To emphasize the importance of exactly divergence-free velocity solutions we now focus on pressure robustness. Beside the observations discussed in Section \ref{sec::HdivBenefits} we want to discuss pressure robustness with respect to the arising error estimates. A standard a priori error estimate of inf-sup stable Stokes discretizations usually reads as
\begin{align*}
  \| u - u_h \|_{H^1,h} \le c (\inf\limits_{v_h \in V^{k_u}_h} \| u - v_h \|_{H^1,h} + \frac{1}{\nu} \inf\limits_{q_h \in Q^{k_u-1}_h} \| p - q_h \|_{L^2}),
\end{align*}
where $\| \cdot \|_{H^1,h}$ is an appropriate (H)DG-version of an $H^1$-norm, and $c$ is a constant independent on the mesh size $h$ and the viscosity $\nu$. Above estimate shows that the velocity error may depend on the best approximation of the pressure including the factor $1/\nu$, hence the velocity error can blow up in the case of vanishing viscosity $\nu \ll 1$. As discussed in the literature, see for example \cite{LLS_SIAM_2017,GLS_MCS_2019}, methods that provide exactly divergence-free velocities allow to derive an error estimate that reads as
\begin{align*}
  \| u - u_h \|_{H^1,h} \le c \inf\limits_{v_h \in V^{k_u}_h} \| u - v_h \|_{H^1,h} + F(u),
\end{align*}
where $F(u)$ is a function that only depends on the exact solution $u$ (and not $\nu$) and shows optimal convergence properties (for methods yielding exactly divergence-free solutions we have $F(u) = 0$).
Hence, these methods show no bad behavior for small values of the viscosity $\nu$.
Note that above estimate assumes an exact geometry representation $\Gamma_h = \Gamma$, compare the discussion below.

\begin{figure}
  \begin{center}
    \begin{tikzpicture}
  \begin{axis}[
    name=plot1,
    scale=1,
    width = 6cm,
    title = {$|| \nabla_\Gamma( u^e - u_h) ||_{\Gamma_h}$ for $k_g = k_u +2$},
    legend style={text height=0.7em },
    legend style={draw=none},
    style={column sep=0.1cm},
    xlabel=$\nu$,
    x label style={at={(0.95,0.05)},anchor=west},
    xmode=log,
    ymode=log,
    ytick = {1e-6,1e-4,1e-2,1e-0,1e2,1e4},
    y tick label style={
      /pgf/number format/.cd,
      fixed,
      precision=2
    },
    x tick label style={
      /pgf/number format/.cd,
      fixed,
      precision=2
    },
    grid=both,
    legend style={
      cells={align=left},
      at={(1.05,0.8)},
      anchor = north west
    },
    ]

    \addplot[draw=green!50!black, thick, mark=star] table[x index=0, y index=1]{data/4_2_Stokes/nu_comparison_H1err_u_P2_1.dat}; 
    \addplot[draw=purple,thick, mark=square] table[x index=0, y index=2]{data/4_2_Stokes/nu_comparison_H1err_u_P2_1.dat};
    \addplot[draw=black, thick, mark=*] table[x index=0, y index=1]{data/4_2_Stokes/nu_comparison_H1err_u_P3_1.dat};
    \addplot[draw=blue,thick, mark=otimes] table[x index=0, y index=2]{data/4_2_Stokes/nu_comparison_H1err_u_P3_1.dat};
  \end{axis}  
\end{tikzpicture}
\hspace{-0.35cm}
\begin{tikzpicture}
  \begin{axis}[
    name=plot1,
    scale=1,
    width = 6cm,
    title = {$|| \nabla_\Gamma( u^e - u_h) ||_{\Gamma_h}$ for $k_g = 2 k_u +2$},
    legend style={text height=0.7em },
    legend style={draw=none},
    style={column sep=0.1cm},
    xlabel=$\nu$,
    x label style={at={(0.95,0.05)},anchor=west},
    xmode=log,
    ymode=log,
    ytick = {1e-6,1e-4,1e-2,1e-0,1e2,1e4},
    y tick label style={
      /pgf/number format/.cd,
      fixed,
      precision=2
    },
    x tick label style={
      /pgf/number format/.cd,
      fixed,
      precision=2
    },
    grid=both,
    legend style={
      cells={align=left},
      at={(1.05,0.8)},
      anchor = north west
    },
    ]

    \addplot[draw=green!50!black, thick, mark=star] table[x index=0, y index=1]{data/4_2_Stokes/nu_comparison_H1err_u_P2_2.dat}; 
    \addplot[draw=purple,thick, mark=square] table[x index=0, y index=2]{data/4_2_Stokes/nu_comparison_H1err_u_P2_2.dat};
    \addplot[draw=black, thick, mark=*] table[x index=0, y index=1]{data/4_2_Stokes/nu_comparison_H1err_u_P3_2.dat};
    \addplot[draw=blue,thick, mark=otimes] table[x index=0, y index=2]{data/4_2_Stokes/nu_comparison_H1err_u_P3_2.dat};   
  \end{axis}  
\end{tikzpicture}
\hspace{-0.35cm}
    \begin{tikzpicture}
  \begin{axis}[
    name=plot1,
    scale=1,
    width = 6cm,
    title ={ $|| \nabla_\Gamma( u^e - u_h) ||_{\Gamma_h}$ for $k_g = 3 k_u +2$},
    legend style={text height=0.7em },
    legend style={draw=none},
    style={column sep=0.1cm},
    xlabel=$\nu$,
    x label style={at={(0.95,0.05)},anchor=west},
    xmode=log,
    ymode=log,
    ytick = {1e-6,1e-4,1e-2,1e-0,1e2,1e4},
    y tick label style={
      /pgf/number format/.cd,
      fixed,
      precision=2
    },
    x tick label style={
      /pgf/number format/.cd,
      fixed,
      precision=2
    },
    grid=both,
    legend style={
      cells={align=left},
      at={(0.65,0.8)},
      anchor = north west
    },
    ]

    \addplot[draw=green!50!black, thick, mark=star] table[x index=0, y index=1]{data/4_2_Stokes/nu_comparison_H1err_u_P2_3.dat}; 
    \addplot[draw=purple,thick, mark=square] table[x index=0, y index=2]{data/4_2_Stokes/nu_comparison_H1err_u_P2_3.dat};
    \addplot[draw=black, thick, mark=*] table[x index=0, y index=1]{data/4_2_Stokes/nu_comparison_H1err_u_P3_3.dat};
    \addplot[draw=blue,thick, mark=otimes] table[x index=0, y index=2]{data/4_2_Stokes/nu_comparison_H1err_u_P3_3.dat};
   % \legend{
   %     \footnotesize \texttt{HDG}, $k_u=2$  \\
   %     \footnotesize \texttt{Hdiv-HDG}, $k_u=2$ \\
   %     \footnotesize \texttt{HDG}, $k_u=3$  \\
   %     \footnotesize \texttt{Hdiv-HDG}, $k_u=3$ \\
   %     }
  \end{axis}   
\end{tikzpicture}
\begin{tikzpicture}
  \begin{axis}[
    hide axis,    
    legend style={draw=none},
    legend columns=4,
    style={column sep=0.1cm},
    %xlabel=$\nu$,
    %x label style={at={(0.95,0.05)},anchor=west},
    xmin=1,
    xmax=1,
    ymin=0,
    ymax=0.4,
    width = 2cm,
    ]

    \addlegendimage{green!50!black, thick, mark=star};
    \addlegendentry{\footnotesize \texttt{HDG}, $k_u=2$};
    \addlegendimage{purple,thick, mark=square};
    \addlegendentry{\footnotesize \texttt{Hdiv-HDG}, $k_u=2$};    
    \addlegendimage{black, thick, mark=*};
    \addlegendentry{\footnotesize \texttt{HDG}, $k_u=3$};
    \addlegendimage{blue,thick, mark=otimes};
    \addlegendentry{\footnotesize \texttt{Hdiv-HDG}, $k_u=3$};    
  \end{axis}   
\end{tikzpicture}
\end{center}
\vspace*{-0.5cm}
\caption{The $H^1$-seminorm error of the velocity solutions of the Stokes problem of section \ref{sec:stokesnumex} for the $H(\divergence_{\Gamma})$-conforming HDG method and the HDG method given by \eqref{eq::HDGstokes} for varying viscosities $\nu=1,\ldots,10^{-7}$ on a fixed triangulation with geometry approximation order $k_g = i~k_u +2$ for $i=1,2,3$.} \label{fig::probust}
\vspace*{-0.3cm}
\end{figure}
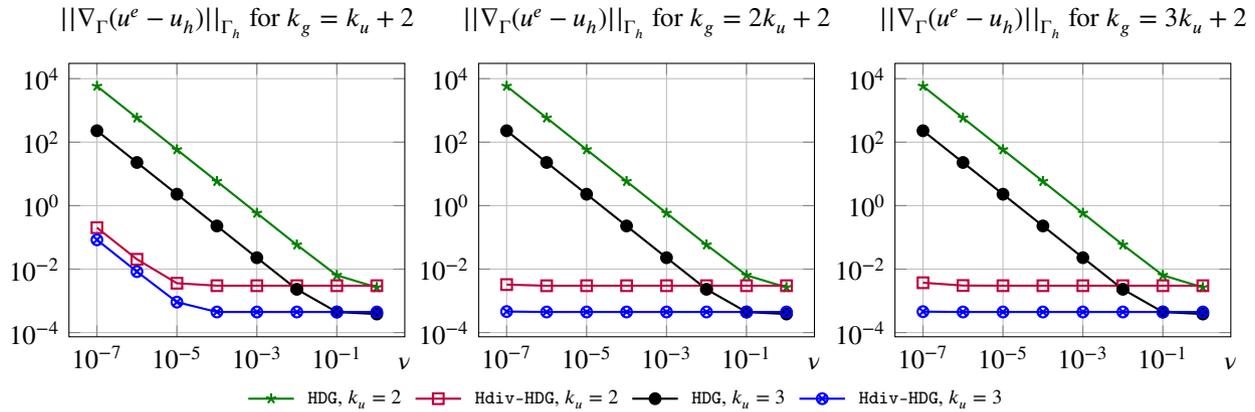
In Figure \ref{fig::probust} the $H^1$-seminorm error of the solution of the same problem as above for varying viscosities $\nu = 10^{-6},\ldots,1$, two different polynomial orders $k_u=2,3$ and different geometry approximation orders $k_g = i~k_u +2$ for $i=1,2,3$ is given.
%Here, we used a fixed triangulation with 100 elements and we used $\alpha=10$ as the SIP stabilization parameter.
We can make several observations. First, the errors of the weakly divergence-free HDG method, see equation \eqref{eq::HDGstokes},  show the expected behavior: Reducing the viscosity $\nu$ leads to a blow up of the $H^1$-semi norm error of the velocity independently of the geometry approximation. We can clearly see the scaling $1/\nu$ for all combination of approximation orders $k_u$ and geometry approximations $k_g$. Next, note that the $H(\divergence_{\Gamma})$-conforming method is expected to be pressure robust as the discrete velocity is exactly divergence-free. As we can see, this is indeed true if the geometry approximation is accurate enough. In the case $k_g = 3k_u + 2$ the error is constant and independent of the choice of $\nu$. Reducing the approximation order $k_g$ shows that the error is again affected by a change of the viscosity, however the error is still much better compared to the standard HDG method. This behavior is also known from the flat case, and the problem comes from an inexact evaluation of the right-hand side integral
   \begin{align*}
 \int_{\Gamma_h} f \cdot v_h \dx.
   \end{align*}
   As all integrals are computed using a quadrature rule the right-hand side integral might not vanish in the case where $f = \nabla_{\Gamma} \theta$ and $v_h$ is  divergence-free, i.e. $\divergence_{\Gamma} (v_h) = 0$, compare Section \ref{sec::HdivBenefits}. Beside choosing a high order quadrature rule, we now also have to increase the geometry approximation to ensure that a gradient field $f = \nabla_{\Gamma} \theta$ is also a gradient field on $\Gamma_h$, otherwise again the integral might not vanish even if a high order quadrature rule is used.
   %and the discrete velocity would not only be steered by the divergence-free part of the Helmholtz decomposition.

%\clnote{discuss more: The geom. error obviously is important.. blabla..}
%\begin{figure}%[h]
%  \missingfigure{geoemetries...}
%  \caption{Geometries $\hat \Gamma$ (left) and $\Gamma$ (right) and coarsest mesh ($L=0$) for the examples in Section \ref{sec:stokesnumex}.}
%  \label{fig:stokesgeom} 
%\end{figure}
%\clnote{Would it be interesting/fair to compare with the flat case to discuss the influence of geometry errors here?!}

\subsection{Generalization of the Sch\"afer Turek benchmark for surfaces} \label{sec:schaeferturek}
We consider the setup of a standard benchmark test case, cf. Ref.\cite[Case 2D-2]{schafer1996benchmark}. First we recall the setup in the plane and afterwards apply different mappings to obtain geometries in 3D, some of them as in Ref. \cite{MR3846120}.
\paragraph{The flat setup (reference configuration)}
The domain is a rectangular channel without an almost vertically centered circular obstacle, cf. Figure \ref{fig:sketchst},
\begin{equation*}
  \hat \Gamma := [0,2.2] \! \times \! [0,0.41] \setminus B_{0.05}((0.2,0.2)).
\end{equation*}
We denote the velocity and pressure solution as $\hat{u}$ and $\hat{p}$. The boundary is decomposed into $\hat \gamma_{\text{in}} := \{\hat x = 0\}$, the inflow boundary, $\hat \gamma_{\text{out}} := \{\hat x = 2.2\}$, the outflow boundary and $\hat \gamma_W := \partial \Gamma \setminus ( \hat \gamma_{\text{in}} \cup \hat \gamma_{\text{out}} )$. On $\hat \gamma_{\text{out}}$ we prescribe natural boundary conditions which read as $ ( - 2 \nu \eps_\Gamma(\hat u) + \hat p P) \cdot \hat \nG = 0 $;
%(or exploiting the two-dimensional nature of the setup:
%$ ( - 2 \nu \eps(\hat u) + \hat p) \cdot \nG = 0$ with $\nG = \operatorname{id}_2(1,0)$)
on $\hat \gamma_W$ homogeneous Dirichlet boundary conditions for the velocity and on $\hat \gamma_{\text{in}}$ the inflow Dirichlet boundary conditions
\begin{equation*}
 \hat u(0,y,t) = \hat u_D = \bar{u} \cdot 6 \cdot {y (0.41 - y)}/{0.41^2} \cdot (1,0,0).
\end{equation*}
Here, $\bar{u} = 1$ is the average inflow velocity and the viscosity is fixed to $\nu = 10^{-3}$ which results in a Reynolds number $Re=100$. This setup results in a time-periodic flow with vortex shedding behind the obstacle, cf. Figure \ref{fig:sketchst}.

The (time dependent) quantities of interest in this example are the forces that act on the disc $\hat \gamma_\circ = S_{0.05}((0.2,0.2))$ and the pressure difference before and behind the obstacle:
\begin{equation*}
  F_\circ := \int_{\hat \gamma_\circ} \sigma(u,p) \cdot \hat \nG  \dshat, \quad 
  \Delta p  := p_{\text{front}} - p_{\text{back}},\quad  p_{\text{front}} = p(0.15,0.5), \quad p_{\text{back}} = p(0.25,0.5).
\end{equation*}
This benchmark problem is well studied in the literature and reference values are available in Ref.\cite{schafer1996benchmark}.
Below, we consider similar setups on geometries $\Gamma_i = \Phi_i(\hat \Gamma)$ for mappings $\Phi_i,~i=1,..,4$ specified below. For $b \in \{\text{in},\text{out}, W, \circ \}$ we define the boundary segments accordingly, i.e. $\gamma_b := \Phi_i(\hat \gamma_b)$ and prescribe natural outflow conditions on $\gamma_{\text{out}}$, homogeneuous Dirichlet conditions on $\gamma_W$ and prescribe inflow velocities on $\gamma_{\text{in}}$. For the inflow velocities we take (for sake of comparibility) the choice of Ref.\cite{MR3846120}: $u|_{\gamma_{\text{in}}} = \Phi_i' ~ \hat u \circ \Phi_i^{-1}$. Let us stress that this results in a tangential velocity, but may result in an average inflow velocity that may deviate from $\bar{u}=1$. We note that the degrees of freedoms of our $H(\divergence_{\Gamma})$-conforming HDG formulation naturally fit the tangential boundary conditions. The inflow which is in co-normal direction $\nG$ corresponds to the degrees of freedom of the $V_h^k$ whereas tangential flow conditions (here: zero) correspond to the boundary degrees of freedom of $\Lambda_h^{k_u}$.
%For

\begin{figure}
  \begin{center}
    \hfill
    \begin{tabular}{cc@{\qquad\qquad}c@{}c@{}c@{}c@{}c@{}c@{\qquad\qquad}c@{}c@{}c@{}c@{}c@{}c}
\multicolumn{2}{c@{\qquad\qquad}}{meshes} & \multicolumn{5}{c}{Stokes} & & \multicolumn{5}{c}{Navier--Stokes ($Re=100$)} & \\
\includegraphics[angle=90,height=0.25\textheight]{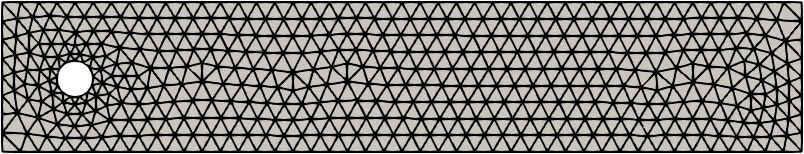} &
\includegraphics[angle=90,height=0.25\textheight]{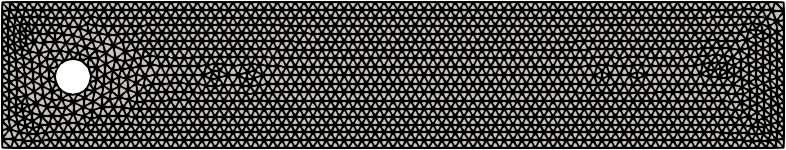} &
\includegraphics[angle=90,height=0.25\textheight]{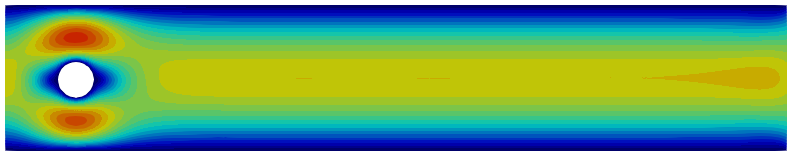} &
\includegraphics[height=0.1\textheight]{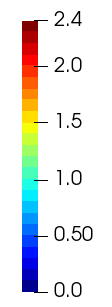} &
\includegraphics[angle=90,height=0.25\textheight]{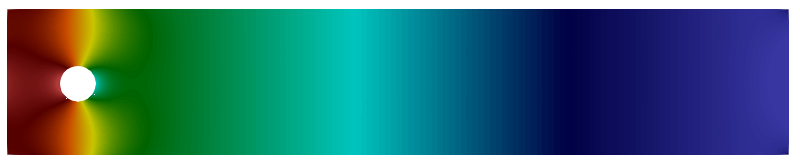} &%
\includegraphics[height=0.1\textheight]{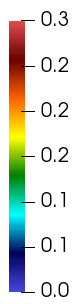} &
\includegraphics[angle=90,height=0.25\textheight]{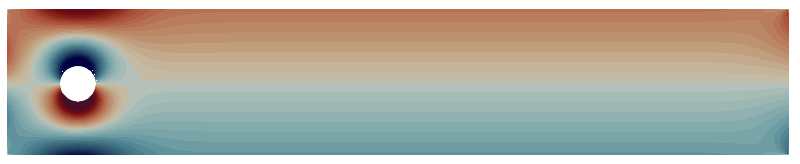} &%
\includegraphics[height=0.1\textheight]{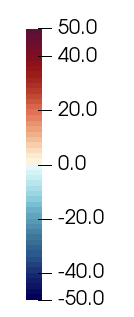}
&%
\includegraphics[angle=90,height=0.25\textheight]{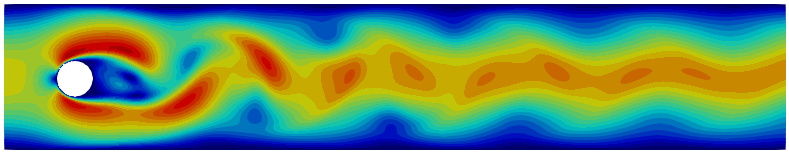} &
\includegraphics[height=0.1\textheight]{graphics/schaefer_turek_roled_velocity_stokes_scale.png}
&
\includegraphics[angle=90,height=0.25\textheight]{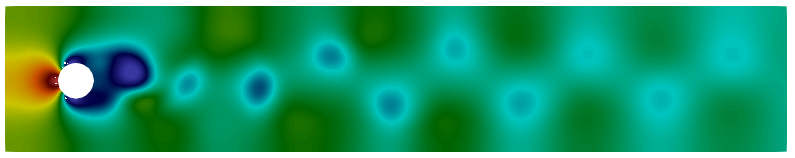} &%
\includegraphics[height=0.1\textheight]{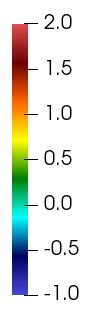}
&
\includegraphics[angle=90,height=0.25\textheight]{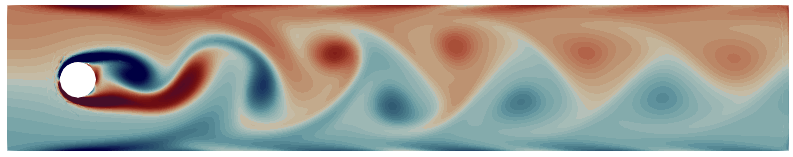} &%
\includegraphics[height=0.1\textheight]{graphics/schaefer_turek_roled_vorticity_stokes_scale.png}%
      \\[-0.1ex]
      \footnotesize{$L=1$}&
      \footnotesize{$L=2$}&
      \footnotesize{$\vert u_h\vert$} & &
      \footnotesize{$p_h$} & &
      \footnotesize{$\omega_h$} & &
      \footnotesize{$\vert u_h\vert$ } & &
      \footnotesize{$p_h$ } & &
      \footnotesize{$\omega_h$} &\\
    \end{tabular}
  \end{center}
  \vspace*{-0.4cm}
  \caption{Flat Sch\"afer-Turek benchmark problem 2D-2. The first two meshes used in the simulations (left), discrete solution of the Stokes problem (center) and discrete solution of the unsteady Navier--Stokes problem at a fixed time (right).}
\label{fig:sketchst}
  \vspace*{-0.4cm}
\end{figure}

\paragraph{Computational setup}
We solve the surface Navier--Stokes problem for four different mappings, but use -- for the most part -- the same computational setup that we want to describe here first.
As initial conditions for the Navier--Stokes problem, we take the solution of a Stokes problem with the same boundary data. To make sure that the simulation reached the time where the periodicity is established we simulate the problem for 30 time units. For the time stepping we use a second order implicit-explicit (IMEX) time stepping method and consider up to three different time step sizes: $\Delta t = 2^{1-L_t} \cdot 10^{-3}, L_t \in \{1,2,3\}$.
We consider up to three mesh levels with a characteristic mesh size (w.r.t. the reference configuration) $\hat h = 2^{1-L_s} \cdot 0.05, ~ L_s \in \{1,2,3\}$ corresponding to mesh levels $L_s=1,2,3$.
For the non-isometric mapping in Subsection~\ref{sec:schaeferturek:nonisometric}
we use the same unstructured meshes as in the flat case, resulting in 857 ($L_s=1$), 3179 ($L_s=2$) and 13081 ($L_s=3$) triangular elements, cf. Fig. \ref{fig:sketchst} for the first two levels. The meshes are not specifically adapted to improve the approximation of potential boundary layers. For the isometric mappings in Subsection~\ref{sec:schaeferturek:isometric}
we consider only one mesh with $865$ elements. In all examples in this section we fix $k_u=4$ and $k_g=5$.

\subsubsection{Isometric mappings} \label{sec:schaeferturek:isometric}
First, we consider two mappings $\Phi_i,~i=1,2$ that are isometric, i.e. there holds $\hat d(\hat{x},\hat{y}) = d_{\Gamma_i}(\Phi_i(\hat x),\Phi_i(\hat y)),~\hat x, \hat y\in \hat{\Gamma}$ where $\hat d$ is the two-dimensional Euclidean distance and $d_\Gamma$ is the geodesic distance on the surface $\Gamma_i=\Phi_i(\hat \Gamma)$ or equivalently $\det ( F_i^T \cdot F_i) = 1$ with $F_i = \Phi_i'$. 
Similar to the setup in Section \ref{sec::houseofcards} we consider as a first case
$$
\Gamma_1 = \Phi_1(\hat \Gamma) \quad \text{with}\quad \Phi_1(\hat x,\hat y) = \left\{
  \begin{array}{ll}
    (\hat x \cdot W,\hat y, H\cdot \hat x) & \text{ if } \hat x \leq 1.1, \\
    (\hat x \cdot W,\hat y, H\cdot (2.2-\hat x)) & \text{ if } \hat x > 1.1,
  \end{array}
\right.
$$
with $H=\frac12$, $W = \sqrt{3/4}$. To consider the kink in the geometry properly we use a slightly different mesh in this setup than in all others of this subsection by making sure that the line $\hat{x}=1.1$ corresponds to a mesh line, c.f.~Fig. \ref{fig:stlength_pres}. Thereby the mapping $\Phi_i$ can be represented exactly in a finite element space so that the mapped mesh introduces no additional geometrical error. We note that already the mesh for $\hat \Gamma$ includes (small) approximation errors due to the approximation of the circle. 

As a second example we consider 
$$ \Phi_2(\hat{x},\hat{y}) = (- \tfrac{0.41}{\pi} \cos(\pi \tfrac{\hat y}{0.41}), \tfrac{0.41}{\pi} (1-\sin(\pi \tfrac{ \hat y }{0.41}))), $$
which corresponds to a bending of the flat geometry around the $x$-axis. 
This time, when projecting into a finite element space, we can not represent $\Phi_2$ exactly, hence the isometry property will only be fulfilled approximately. In Fig. \ref{fig:stlength_pres} we show the geometries, the vorticity at a fixes time ($t=26$) and plots that compare $\Delta p(t)$ in a small range of time ($t \in [26,26.2]$ on the coarsest mesh level $L_s=1$ with time level $L_t=2$.

We observe that there is hardly any difference between the results. The differences between the results for $\hat \Gamma$ and $\Gamma_1$ are only due to round-off errors and not visible even after heavily zooming in. For $\Gamma_2$ we also only observe differences in the $6$th digit.  
% R = 0.41/pi
% eta = -cos(pi*y/0.41)*R
% zeta = (1-sin(pi*y/0.41))*R
% \clnote{ }

\begin{figure}
  \vspace*{-0.25cm}
  \begin{center}
    \includegraphics[width=0.2\textwidth,angle=90]{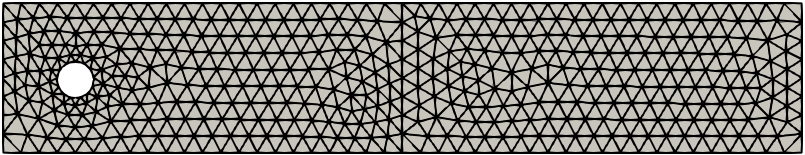}%
    \includegraphics[width=0.2\textwidth,angle=90]{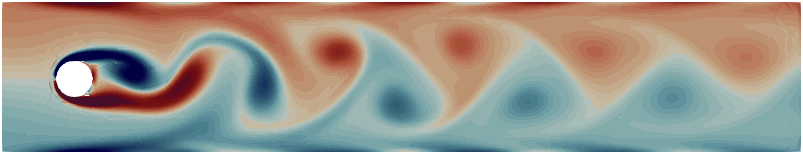}%
    \includegraphics[height=0.2\textwidth]{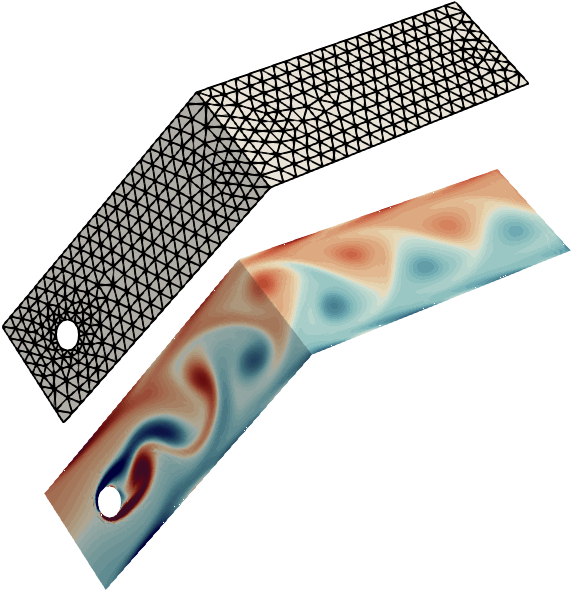}%
    % \includegraphics[height=0.15\textwidth]{graphics/schaefer_turek_kink_mesh.png}%
    % \hspace*{-0.08\textwidth}
    % \vspace*{-0.08\textwidth}
    % \includegraphics[height=0.15\textwidth]{graphics/schaefer_turek_kink_vorticity.png}%
    \hspace*{-0.08\textwidth}
    \includegraphics[height=0.08\textwidth]{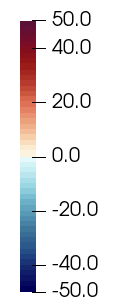}
    \hspace*{0.03\textwidth}
    \includegraphics[height=0.2\textwidth]{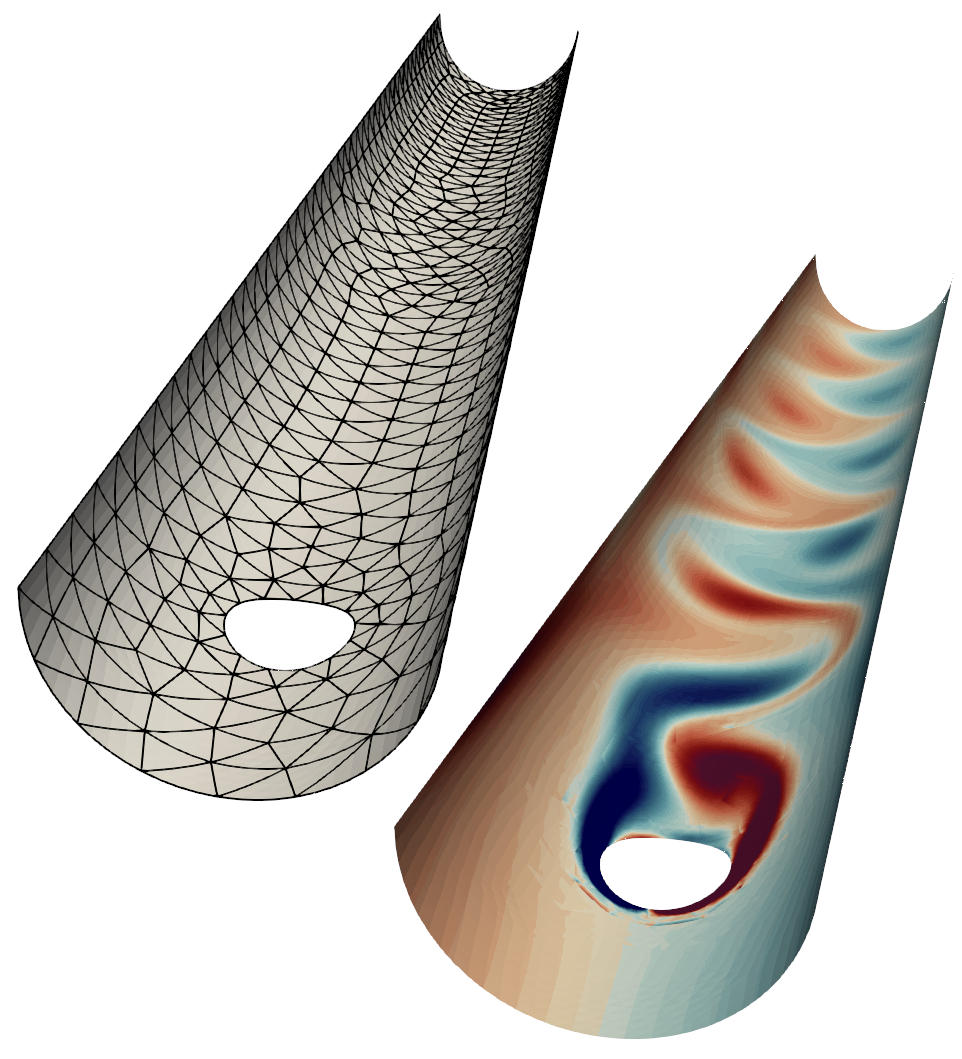}%
    \hspace*{-0.02\textwidth}
    \includegraphics[height=0.08\textwidth]{graphics/schaefer_turek_kink_vorticity_stokes_scale.png}
    \includegraphics[height=0.2\textwidth]{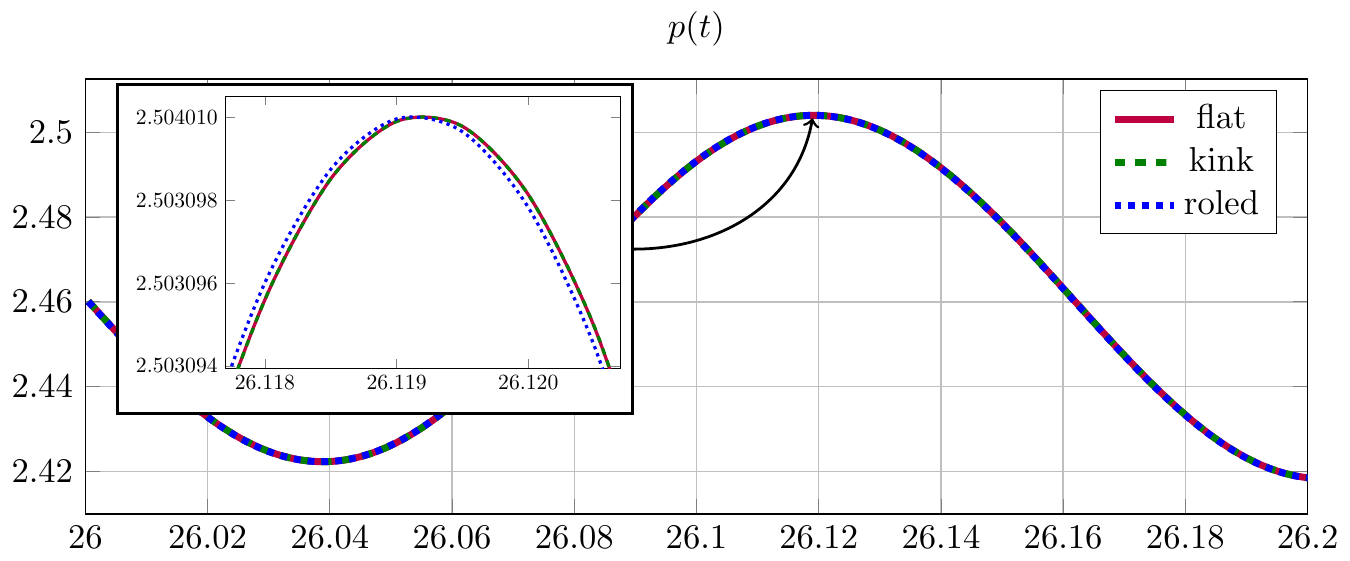}
  \end{center}
  \vspace*{-0.5cm}
  \caption{Meshes and vorticity for $\operatorname{id}$ (flat), $\Phi_1$(kink) and $\Phi_2$(roled) and pressure difference over time.}
  % \vspace*{-0.5cm}
  \label{fig:stlength_pres}
\end{figure}

\subsubsection{Non-isometric mapping} \label{sec:schaeferturek:nonisometric}
As another example we consider a non-isometric mapping, i.e. a flow that (not only discretely) deviates from the flat case. The example is one of two similar examples in Ref. \cite{MR3846120}. We note that we also tried the other case, but do not present it here due to the similarity in the results. The mapping is
\begin{align*}
  \Phi_3(\hat x, \hat y) &= \Big( \cos\left( \pi \tfrac{\hat x}{2.2}\right) \cdot f(\hat y), && \sin\left( \pi \tfrac{\hat x}{2.2}\right) \cdot f(\hat y), && 2 + 0.5 f(\hat y) - \sin(3 f(\hat y))\Big), \quad \text{ with } f(\hat y) = \hat y + 0.35. %\\
  % \Phi_4(\hat x, \hat y) &= \big( \cos(50/198 \cdot \pi \hat x) \cdot s(\hat x, \hat y), && \sin(50/198 \cdot \pi \hat x) \cdot s(\hat x, \hat y), && t(\hat x, \hat y) + 0.2 \sin( \hat x) \big), \qquad
\end{align*}
% \vspace*{-0.7cm}
% \begin{align*}
%  \text{ with } s(\hat x, \hat y) & = - (1+q(\hat x)) \cdot (b-0.205) \cdot \cos(\pi/6(1-25/11 \hat x)) \\
%  \text{ and } t(\hat x, \hat y) & = (1+q(\hat x)) \cdot (b-0.205) \cdot \sin(\pi/6(1-25/11 \hat x)).
% \end{align*}
%
% \clnote{ DRUCK plotten (kein direkter Vergleich). Vergleich von Druck-maxima/Minima und Frequenz im Text.}
%
\begin{figure}%[t]
\vspace*{-0.2cm}
  \begin{center}
    \begin{tabular}{cccc}
\includegraphics[height=0.125\textwidth]{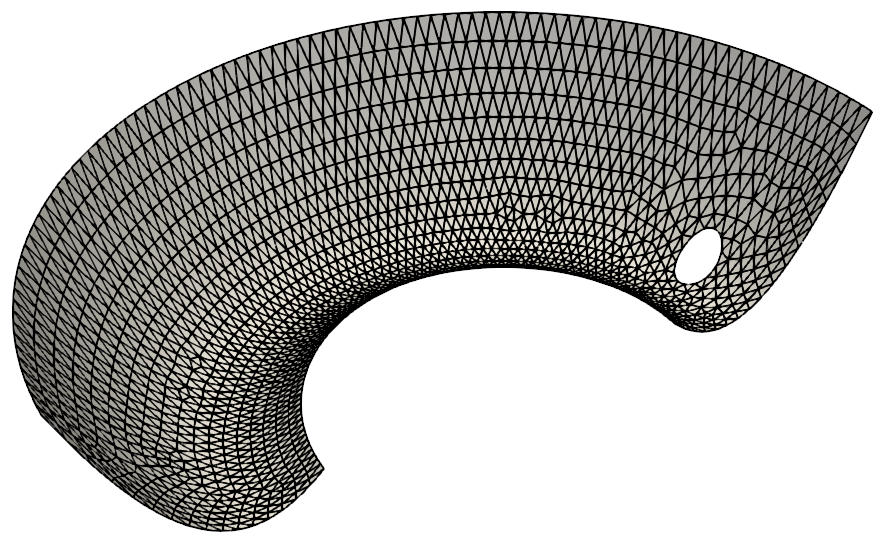} &% 
\includegraphics[height=0.125\textwidth]{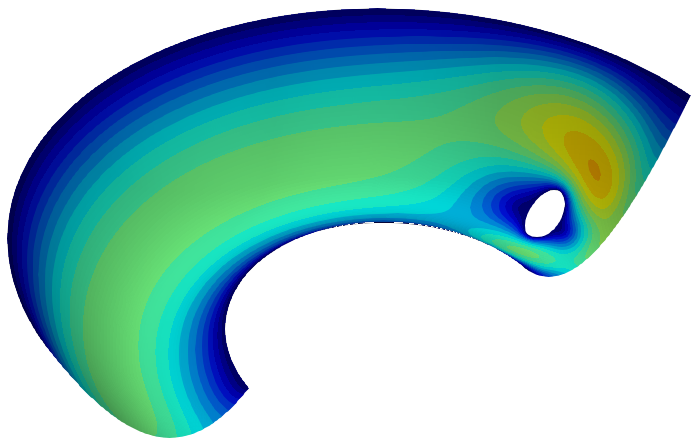}%
\includegraphics[height=0.1\textwidth]{graphics/schaefer_turek_roled_velocity_stokes_scale.png}&%
\includegraphics[height=0.125\textwidth]{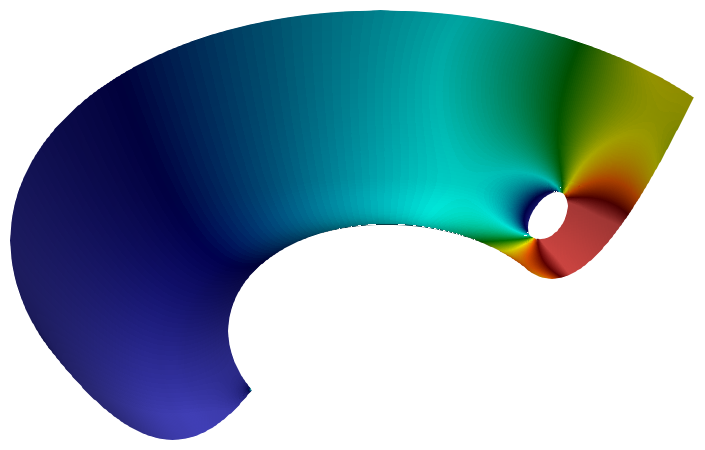}%
\includegraphics[height=0.1\textwidth]{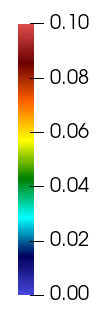}&%
\includegraphics[height=0.125\textwidth]{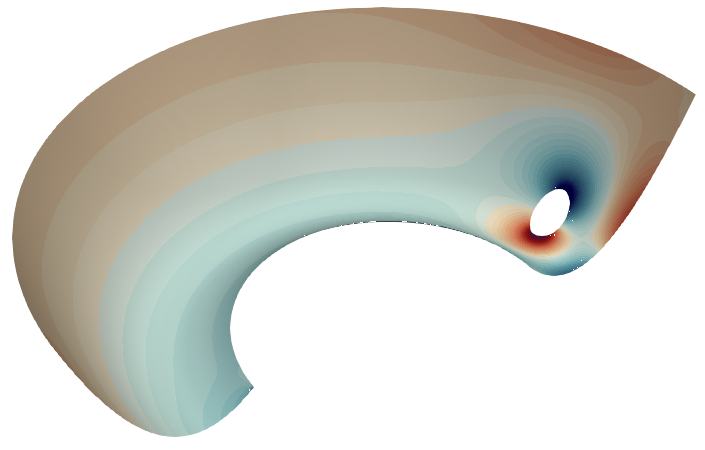}%
\includegraphics[height=0.1\textwidth]{graphics/schaefer_turek_roled_vorticity_stokes_scale.png} \\[-2ex]
&% 
\includegraphics[height=0.125\textwidth]{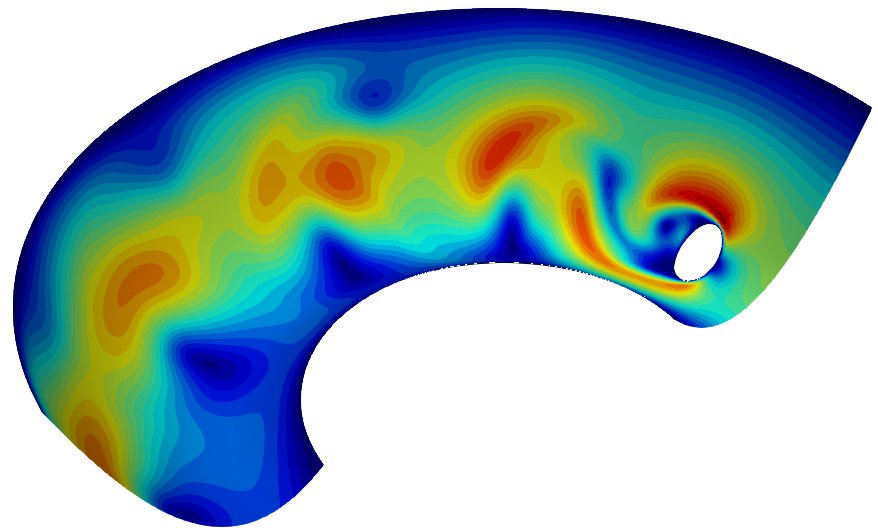}%
\includegraphics[height=0.1\textwidth]{graphics/schaefer_turek_roled_velocity_stokes_scale.png}&%
\includegraphics[height=0.125\textwidth]{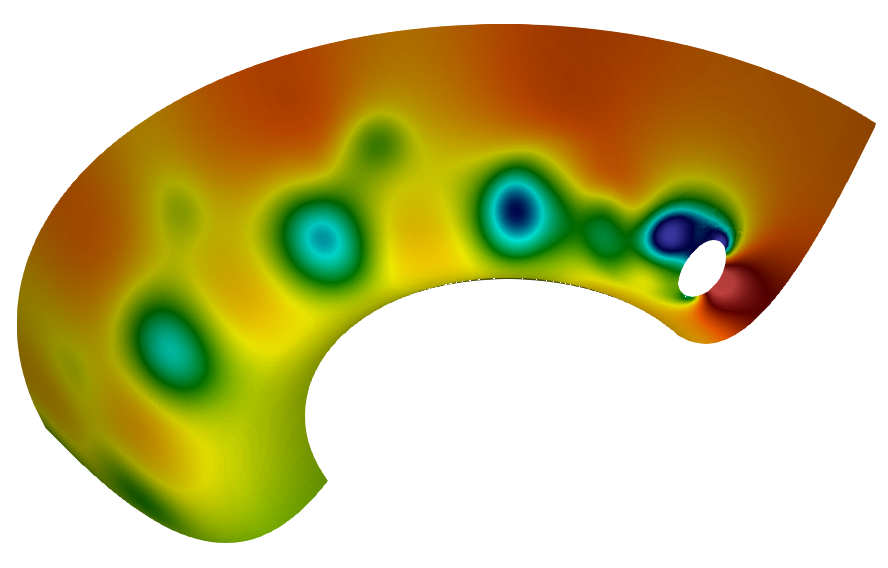}%
\includegraphics[height=0.1\textwidth]{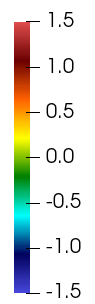}&%
\includegraphics[height=0.125\textwidth]{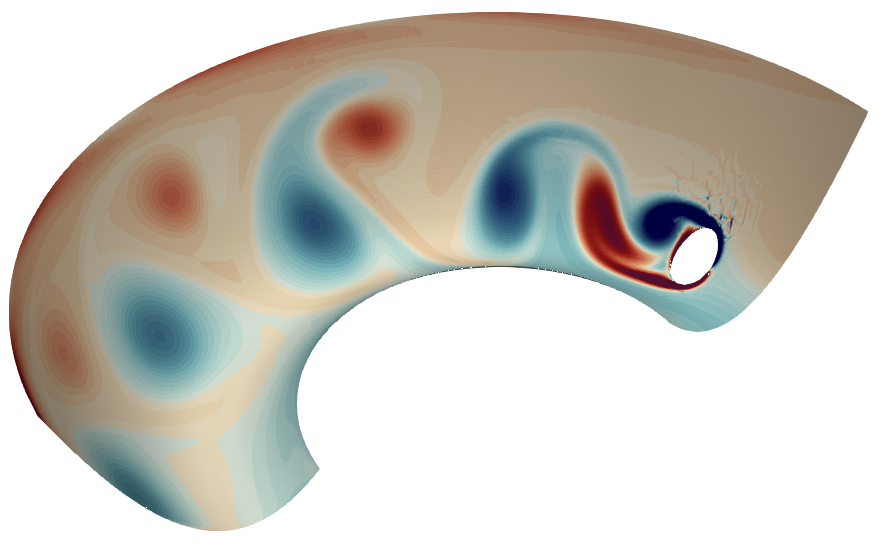}%
                                                                                                \includegraphics[height=0.1\textwidth]{graphics/schaefer_turek_roled_vorticity_stokes_scale.png} \\[-4ex]
      mesh & $\vert u_h \vert$ & $\vert p_h \vert$ & $\vert \omega_h \vert$
    \end{tabular}
  \end{center}
  \vspace*{-0.55cm}
  \caption{Geometry and surface Stokes (top row) and Navier--Stokes solution (bottom row) for mapping $\Phi_3$.}
% \vspace*{-0.5cm}
  \label{fig:phi3}
\end{figure}

\noindent 
In Fig.~\ref{fig:phi3} we display the solution of the surface Stokes and the surface Navier--Stokes problem. Already for the Stokes problem we see a significant deviation from the flat Stokes solution, especially in the pressure. This is partially due to a different inflow profile, but mainly due to a different length, width and shape of the channel. The vortex shedding behind the obstacle shows a qualitatively similar behavior to the flat case. However, the frequency and the forces acting on the obstacle are different. In the simulation, after 10 time units the vortex shedding is close to periodic. In Fig.~\ref{fig:phi3:press} we display the pressure difference for two periods ($t^\ast=0$ corresponds to a maximum of the pressure within each simulation) for different mesh and time levels.
On the finest resolution we obtain a minimum and maximum pressure difference of $2.07982$ and $2.69418$ which is in very good agreement with the findings in Ref.\cite{MR3846120}. Also the period length, which is approximately $0.456343 s$ (corresponds to a frequency of $2.191334$) is in very good agreement with the reference.

\begin{figure}[hb]
\vspace*{-0.4cm}
  \begin{center}
    \includegraphics[width=\textwidth]{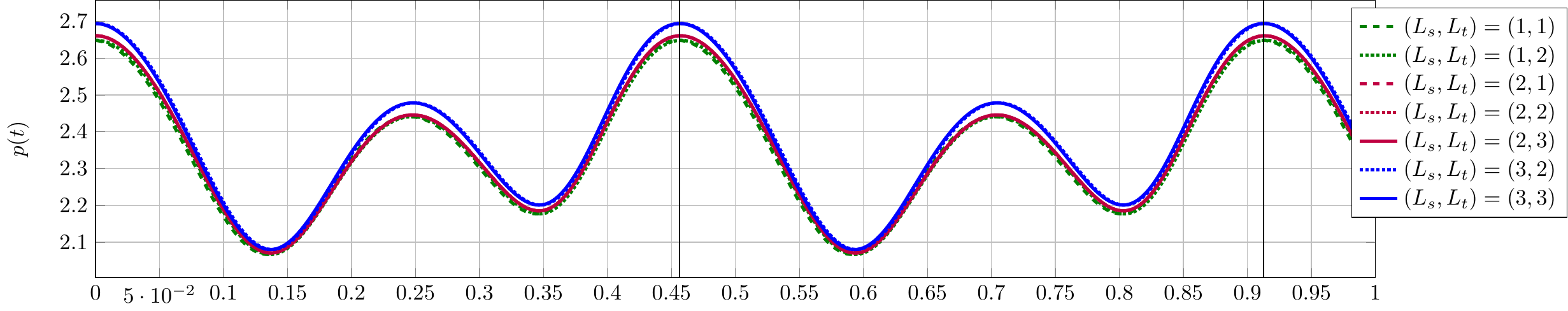}%
  \end{center}
\vspace*{-0.5cm}
  \caption{Evolution of pressure difference for mapping $\Phi_3$ for two periods (starting with a pressure maximum) for different spatial can temporal resolutions.}
\vspace*{-0.5cm}
  \label{fig:phi3:press}
\end{figure}

\subsection{The Kelvin-Helmholtz instability problem on surfaces} \label{sec::numex::KH}
In this section we consider the famous Kelvin-Helmholtz instability problem, cf. Ref. \cite{SJLLLS_CAMWA_2019} and the references therein, which is typically defined on the unit square (periodic in $x$-direction), i.e. in 2D. We generalize it to 2D surfaces in 3D. To this end we solve the unsteady Navier--Stokes equation. In contrast to the previous examples we do not rely on reference domains and mappings, but start directly from surface meshes that are obtained from the mesh generator.

\paragraph{The general setup}
In the following examples we consider geometries that are
rotational invariant around the $z$-axis and use a coordinate system $(\xi,\eta)$ on the surfaces with $\xi \in [-\tfrac12,\tfrac12)$ following the rotational direction and $\eta \in [-\tfrac12,\tfrac12)$ perpendicular. Note that $\xi$ and $\eta$ are not normalized, ie $\Vert \nabla_\Gamma \xi \Vert \neq 1$, $\Vert \nabla_\Gamma \eta \Vert \neq 1$. $e_{\xi} = \nabla_\Gamma \xi / \Vert \nabla_\Gamma \xi \Vert$ and $e_{\eta} = \nabla_\Gamma \eta / \Vert \nabla_\Gamma \eta \Vert$ denote the corresponding unit vectors (in the tangential plane of $\Gamma_h$) and $r = r(\eta)$ denotes the distance (in the ambient space) to the $z$-axis.
The Kelvin-Helmholtz instability is driven solely by its initial condition. These are taken as 
\begin{align}\label{eq:KH:ini_cond}
  u_0(\xi,\eta)
  & = H_s(\eta) \cdot u_\infty \cdot \tfrac{r}{R} \cdot  e_{\xi} + c_n \operatorname{curl}_\Gamma \psi, \nonumber \\
  % \text { with } \operatorname{curl}_\Gamma \psi & = c_n \Vert \nabla_\Gamma \eta \Vert^{-1} \partial_\eta \psi(\xi,\eta) \cdot e_{\xi}
  % - c_n \Vert \nabla_\Gamma \xi \Vert^{-1} \partial_\xi \psi(\xi,\eta) \cdot e_{\eta}   \\ 
  \text{ with }
H_s(\eta) & := \tanh \left( \tfrac{2\eta}{\delta_0} \right), ~
  \psi (\xi,\eta)
		:= u_\infty\exp \left(- \tfrac{\eta^2}{\delta_0^2} \right)
			\left( a_a \cos (m_a \pi \xi) + a_b \cos (m_b \pi \xi) \right),
\end{align}
the constants $a_a,~m_a,~a_b,~m_b,~c_n, \delta_0 \geq 0$ and $R$ the radius at $z=0$. This means that for $z > \delta_0$ ($H_s \approx 1,~\operatorname{curl}_\Gamma \psi \approx 0$) the velocity field corresponds to a rigid body rotation around the $z$-axis in positive $\xi$-direction whereas for $z < - \delta_0$ the velocity corresponds to a rigid body rotation in the opposite direction. The $\tanh$ term realizes a smooth transition of the velocity in the intermediate layer which is determined by $\delta_0$. The terms related to $\curl_\Gamma \psi$ are a perturbation with the purpose to trigger an instability in a deterministic way.
The setup of the problem leads to a number of vortices forming along the shear layer. These vortices eventually pair up to form fewer but larger vortices. In this setup we consider the following three usually investigated global quantities of interest (gradients and curls are to understood in a broken sense, i.e. element-wise):
\begin{align*}
  \text{Enstrophy:} &\hspace*{-1.5cm}& \mathcal{E}(t) &:= \frac{1}{2} \Vert  \omega_h(t) \Vert_{L^2(\Gamma_h)}^2  \hspace*{-1.5cm}& \text{with vorticity} \quad  \omega_h(t) &:=\operatorname{curl}_{\Gamma} u_h(t),
  \\
  \text{Kinetic energy:} &\hspace*{-1.5cm}& \mathcal{K}(t) &:= \frac{1}{2} \Vert  u_h(t) \Vert_{L^2(\Gamma_h)}^2,
\hspace*{-1.5cm}& \text{Palinstrophy:} \quad \mathcal{P}(t) &:= \frac{1}{2} \Vert  \nabla_{\Gamma} \omega_h(t) \Vert_{L^2(\Gamma_h)}^2.
\end{align*}
% Here, the surface gradients and surface curls are broken, i.e. they are to be understood in an element-wise fashion only.

\paragraph{Computational setup}
For the simulations discussed below we use -- unless stated otherwise -- the same configuration.
We fix the intermediate layer size $\delta_0 = 1/28$ fix $\nu = \delta_0/ Re$ with $Re=1000$ and set the perturbation constants to $a_a=1,~m_a=8,~a_b=1,~m_b=20$, $c_n=10^{-3}$. With the reference time $t_{\text{ref}} := \tfrac{u_\infty}{\delta_0}$ we introduce the scaled time $\bar{t} = t / t_{\text{ref}}$. For the computations we  consider a triangular unstructured mesh with characteristic mesh size $h=0.05$, $k_u=8$, $k_g=9$, use the second order IMEX time stepping method as before with $\Delta t = 10^{-3} t_{\text{ref}}$ and simulate until $T=200 t_{\text{ref}}$. For $\alpha$ in the SIP stabilization we take $10$ again. To initialize the velocity we use a Helmholtz projection in order to make sure that the discrete initial velocity is already exactly divergence-free, i.e. in \eqref{eq:nse:initial2} we consider $u_h,~v_h$ in the divergence-free subspace of $V_h^{k_u}$, i.e. $\{ v_h \in V_h^{k_u}: b(v_h,q_h) = 0 ~\forall q_h \in Q_h^{k_u-1}\}$.

\begin{figure}[hb]
\vspace*{-0.5cm}
  \includegraphics[height=0.2\textheight,trim=0.2cm 0.07cm 0.2cm 0.1cm, clip=true]{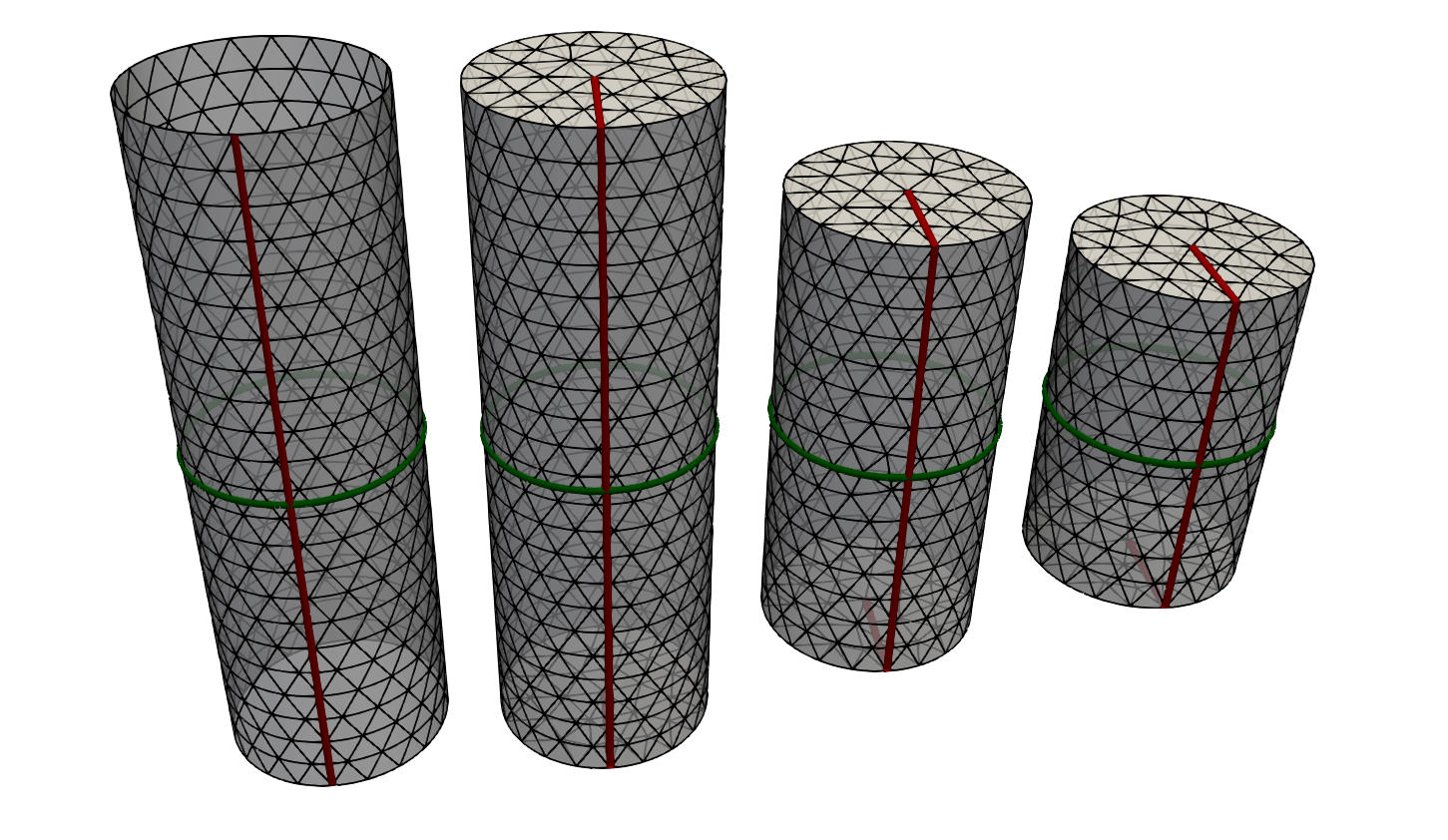}%
  \includegraphics[height=0.225\textheight]{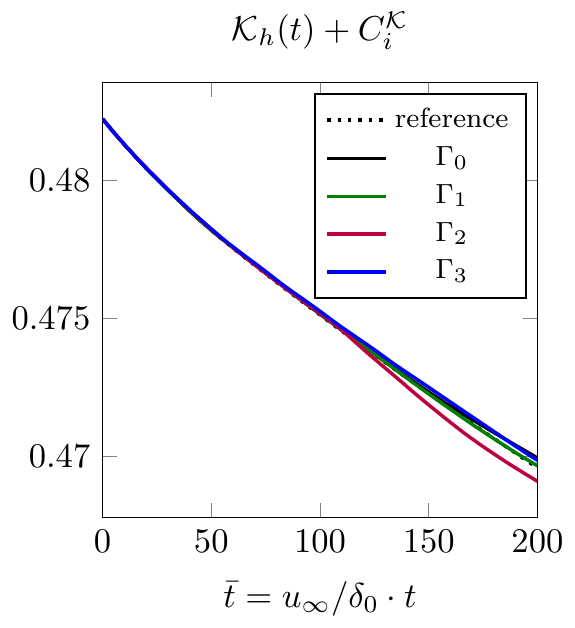}%
  \includegraphics[height=0.225\textheight]{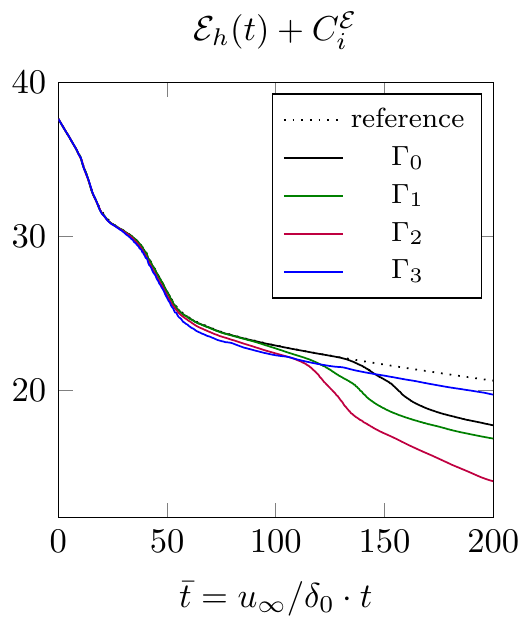}
\vspace*{-0.25cm}
  \caption{Geometries $\Gamma_i,~i=0,1,2,3$ and corresponding meshes (left). Red lines correspond to $\xi=0$, green lines correspond to $\eta=0$. Plots of decay of kinetic energy and enstrophy (right). For comparibility offsets $C_i^\mathcal{K}$, $C_i^\mathcal{E}$, $i=0,1,2,3$
    are added to compensate for differences in initial kinetic energy and initial enstrophy. With $C_0^{\mathcal{K}}=0$, $C_1^{\mathcal{K}}=-\frac{1}{8\pi}$, $C_2^{\mathcal{K}}=\frac{3}{8\pi}$, $C_3^{\mathcal{K}}=-\frac{1}{8\pi}+\frac{1}{4}$ and
$C_0^{\mathcal{E}}=0$, $C_1^{\mathcal{E}}=C_2^{\mathcal{E}}=C_3^{\mathcal{E}}=-4\pi$ the plotted quantities are equal (up to resolution differences) at time $t=0$.}
\vspace*{-0.5cm}
  \label{fig:energyenstrophy}
\end{figure}

\subsubsection{Piecewise smooth manifolds}
In this subsection we consider 4 similar cylindrical setups given by:
\begin{align*}
  \Gamma_0 &:= \{ x \in \rr^3 \mid \Vert (x,y) \Vert = R, |z| \leq 1/2\}, \\
  \Gamma_1 &:= \Gamma_0 \cup \{ x \in \rr^3 \mid \Vert (x,y) \Vert \leq R, \vert z \vert = 1/2 \}, \\
  \Gamma_2 &:= \{ x \in \rr^3 \mid \Vert (x,y) \Vert = R, |z| \leq 1/2-R \} \cup \{ x \in \rr^3 \mid \Vert (x,y) \Vert \leq R, \vert z \vert = 1/2-R \}, \\
  \Gamma_3 &:= \{ x \in \rr^3 \mid \Vert (x,y) \Vert = R, |z| \leq 1/4 \} \cup \{ x \in \rr^3 \mid \Vert (x,y) \Vert \leq R, \vert z \vert = 1/4 \}.
\end{align*}
The first setting, $\Gamma_0$ is an open cylinder of height $1$ with radius $R=(2\pi)^{-1}$, i.e. perimeter $1$ and we can isometrically map the unit square (periodic in $x$-direction) on $\Gamma_0$. On the boundary we prescribe free slip boundary condition. As the surface Navier--Stokes equations are invariant under isometric maps we know that the solution to the corresponding 2D Kelvin--Helmholtz problem is identical. We can hence compare our numerical solution on $\Gamma_0$ to the results in the literature\cite{SJLLLS_CAMWA_2019}.

The second configuration, $\Gamma_1$, is a closed cylinder with bottom and top added, i.e. without boundary. $\Gamma_2$ is similar to $\Gamma_1$ except for the decreased height of $1-2R$. Hence, the geodesics from the center of the top of the cylinder to the center of the bottom of the cylinder have length $1$. The last case, case $3$ considers an even shorter closed cylinder with height $\tfrac12$. In Fig.~\ref{fig:energyenstrophy} the geometries and used meshes are sketched alongside with the decay of energy and enstrophy over time whereas in Fig. \ref{fig:palinstrophy} we plot the palinstrophy alongside a few sketches of the vorticity at selected times.

\begin{figure}%[ht]
\vspace*{0.25cm}
  \includegraphics[width=\textwidth]{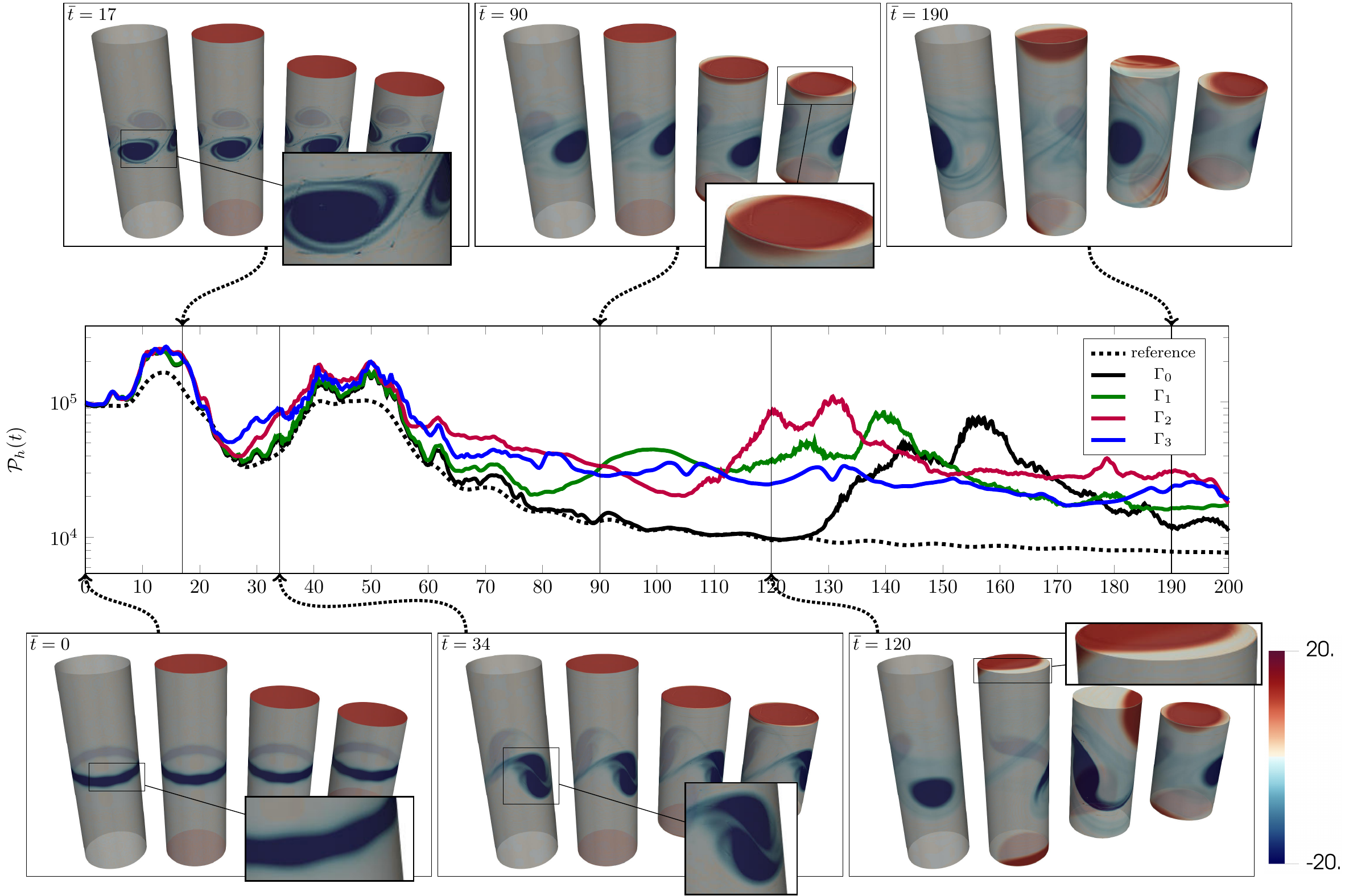}%
%\vspace*{-0.25cm}
  \caption{Evolution of $\mathcal{P}_h$ for $\Gamma_i,~i=0,1,2,3$ and the 2D reference solution with vorticity snapshots.}
\vspace*{0.25cm}
  \label{fig:palinstrophy}
\end{figure}

\vspace*{0.5cm}
\paragraph{Discussion of results}
We observe that the energy dissipation is similar for all cases (up to a constant shift due to different initial kinetic energies). Especially, the kinetic energy is monotonely decreasing. The same holds for the enstrophy for $t \leq 100$. For later times the results deviate significantly. The reason for those deviations can be best explained by the palinstrophy evolution in Fig. \ref{fig:palinstrophy}. Until $\bar t = 100$ the simulations of the four cases agree very much, at least qualitatively. Initially four vortices form which eventually merge to two vortices around $\bar t \in (30,60)$. Until that point of the evolution the rigid body rotations in the most upper and most lower part are essentially not influenced by the interactions in the center. This changes at around $\bar t = 90$ where the rotations of the top and bottom are perturbed for $\Gamma_2$ and $\Gamma_3$ and somewhat later (around $\bar t = 120$) also for $\Gamma_1$. For $\Gamma_1$ and $\Gamma_2$ this results in the merging of the latest two vortices to one vortex which is also reflected by an increase in the palinstrophy. For $\Gamma_3$ the perturbation of the rotations seems to stay confined and the decreased height seems to surpress the interaction of the vortices so that even at $\bar t = 200$ the latest two vortices did not merge yet.

Let us note that we also observe a deviation of the evolution computed for $\Gamma_0$ and the reference solution, especially after $\bar t = 130$ which is in agreement with the extreme sensitivity of the problem to (numerical) perturbations observed in Ref.\cite{SJLLLS_CAMWA_2019}. The final merge is typically observed sooner the higher the perturbations in the simulation are. In contrast to the 2D reference solution we consider a much coarser, not structured and not symmetric mesh with an additional geometry error due to the curved representation.

% \clnote{}

\subsubsection{Smooth manifold: the sphere}
Now, we consider the Kelvin-Helmholtz problem on a smooth manifold, the unit sphere, $\Gamma_4 = S_{1}(0)$. The upper and lower halfs rotate in different directions with a perturbation similar as before in \eqref{eq:KH:ini_cond}. However, we change the perturbation magnitude to $c_n = 2 \pi \cdot 10^{-3}$ (due to the increase of the equador length from $1$ to $2 \pi$) and choose a different initial perturbation mode with $m_a = 16$, $a_a = 1$, $m_b = 20$ and $a_b = 0.1$.

We use $k_u = 5$, $k_g=6$ on a unstructured mesh consisting of $5442$ triangles. A few snapshots of the flow and the evolution of the palinstrophy are depicted in Fig. \ref{fig:palinstrophy_sphere}. We observe that initially eight vortices form from the initial perturbation at around $\bar t = 50$. Diffusion takes its time until it drives the interaction of two neighboring vortices which pair up to 4 vortices at around $\bar t = 200$. These vortices take even longer to eventually pair up to two vortices at around $\bar t = 500$. The two vortices are positioned on opposite sides rotating in opposite directions. The overall evolution of the palinstrophy and the times where vortices pair up is very similar to those from the cylindrical setup in the previous section. However, due to the larger length scales the interaction between the vortices takes more time.

\begin{figure}[hb!]
 \vspace*{0.25cm}
  \includegraphics[width=\textwidth]{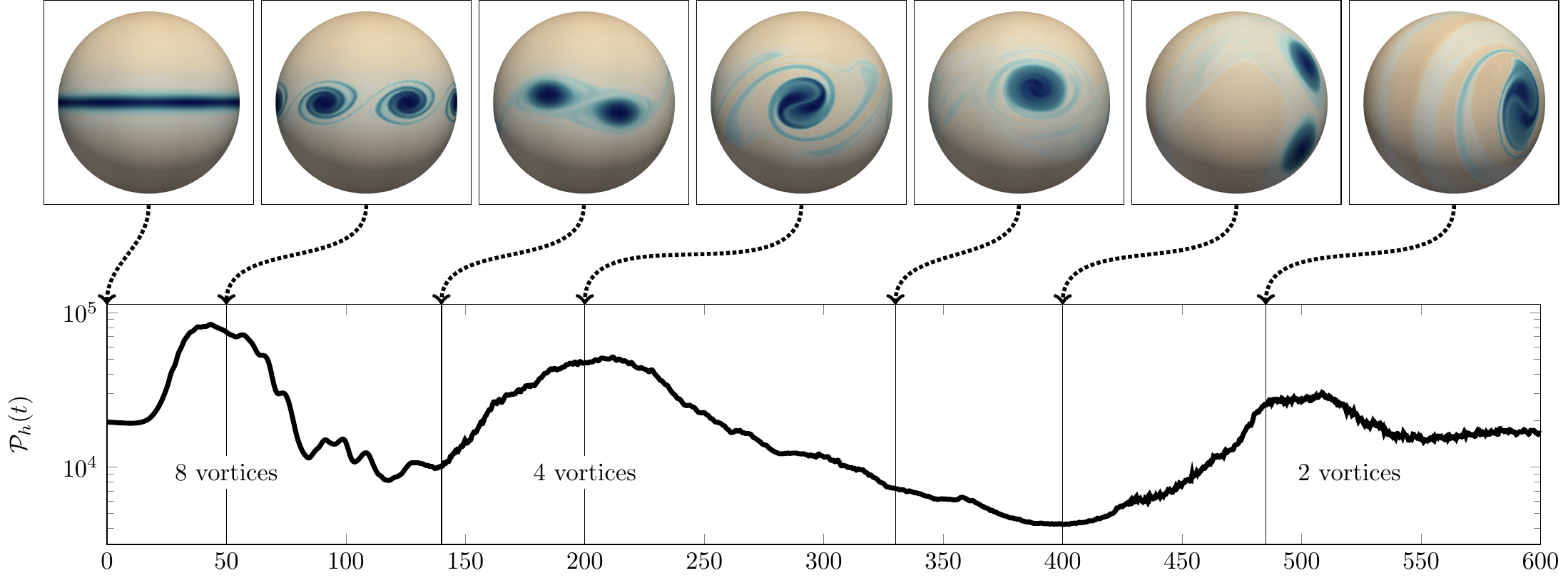}%
% \vspace*{-0.25cm}
  \caption{Evolution of $\mathcal{P}_h$ for the Kelvin-Helmholtz problem on the sphere $\bar t \in \{0,50,140,200,330,400,485\}$.}
% \vspace*{-0.5cm}
  \label{fig:palinstrophy_sphere}
\end{figure}

\subsection{Stanford bunny} \label{sec::bunny}
As a final example we want to demonstrate that the aforementioned methods can be applied on essentially arbitrary geometries. As a prototype of a complex geometry we take the famous Stanford bunny which fits approximately in a bounding box of size $80\times 80 \times 60$ and initially put 60 vortices on the surface.
Let us note that in Reuther and Voigt\cite{RV15} this geometry has also been considered (with different initial data) with the aim to investigate defect configurations which is not the aim here. 
The initial velocity is taken as the surface curl of
$$
\psi(x) = 20 \cdot \sum_{i=1}^{60} (-1)^i \exp(-\tfrac{1}{20} \Vert x - x_i \Vert^2) 
$$
where the $x_i,~i=1,..,60$ are some randomly located but sufficiently separated ($\Vert x_i - x_j \Vert_2 \geq 3.5,~i\neq j$) vertices of the mesh. For the viscosity we choose $\nu = \tfrac{1}{50}$. The flow is again only driven by its initial condition. Pictures of the numerical solution on a mesh with 6054 triangles, $k_u=k_g=4$, time step size $\Delta t = \tfrac{1}{200}$ and $\alpha = 80$ are shown in Fig. \ref{fig:stanford_bunny}. 

We observe that the flow undergoes a process of self-organization as it is well-known from 2D flows. At $t=T=500$ the vortices merged successively into two remaining vortices: One around the ears of the Standford bunny, one on the breast. In contrast to the studies in Reuther and Voigt\cite{RV15}, the flow is not yet in a quasi stationary state at $t=500$, i.e. these two vortices will continue moving and possibly merge while the magnitude of the velocity will decay over time.

\begin{figure}[h]
% \vspace*{-0.5cm}
  \begin{center}
  \includegraphics[width=0.31\textwidth,trim= 3.25cm 0cm 3.25cm 0cm, clip=True]{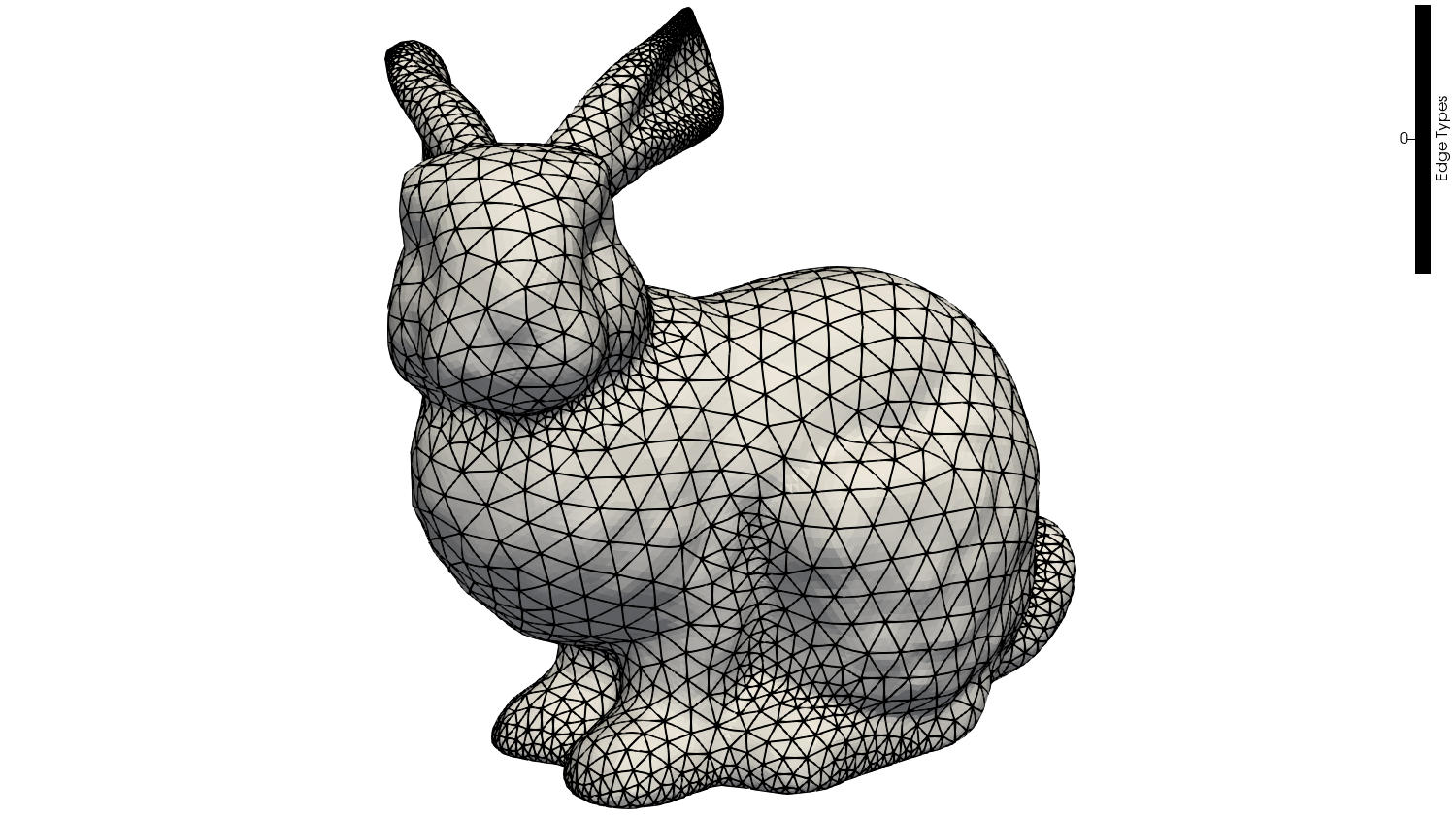}%
  \includegraphics[width=0.31\textwidth,trim= 3.25cm 0cm 3.25cm 0cm, clip=True]{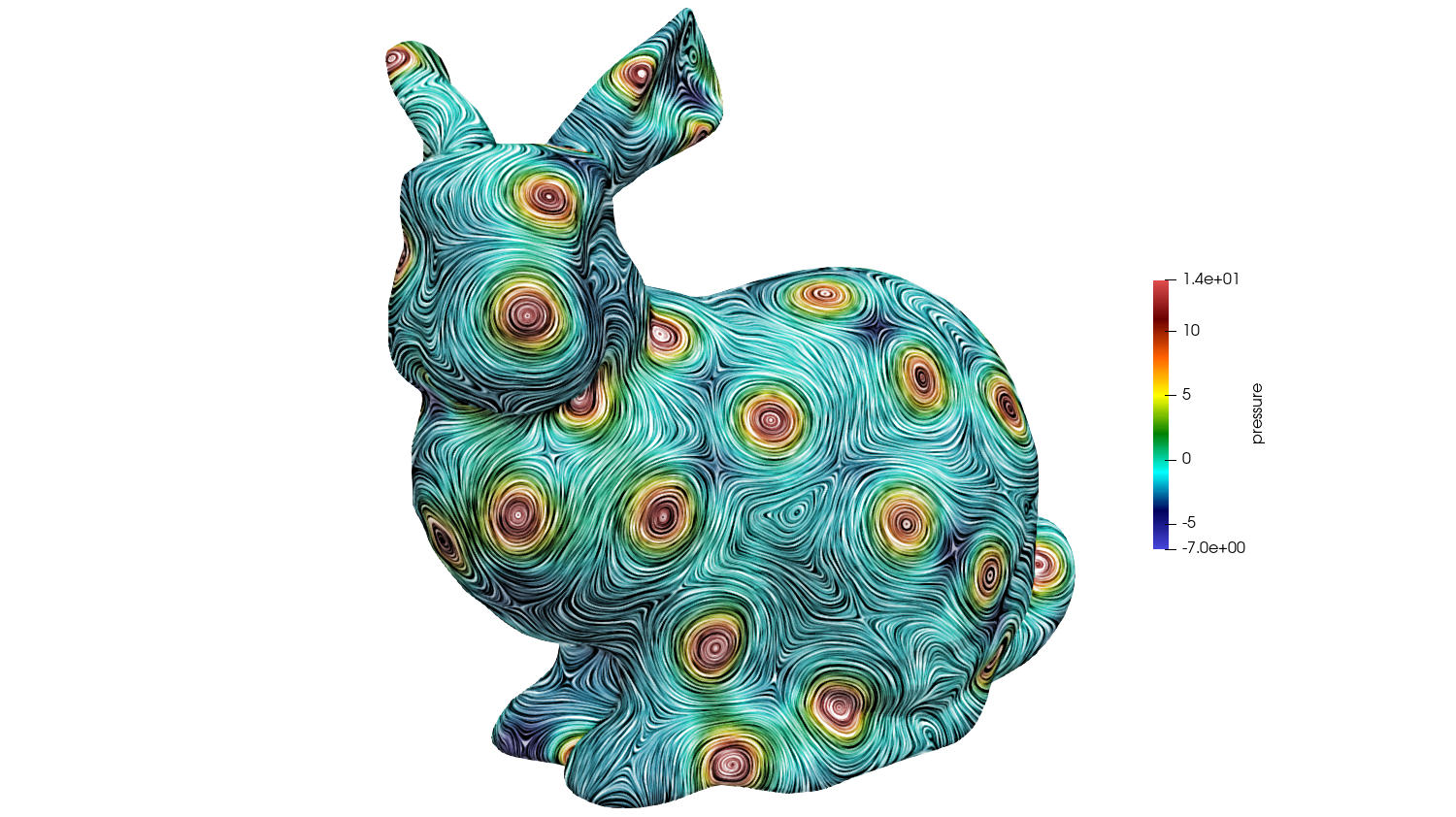}%
  \includegraphics[width=0.31\textwidth,trim= 3.25cm 0cm 3.25cm 0cm, clip=True]{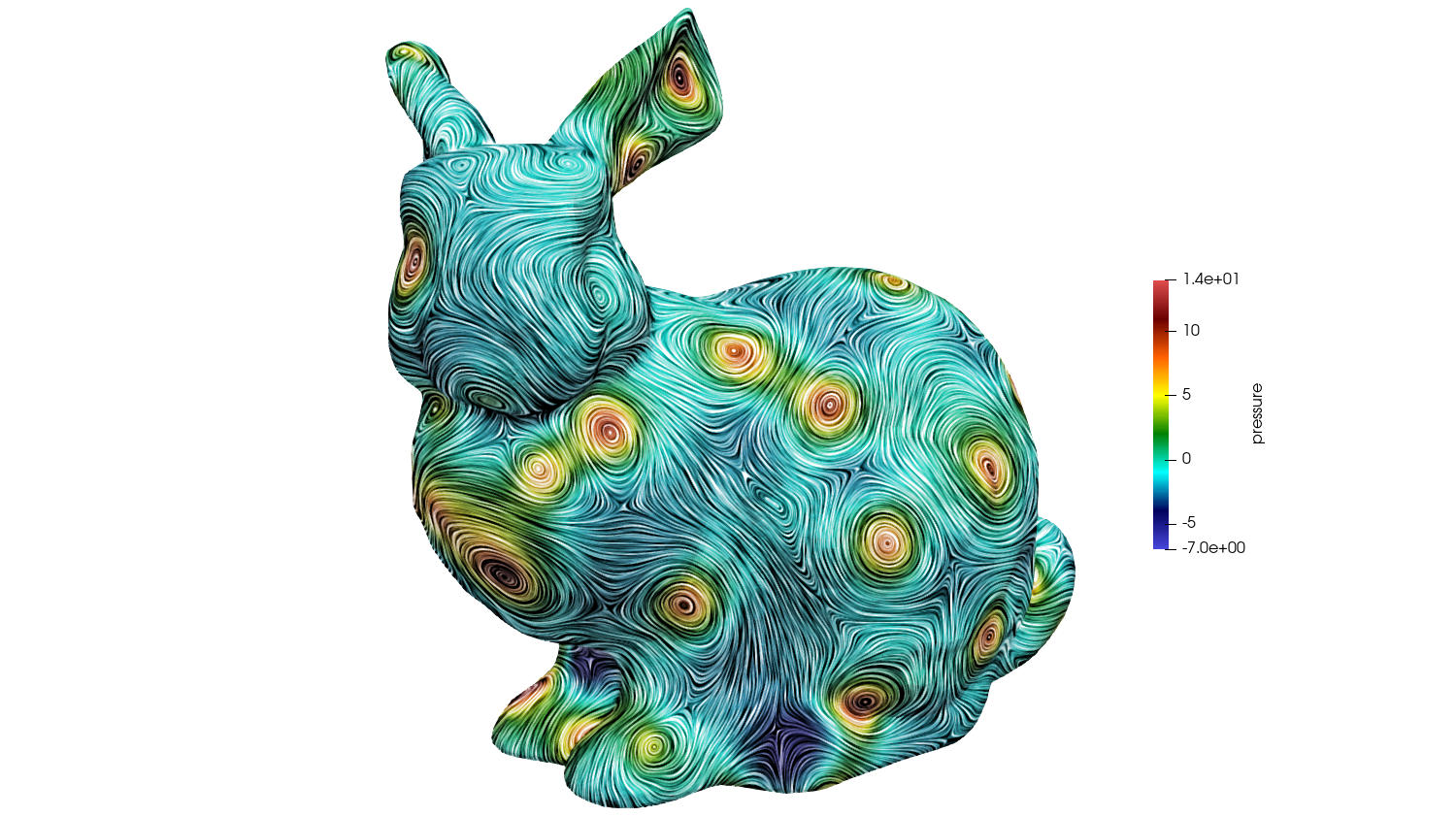}%
  \begin{minipage}{0.04\textwidth}
    \vspace*{-6.5cm}
    \hspace*{0.2cm}
    \includegraphics[width=\textwidth]{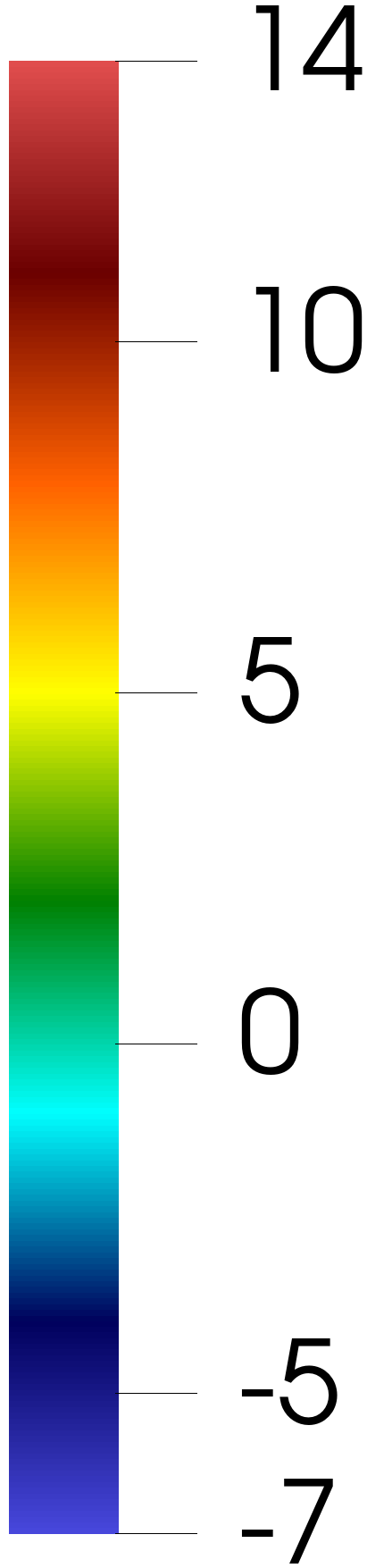}
  \end{minipage}
  \\ %[-2ex]%
  \includegraphics[width=0.25\textwidth,trim= 3.25cm 0cm 3.25cm 0cm, clip=True]{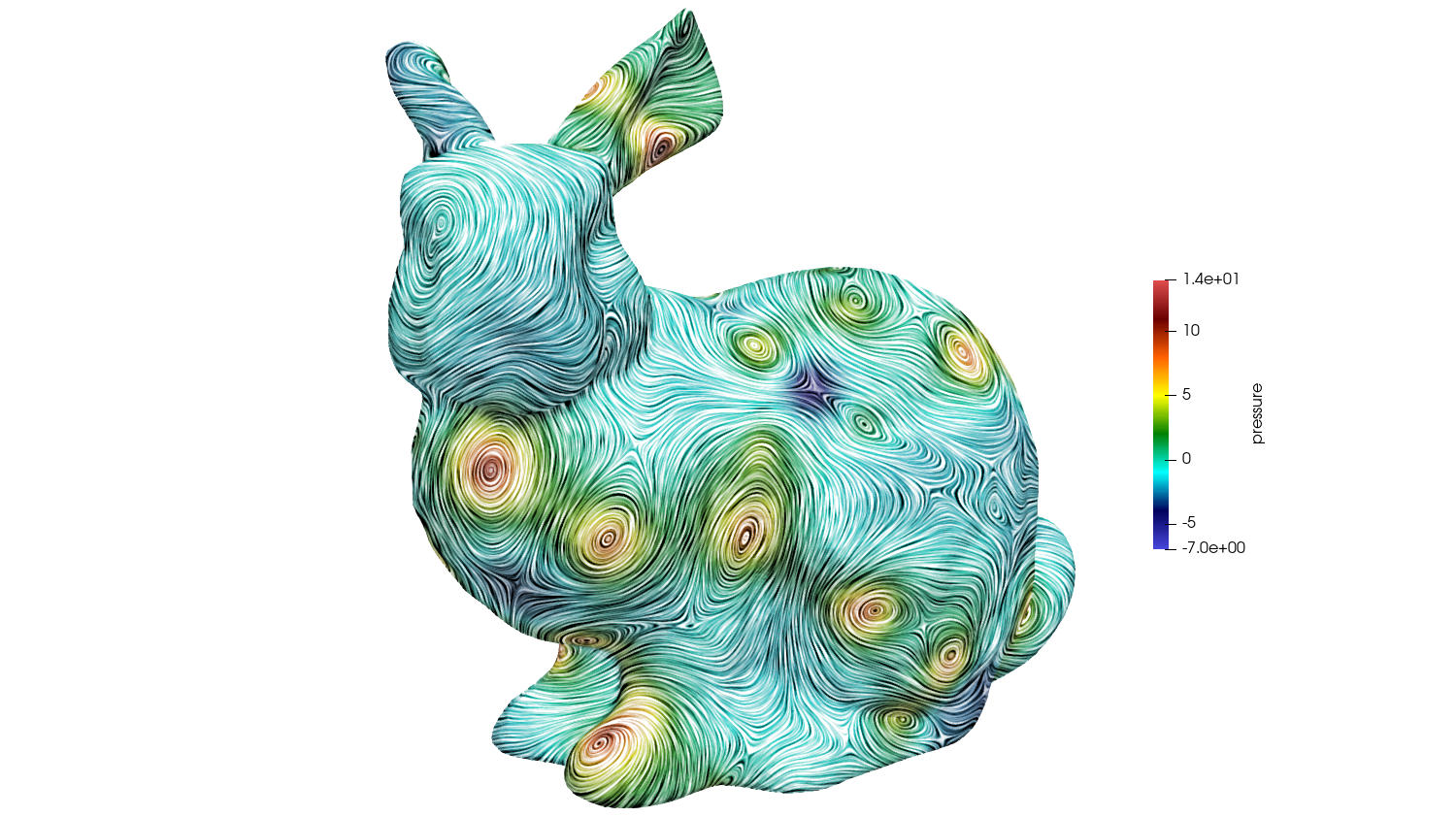}%
  \includegraphics[width=0.25\textwidth,trim= 3.25cm 0cm 3.25cm 0cm, clip=True]{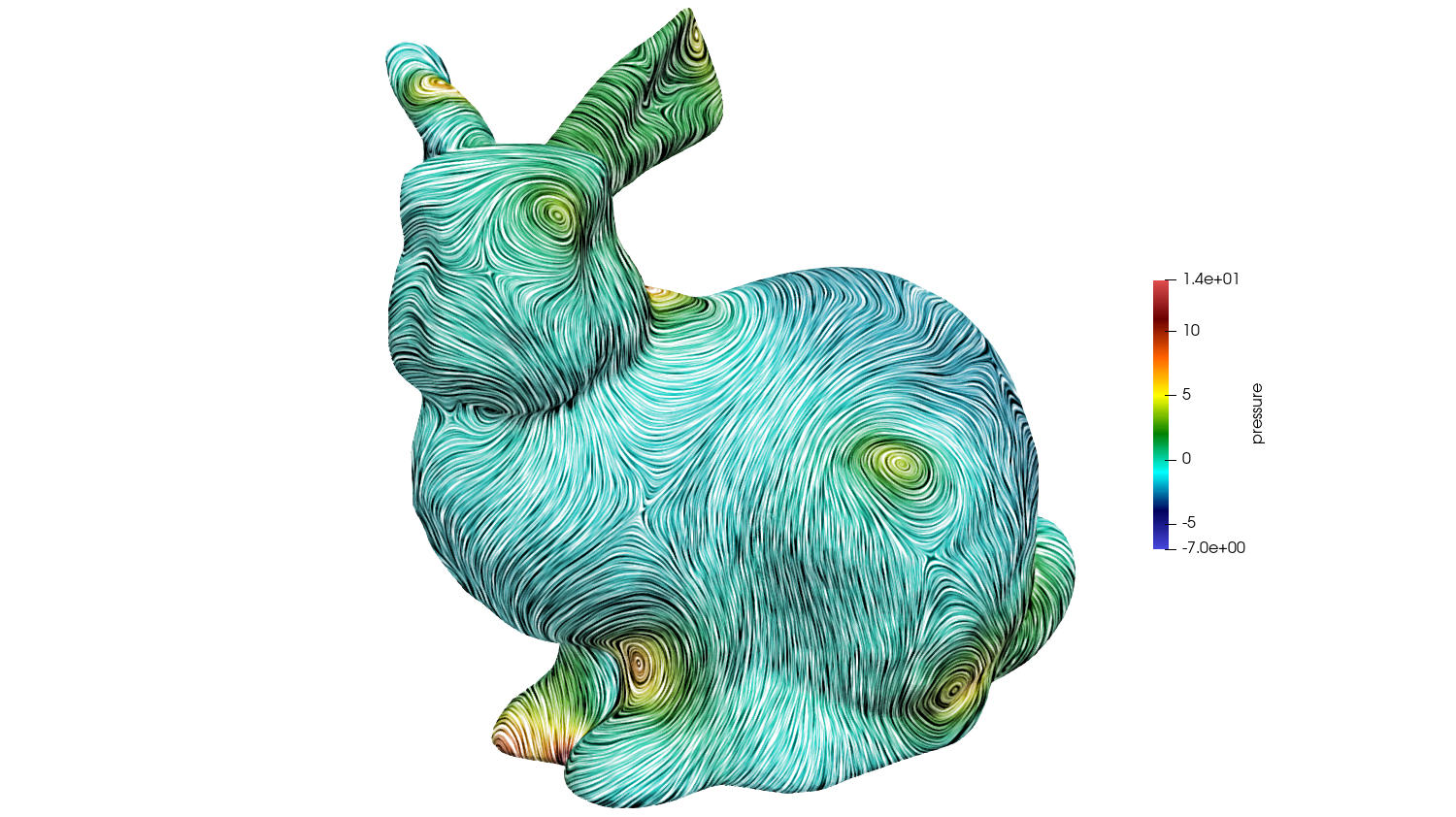}%
  \includegraphics[width=0.25\textwidth,trim= 3.25cm 0cm 3.25cm 0cm, clip=True]{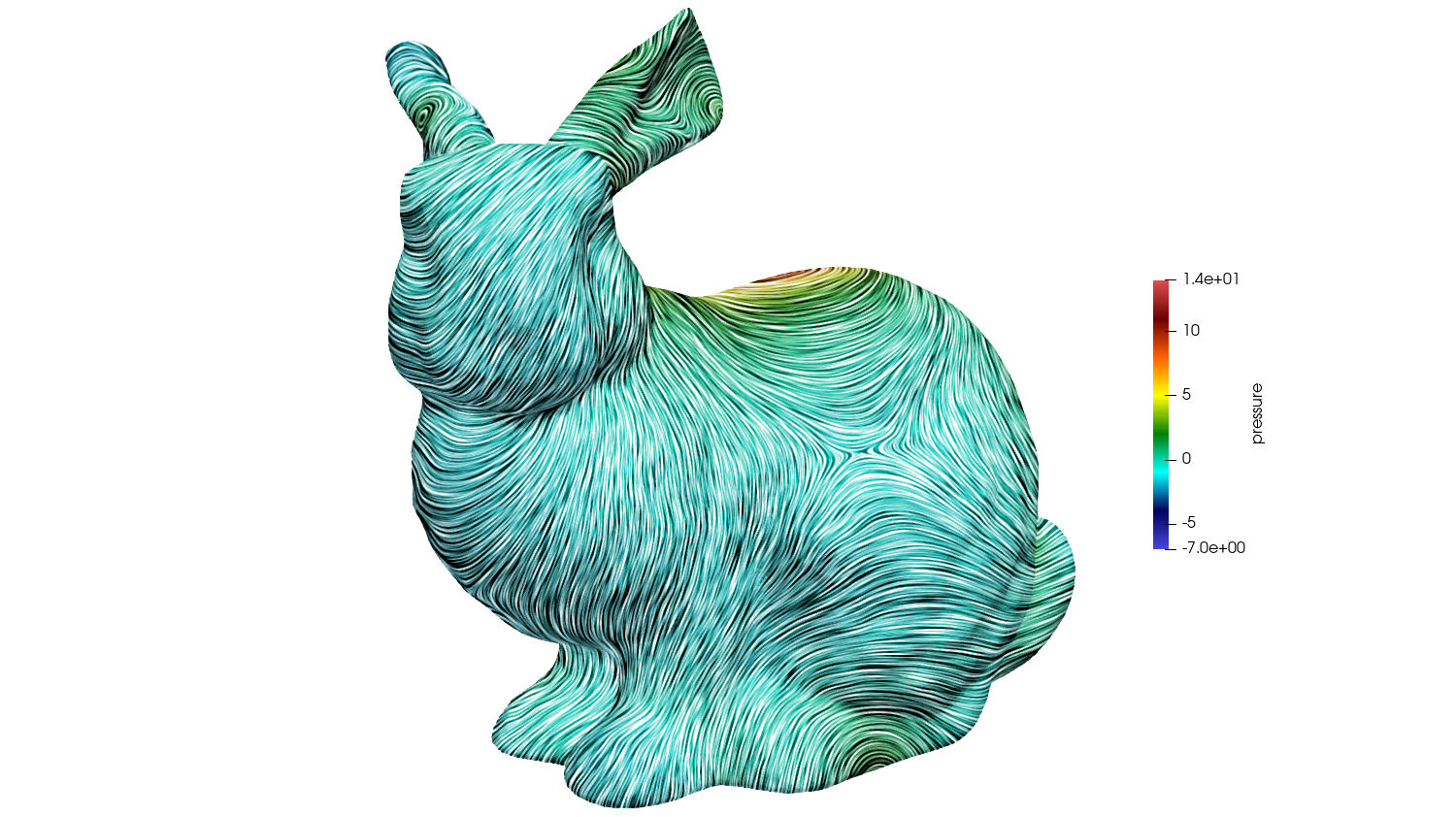}%
  \includegraphics[width=0.25\textwidth,trim= 3.25cm 0cm 3.25cm 0cm, clip=True]{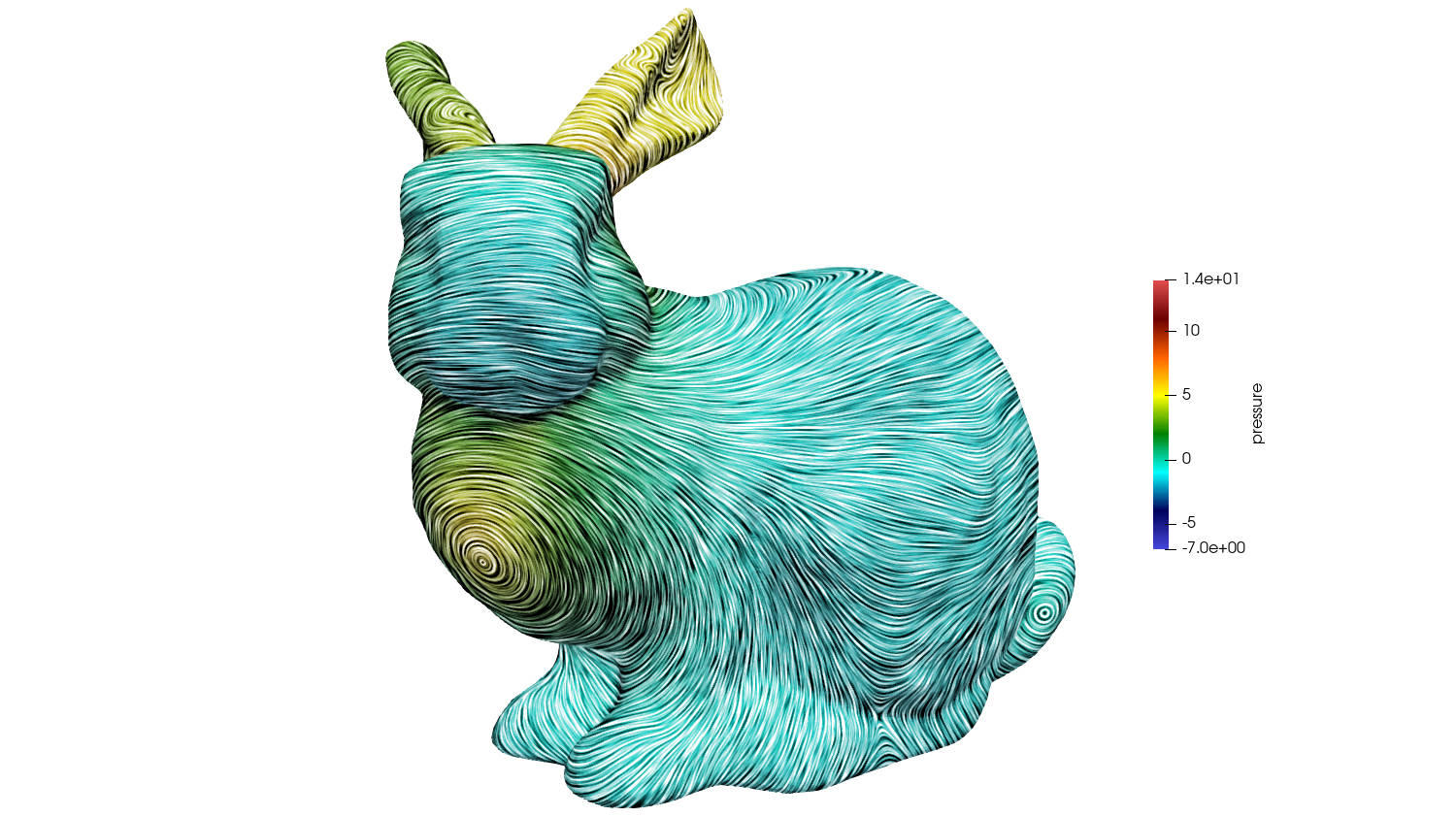}%
\end{center}
\vspace*{-0.5cm}
\caption{Geometry and mesh of the Stanford bunny (upper left picture) and velocity streamlines with pressure coloring for $t\in\{4,32,60,120,250,500\}$.}
\vspace*{-0.5cm}
  \label{fig:stanford_bunny}
\end{figure}

% \subsection*{Code Availability}
% \vskip .5\baselineskip
% \noindent Here we provide both the scripts used to compute the respective benchmarks and instructions and links to the relevant external software used.

\section{Conclusion and outlook}
In this work we introduced new numerical methods for the discretization of incompressible flows and vector valued elliptic problems on two-dimensional manifolds. Abandoning the $H^1$-conformity (originally demanded by the considered problems) we applied the Piola transformation to construct finite elements which are exactly tangential. Based on these findings we presented non-($H^1(\Gamma)$-)conforming (hybrid) discontinuous Galerkin discretizations which performed extremely well in several considered numerical examples. Among other benefits, it was shown that the resulting methods can outperform $H^1$-conforming discretizations in the aspect of computational costs, and that they can deal with piecewise smooth manifolds which is a unique property so far.

For incompressible flow problems the (hybrid) discontinuous Galerkin methods were modified to be $H(\divergence_{\Gamma})$-conforming. The resulting methods provide important properties such as exactly (surface) divergence-free velocity fields, energy stability and pressure robustness. We studied certain numerical examples showing that the methods are highly accurate and applicable to deal with complex geometries. In particular we also showed a robustness of the new methods with respect to isometric mappings.

The extension of the discretizations to non stationary (evolving) manifolds is left for future research. Further, due to the increasing interest on foldable LC displays and flows on cell membranes, the coupling of the presented discretizations with other PDEs on surfaces is an interesting task for the future.

\subsection*{Acknowledgements}

Philip L. Lederer has been funded by the Austrian Science Fund (FWF) through the research programm ``Taming complexity in partial differential systems'' (F65) - project ``Automated discretization in multiphysics'' (P10).

Furthermore we would like to acknowledge the Stanford University Computer Graphics Laboratory for providing the 3D-model of the Stanford bunny.

\bibliography{literature}

% \appendix

\end{document}